\documentclass[a4paper,11pt]{article}
\usepackage{jheppub} 
\usepackage{lineno}
\usepackage[dvipsnames]{xcolor} 
\usepackage{dirtytalk}
\usepackage{ytableau}
\usepackage[dvipsnames]{xcolor}
\usepackage{dsfont}
\usepackage{mathdots}
\usepackage{colortbl}
\usepackage{braket}
\usepackage{amsthm}
\usepackage{amsmath}
\usepackage{float}
\usepackage{bbm, dsfont}
\usepackage[normalem]{ulem}
\usepackage{slashed}
\usepackage{mathtools}
\usepackage{multirow}
\usepackage{cancel}
\usepackage{cleveref}
\usepackage{orcidlink}
\usepackage{enumerate}
\usepackage{fontawesome}
\usepackage{adjustbox}
\usepackage{tcolorbox}
\usepackage{relsize}
\usepackage{csquotes}

\usepackage{tikz}
    \usetikzlibrary{positioning}
    \usetikzlibrary{arrows}
    \usetikzlibrary{shapes}
    \usepackage{pgfplots}
    \usetikzlibrary{calc}
    \usetikzlibrary{decorations.markings}
     \usetikzlibrary{decorations.pathmorphing}
    \tikzset{snake it/.style={decorate, decoration=snake}}

    \usetikzlibrary{matrix}

    \pgfplotsset{compat=1.18}

\definecolor{ourblue}{RGB}{0, 57, 120} %
\colorlet{ourlightblue}{ourblue!50!white}
\newcommand{\bs}[1]{\boldsymbol{#1}}
\newcommand{\eps}{\varepsilon}
\newcommand{\rd}{\mathrm{d}}

\newcommand{\bx}{\bs{x}}

\newcommand{\Phit}{\tilde{\Phi}}

\newcommand{\bz}{\bs{z}}
\newcommand{\bA}{\bs{A}}

\newcommand{\bI}{\bs{I}}
\newcommand{\bJ}{\bs{J}}
\newcommand{\bP}{\bs{P}}
\newcommand{\bT}{\bs{T}}

\newcommand{\bO}{\bs{O}}

\newcommand{\bH}{\bs{H}}
\newcommand{\bC}{\bs{C}}
\newcommand{\bK}{\bs{K}}
\newcommand{\bX}{\bs{X}}
\newcommand{\bM}{\bs{M}}
\newcommand{\bN}{\bs{N}}
\newcommand{\bR}{\bs{R}}
\newcommand{\bU}{\bs{U}}

\newcommand{\Np}{\text{NT}^{{\Delta}+}}
\newcommand{\NT}{\text{NT}^{{\Delta}}}
\newcommand{\Nm}{\text{NT}^{{\Delta}-}}
\newcommand{\UT}{\mathrm{UT}^{{\Delta}}}
\newcommand{\OT}{\text{OT}^{{\Delta}}}
\newcommand{\SUT}{\text{ST}^{{\Delta}}}
\newcommand{\bDelta}{\bs{\Delta}}

\newcommand{\bOmega}{\bs{\Omega}}

\newcommand{\lb}{{\scriptscriptstyle (\overline{3F2})}}
\newcommand{\lbf}{{\scriptscriptstyle (4F3)}}
\newcommand{\lbt}{{\scriptscriptstyle (3F2)}}
\newcommand{\bUt}{\bs{U}_{\!\mathrm{t}}}
\newcommand{\bUss}{\bs{U}_{\!\mathrm{ss}}}
\newcommand{\bUeps}{\bs{U}_{\!\eps}}
\newcommand{\cFss}{\mathcal{F}_{\!\mathrm{ss}}}
\newcommand{\cFeps}{\mathcal{F}_{\!\eps}}
\def\beq{\begin{equation}}
\def\eeq{\end{equation}}
\def\bsp#1\esp{\begin{split}#1\end{split}}
\newcommand{\cF}{\mathcal{F}}
\newcommand{\Gpar}{G_{\mathrm{par}}}
\newcommand{\bD}{\bs{D}}
\newtheorem{lemma}{Lemma}
\newtheorem{observation}{Observation}
\newtheorem{proposition}{Proposition}


\newcommand{\lodd}{{\scriptscriptstyle(\mathrm{odd})}}
\newcommand{\leven}{{\scriptscriptstyle(\mathrm{even})}}

\title{\boldmath Draft Template}







\author{Claude Duhr${}^1$, Sara Maggio${}^1$, Franziska Porkert${}^1$, Cathrin Semper${}^1$, Yoann Sohnle${}^2$, Sven F. Stawinski${}^1$}
\affiliation{
\vskip 0.5 em
${}^1$Bethe Center for Theoretical Physics, Universit\"at Bonn, D-53115, Germany\\
${}^2${\it Department of Physics and Astronomy, Uppsala University, Box 516, 75120 Uppsala, Sweden}
\vskip 0.5 em}

\emailAdd{cduhr@uni-bonn.de, smaggio@uni-bonn.de, fporkert@uni-bonn.de, csemper@uni-bonn.de, yoann.sohnle@physics.uu.se,
sstawins@uni-bonn.de}

\title{Canonical differential equations and intersection matrices}

\abstract{
Differential equations are one of the main approaches to evaluate multi-loop Feynman integrals. 
The 
construction of a canonical or $\eps$-factorised basis for multi-loop integrals remains a key bottleneck for this approach, especially when the differential equation involves non dlog-forms. Recently, several methods have been proposed to find $\eps$-factorised differential equations. Many of them  introduce new functions that are themselves defined as iterated integrals. If and when these iterated integrals can be explicitly evaluated in terms of other classes of functions remains an open problem. In this paper we elaborate on the recent proposal that one can use the fact that the intersection matrix computed in a canonical basis can be used to derive polynomial relations between these iterated integrals. On the one hand, we discuss properties of the canonical intersection matrix, in particular methods to determine the intersection matrix in a canonical basis. On the other hand we show how one can reduce the non-linear constraints on the iterated integrals to linear ones. We illustrate these ideas on examples involving Calabi–Yau varieties and higher-genus Riemann surfaces. 
}

\begin{document}
\begin{flushright}
    BONN-TH/2025-30\\
    UUITP–27/25
    \end{flushright}

\maketitle
\flushbottom

\section{Introduction}
\label{sec:intro}

Feynman integrals are a cornerstone of almost all perturbative computations to make predictions for collider and gravitational wave experiments. Despite their importance, the explicit evaluation of multi-loop Feynman integrals is still a bottleneck. Over the last decades, various methods for their computation have been developed. The arguably most widely used approach is integration-by-parts (IBP) reduction~\cite{Tkachov:1981wb,Chetyrkin:1981qh} followed by the differential equations technique~\cite{Kotikov:1990kg,Kotikov:1991hm,Kotikov:1991pm,Gehrmann:1999as}, augmented by the idea of a canonical or $\eps$-factorised basis~\cite{Henn:2013pwa}, where the system of differential equations depends only linearly on the dimensional regularisation parameter $\eps$. This approach, which was inspired by the concepts of pure functions~\cite{Arkani-Hamed:2010pyv} and uniform transcendental weight~\cite{Kotikov:2010gf} in planar $\mathcal{N}=4$ Super Yang-Mills theory, was particularly successful in cases where the canonical differential equation matrix only involves dlog-forms. Many state-of-the-art results for multi-loop integrals have been obtained in this way.

Cases where it is not possible to obtain a canonical differential equation involving only dlog-forms are much less understood. These instances are tightly related to situations where the Feynman integrals are connected to non-trivial geometries, like families of elliptic curves, Riemann surfaces of higher genus, Calabi-Yau varieties or Fano varieties (for a characterisation of geometries that arise from certain classes of Feynman integrals, see, e.g., refs.~\cite{Bourjaily:2019hmc,Bourjaily:2018ycu,Bourjaily:2018yfy,Doran:2023yzu,Frellesvig:2023bbf,Brammer:2025rqo,Frellesvig:2024zph}). Various proposals have been made for how to extend the concepts of canonical and/or $\eps$-factorised bases to these situations~\cite{Primo:2017ipr,Adams:2018yfj,Broedel:2018rwm,Bogner:2019lfa,Frellesvig:2021hkr,Chen:2022lzr,Dlapa:2022wdu,Pogel:2022vat,Pogel:2022ken,Pogel:2022yat,Gorges:2023zgv,Chen:2025hzq,Duhr:2025lbz,Maggio:2025jel,e-collaboration:2025frv} (for an approach without $\eps$-factorisation, see ref.~\cite{Chaubey:2025adn}). While there is still no general consensus on what a good definition of a canonical basis beyond dlog-forms is, there is a growing body of evidence that bases that deem to be called `canonical' possess properties that go beyond the simple factorisation of the dimensional regulator $\eps$. Throughout this paper we focus on the definition of a canonical basis recently proposed in ref.~\cite{Duhr:2025lbz}, and which is itself based on the method for finding $\eps$-factorised bases of ref.~\cite{Gorges:2023zgv} (see also refs.~\cite{Pogel:2022vat,Pogel:2022ken,Pogel:2022yat,Maggio:2025jel,e-collaboration:2025frv}, which we expect to produce bases that are equivalent to those produced by the method of ref.~\cite{Gorges:2023zgv}). This method is inspired by the extension of the concepts of pure functions and uniform transcendental weight beyond dlog-forms  introduced in ref.~\cite{Broedel:2018qkq}. In a nutshell, the method of ref.~\cite{Gorges:2023zgv} proceeds as follows: after a suitable starting basis inspired by the geometry underlying the problem has been identified~\cite{Duhr:2025lbz}, a sequence of rotations is constructed that brings the original system of differential equations into an $\eps$-factorised form. In ref.~\cite{Duhr:2025lbz} is was oberserved that in all known examples the resulting system exhibits additional properties which go beyond mere $\eps$-factorisation. It was proposed that these additional properties, in conjunction with the factorisation of the dimensional regulator $\eps$, provide an appropriate working definition of canonical bases beyond the dlog scenario. 

Given that a coherent picture of a definition of a canonical basis beyond dlog-forms is starting to emerge, it is time to turn the problem around and to ask what are the additional properties and features one may expect the canonical basis integrals to satisfy. Since one desirable property seems to be the factorisation of the dimensional regulator $\eps$, it is natural to expect those properties to be tightly connected to  features introduced by dimensional regularisation~\cite{tHooft:1972tcz,Cicuta:1972jf,Bollini:1972ui}. A rigorous and axiomatic mathematical framework to define dimensionally regulated Feynman integrals was introduced a long time ago~\cite{Wilson:1972cf}, but it was realised only recently that the appropriate mathematical setting is twisted cohomology theory~\cite{Mastrolia:2018uzb} (cf., e.g., refs.~\cite{yoshida_hypergeometric_1997,aomoto_theory_2011,matsumoto_relative_2019-1} for a mathematical introduction). It is thus natural to expect that mathematical properties of canonical bases must be rooted in twisted cohomology, and so the study of the properties of canonical systems of differential equations should go hand in hand with the study of the relevant twisted cohomology groups. So far most of the applications of twisted cohomology to Feynman integrals have been in the context of finding and studying linear relations between integrals using the intersection pairing in cohomology~\cite{Frellesvig:2017aai,Frellesvig:2020qot,Frellesvig:2019kgj,Weinzierl:2020xyy,Weinzierl:2020nhw,Cacciatori:2021nli,Chestnov:2022xsy,Brunello:2023rpq,Frellesvig:2019uqt,Fontana:2023amt,Brunello:2024tqf,Crisanti:2024onv,Lu:2024dsb}. Fewer attempts have been made to apply results from twisted cohomology in the context of canonical bases~\cite{Caron-Huot:2021xqj,Caron-Huot:2021iev,Chen:2022lzr,Giroux:2022wav,Chen:2023kgw,Gasparotto:2023roh,Giroux:2024yxu,Chen:2024ovh,Duhr:2024xsy,e-collaboration:2025frv} (for other applications of twisted cohomology inspired by Feynman integrals, see, e.g., refs.~\cite{Mizera:2019vvs,Abreu:2019wzk,Britto:2021prf,Cacciatori:2022mbi,Duhr:2023bku,Bhardwaj:2023vvm,Duhr:2024rxe,Angius:2025drr,Pokraka:2025zlh}).

This paper is part of an effort to investigate properties of canonical bases through the lens of twisted cohomology theories. It builds on the result of ref.~\cite{Duhr:2024xsy}, where it was observed that, as a consequence of the twisted Riemann bilinear relations satisfied by Feynman integrals~\cite{Lee:2018jsw,Duhr:2024rxe,Cho_Matsumoto_1995} (see also refs.\cite{Broadhurst:2018tey,Bonisch:2021yfw}), the intersection matrix computed between elements of a canonical basis take a very simple form that is independent of the kinematics. 
In this paper we extend the study of the connection between canonical bases in the following direction.
The sequence of rotations to the canonical basis is constructed via solutions of first-order differential equations. As a consequence, the differential equation matrices obtained via the method of refs.~\cite{Gorges:2023zgv,Duhr:2025lbz} (but also those of refs.~\cite{Pogel:2022vat,Pogel:2022ken,Pogel:2022yat,Maggio:2025jel,e-collaboration:2025frv} expected to deliver equivalent bases) involve (iterated) integrals over algebraic functions and (derivatives of) periods of the underlying geometry. If and when these integrals -- dubbed \emph{$\eps$-functions} in ref.~\cite{Duhr:2025kkq} -- evaluate to simpler, known quantities, and what sets them apart from the iterated integrals that define the pure functions to which the canonical integrals evaluate, is currently an open question that lies at the heart of understanding the properties of canonical bases. In ref.~\cite{Duhr:2024uid} it was observed that (at least on the maximal cut) the constancy of the intersection matrix for canonical bases can be used to derive polynomial relations between $\eps$-functions (see also ref.~\cite{Pogel:2024sdi} for a related result). This allows one to express some of the integrals that define $\eps$-functions in terms of known functions, like algebraic functions and (derivatives of) the periods that 
define the underlying geometry.

The goal of this paper is to understand more systematically, properties of the intersection matrix computed in a canonical basis on the one hand, and how one can identify the $\eps$-functions that can be evaluated in terms of algebraic functions and (derivatives of) periods on the other. At the heart of our paper lies the decomposition of one of the rotation matrices on the maximal cut into (generalised) orthogonal and symmetric matrices. We then observe that the generalised symmetric part can always be expressed in terms of known functions, while the generalised orthogonal part parametrises the $\eps$-functions that we expect to define genuinely new functions. Moreover, the polynomial equations satisfied by the generalised symmetric part can always be reduced to linear constraints, which considerably simplifies solving these otherwise non-linear relations. The power of this approach was recently demonstrated in ref.~\cite{Duhr:2025kkq} for the computation of the canonical differential equations for the three-loop banana integral with four distinct non-zero masses (see also ref.~\cite{Pogel:2025bca} for an alternative approach), where more than half of the $\eps$-functions could be expressed in terms of rational functions and derivatives of the underlying K3 surface. 

This paper is organised as follows: In section~\ref{sec:conventions} we briefly review the method of ref.~\cite{Gorges:2023zgv} to obtain a canonical basis, as well as some basic results from twisted cohomology theory. In section~\ref{sec:can_int_matrix} we review what the intersection matrix becomes in a canonical basis and we describe how it  can be used to derive polynomial relations between $\eps$-functions on the maximal cut. We also present some novel results for the canonical intersection matrix. Section~\ref{sec:constraints} contains the main result of this paper, namely the decomposition into generalised orthogonal and symmetric parts, and how this decomposition can be used to efficiently obtain and solve the polynomial constraints for the $\eps$-functions on the maximal cut. We also present a novel way to obtain the canonical intersection matrix directly, without having to compute the intersection matrix in the original basis before rotation to the canonical form. In section~\ref{sec:examples} we illustrate our method on various examples, including several classes of hypergeometric functions and a four-loop banana integral with two distinct masses considered originally in ref.~\cite{Maggio:2025jel}. In section~\ref{sec:non-maximal_cuts} we illustrate on two examples how our ideas may be applied beyond maximal cuts. Finally, in section~\ref{sec:conclusions}, we draw our conclusions. We include two appendices and an ancillary file where we collect analytic formulas relevant to understanding the examples in the main text.

\section{Feynman integrals, differential equations and intersection matrices}
\label{sec:conventions}
In this section, we introduce the main objects of interest in this paper: Feynman integrals, their cuts and their differential equations. After defining Feynman integrals in subsection~\ref{subsecdef}, we review the differential equations they satisfy, focusing on an algorithm to obtain a so-called system of \emph{canonical} differential equations. 
Since our goal is to connect certain aspects of canonical differential equations to the theory of twisted (co-)homology groups we review this framework in subsection~\ref{intersectionmatrixintro}.

\subsection{Feynman Integrals and their cuts}
\label{subsecdef}

\paragraph{Basic definitions.} 
The main focus of our paper are Feynman integrals in dimensional regularisation~\cite{tHooft:1972tcz,Cicuta:1972jf,Bollini:1972ui} in $D=d-2\eps$, which have the form
\begin{align}
\label{eq:FI_def}
I_{\bs{\nu}}^D \left(\{p_i\cdot p_j\},\{m_i^2\} \right)= e^{L\gamma_E \varepsilon}\int \left(\prod_{j=1}^L \frac{\rd^D \ell_j}{i\pi^\frac{D}{2}}\right) \frac{1}{D_1^{\nu_1}\dots D_m^{\nu_m}}\, , 
\end{align}
where $d$ is a positive integer, $\bs{\nu} = (\nu_1,\ldots,\nu_m)$ is a vector of integers and $\gamma_E=-\Gamma'(1)$ is the Euler-Mascheroni constant. We denote by $m_i^2$ the squared mass of the $i^{\textrm{th}}$ propagator $D_i$ and the  momenta flowing through the $m$ propagators are linear combinations of the $L$ loop momenta $\ell_j$ and the  $E$ linearly independent external momenta $p_j$. 
A family of Feynman integrals is characterised by its set of propagators, and its members are labeled by the vector $\bs{\nu}$. The members of a Feynman integral family are not independent, but they are connected via  linear relations, the so-called \emph{integration-by-parts} (IBP) relations~\cite{Tkachov:1981wb,Chetyrkin:1981qh}. These are encoded by the identity
\begin{equation}\label{eq:IBP_tot_diff}
\int \rd^D \ell_i\,\frac{\partial}{\partial \ell_i^\mu}\left(\frac{v^{\mu}}{D_1^{\nu_1}\dots D_m^{\nu_m}} \right)=0\,,
\end{equation}
where $v^\mu$ is an internal or external momentum. Since differentiation in the integrand produces propagators with shifted exponents, this identity leads to linear relations among different members from the same family, with the coefficients being rational functions of the kinematic scales and the dimensional regulator $\eps$. The IBP relations can be solved to express all integrals in the family in terms of a set of basis integrals, commonly referred to as \emph{master integrals} in the literature. The number of master integrals is always finite~\cite{Smirnov:2010hn,Bitoun:2017nre}. Thus, in order to compute all members of a family, we just need to compute a set of master integrals.

Nowadays one of the most commonly used ways to compute Feynman integrals is the method of differential equations~\cite{Kotikov:1990kg,Kotikov:1991hm,Kotikov:1991pm,Gehrmann:1999as,Henn:2013pwa}.  If we collect the master integrals into a vector $\bs{I}$, then  this vector fulfills a differential equation of the form 
\begin{equation}\label{eq:DEQ_prototype}
\rd \bI(\bx,\eps) = \bOmega(\bx,\eps)\bI(\bx,\eps)\,,
\end{equation}
where $\rd = \sum_{i=1}^r\rd x_i\,\partial_{x_i}$ denotes the exterior derivative with respect to the external kinematic parameters $\bx= (x_1,\ldots,x_r)$.  The entries of the matrix $\bOmega(\bx,\eps)$ are rational one-forms in $\bx$ and rational functions in $\eps$, i.e., they can be expressed in the form 
\begin{equation}
\bOmega(\bx,\eps) = \sum_{i=1}^r\rd x_i\,\bOmega_i(\bx,\eps)\,,
\end{equation}
where the $\bOmega_i(\bx,\eps)$ are matrices of rational functions in $\bx$ and $\eps$. The problem of computing the vector of master integrals $\bI(\bx,\eps)$ is equivalent to finding the solutions to the first-order linear system in eq.~\eqref{eq:DEQ_prototype} (and imposing the appropriate boundary conditions). We will come back to the question of how to find the solutions to eq.~\eqref{eq:DEQ_prototype} below.

The vector space of Feynman integrals from a given family comes with a natural filtration given by the propagators, and differentiation is compatible with this filtration. It is then always possible to find a basis of master integrals in which the matrix $\bOmega(\bx,\eps)$ is block lower-triangular. We assume that we work with such a basis. 

\paragraph{Cuts of Feynman integrals.}
In addition to Feynman integrals, we also consider their \emph{cuts}. 
From a physics perspective, cuts of Feynman integrals are obtained by putting some of the propagators on their mass shell. Mathematically, this can be implemented by a residue prescription (cf., e.g., ref.~\cite{Britto:2024mna} for a recent review). A particularly convenient representation of a Feynman integral to realise this residue prescription is the so-called Baikov representation, where the integration variables $z_i$ are chosen to be combinations of irreducible scalar products, generally including the propagators. In particular, in the Baikov representation, a family of Feynman integrals takes the form
\begin{align}
\label{eq_Baikov}
I_{\bs{\nu}}^D \left(\{p_i\cdot p_j\},\{m_i^2\} \right)&= \frac{e^{L\gamma_E\varepsilon} \left[\det G(p_1,\ldots,p_E)\right]^\frac{-D+E+1}{2}}{\pi^{\frac{1}{2}(N-L)} \left[\det C\right] \prod_{j=1}^L \Gamma\left(\frac{D-E+1-j}{2}\right)}\,\hat{I}_{\bs{\nu}}^D \left(\{p_i\cdot p_j\},\{m_i^2\} \right)\,,
\end{align}
with
\begin{align}
\label{eq_Baikov2}
\hat{I}_{\bs{\nu}}^D \left(\{p_i\cdot p_j\},\{m_i^2\} \right)&=\int_{\mathcal{C}} \rd^{n} z \left[\mathcal{B}(\bs{z})\right]^{\frac{D-L-E-1}{2}} \prod_{s=1}^{n} z_s^{-\nu_s}\, . 
\end{align}
The prefactor includes the Gram determinant
\begin{align}
    \det G(q_1,\dots, q_k)=\det \left(-q_i\cdot q_j\right)\,,
\end{align}
and the integrand contains the \textit{Baikov polynomial}
\begin{align}
    \mathcal{B}(\bs{z}) =\det G(\ell_1,\ldots,\ell_L,p_1,\ldots,p_E) \,,
\end{align}
with $\bs{z}=(z_1,\ldots,z_{n})$, and the number $n$ of integration variables is 
\begin{align}
    n= \frac{1}{2}L (L+1)+EL\, .  
\end{align}
We may choose the first $m$ of these variables to be the propagators themselves. If $m<n$, we introduce additional propagators and set their exponents $\nu_i$ to zero. Equivalently, we can also introduce additional irreducible scalar products (ISPs) directly.  

For multi-loop integrals, it is often convenient to use the so-called \textit{loop-by-loop} approach \cite{Frellesvig:2017aai,Frellesvig:2024ymq} to compute the Baikov representation. This means that  we introduce a Baikov parametrisation for each loop separately, and this often results in integrals of lower dimension and an integrand that is a product of several Baikov polynomials of the form
\begin{equation}\label{Baikov_LL}
 \int_{\mathcal{C}'} \rd^{n'}\! z\,
 \mathcal{B}_1(\bs{z})^{\mu_1}\dots \mathcal{B}_K(\bs{z})^{\mu_K}\prod_{s=1}^{n'} z_s^{-\nu_s}\,,
\end{equation}
where $n'\le n$. The Baikov polynomials $\mathcal{B}_i$, the exponents $\mu_i$, the contour $\mathcal{C}'$ and the proportionality factor are determined by the order in which the loops are integrated out. 
We note that each exponent $\mu_i$ is linear in the dimensional regulator $\eps$. More specifically, it takes the form,
\beq\label{eq:mu_def}
\mu_i = \frac{m_i}{2}\pm\eps\,,\qquad m_i\in\mathbb{Z}\,.
\eeq

If we work with the Baikov representation, the residue prescription to define cuts of Feynman integrals is particularly simple: We just take residues around zero for all variables that correspond to the propagators that we want to cut, cf.,~e.g.,~refs.~\cite{Bosma:2017ens,Frellesvig:2017aai}. 
The residue prescription preserves the linear relations and the differential equations that the integrals fulfill~\cite{Anastasiou:2002yz,Anastasiou:2003yy}. Thus, the differential equations for the full integral family contain the differential equations for the cuts, and conversely we can directly deduce information about the full differential equation from the one for the cuts. In this paper we will be particularly concerned with the so-called \textit{maximal cuts}, which correspond to integrals where all propagators are put on their mass shell, meaning that the integral is evaluated on a contour that encircles all propagator poles. The diagonal blocks of the matrix $\bs{\Omega}$ in eq.~\eqref{eq:DEQ_prototype} describe the differential equations for the maximal cuts~\cite{Primo:2016ebd,Frellesvig:2017aai,Bosma:2017ens}. 

\paragraph{$\eps$-factorised and canonical differential equations.} Let us now discuss strategies for solving the system of differential equations in eq.~\eqref{eq:DEQ_prototype}. We note that within the framework of dimensional regularisation, we are usually not interested in the exact dependence of $\bI(\bx,\eps)$ on the dimensional regulator $\eps$, but only in the coefficients of the Laurent expansion around $\eps=0$.

We start by noting that there is some arbitrariness in how we choose our basis of master integrals $\bI(\bx,\eps)$, and we could have chosen another basis $\bJ(\bx,\eps)$, related to $\bI(\bx,\eps)$ by
\begin{align}
\label{conventiontrafo}
 \bs{J}(\bs{x},\eps)    = \bs{U}(\bs{x},\eps)\bs{I}(\bs{x},\eps)\, ,
\end{align}
where $\bs{U}(\bs{x},\eps)$ is some matrix of full rank.\footnote{We assume that also the new basis respects the natural filtration on the vector space of master integrals given by the propagators.} A judicious choice of basis may have an impact on our ability to solve the system of differential equations for the master integrals. 
A particularly convenient choice is a so-called \emph{$\eps$-factorised} basis, where the differential equation takes the form 
\begin{equation}\label{eq:DEQ_can_basis}
\rd\bJ(\bx,\eps) = \eps\bA(\bx)\bJ(\bx,\eps)\,, 
\end{equation}
where $\bs{A}(\bs{x})$ only depends on $\bx$, but not on $\eps$. The differential equation matrices of these bases are related by 
\begin{equation}\label{eq:gauge_to_canonical}
\eps\bA(\bx) =\Big(\bU(\bx,\eps)\bOmega(\bx,\eps)+\rd\bU(\bx,\eps)\Big) \bU(\bx,\eps)^{-1}\,.
\end{equation}
Such {$\eps$-factorised} bases were first introduced in ref.~\cite{Henn:2013pwa} in the context of Feynman integrals that evaluate to multiple polylogarithms~\cite{Goncharov:1998kja,Remiddi:1999ew,Gehrmann:1999as}, where the differential equation matrix $\bA(\bx)$ is a matrix of $\rd\!\log$-forms.\footnote{The converse, however, is not true, and the fact that $\bA(\bx)$ can be expressed in terms of dlog-forms does not imply that the solution can be expressed in terms of multiple polylogarithms~\cite{Duhr:2020gdd}.} Various proposals have been made for how to obtain $\eps$-factorised differential equations for Feynman integrals that evaluate to other classes of special functions~\cite{Primo:2017ipr,Adams:2018yfj,Broedel:2018rwm,Bogner:2019lfa,Frellesvig:2021hkr,Chen:2022lzr,Dlapa:2022wdu,Pogel:2022vat,Pogel:2022ken,Pogel:2022yat,Gorges:2023zgv,Chen:2025hzq,Duhr:2025lbz,Maggio:2025jel,e-collaboration:2025frv} (for an approach that advocates a form of the differential equations without $\eps$-factorisation, see ref.~\cite{Chaubey:2025adn}). 

A main advantage of working with an $\eps$-factorised basis is that it is easy to obtain the  coefficients of the Laurent expansion in $\eps$. 
The solution of a differential equation in $\eps$-form as in eq.~\eqref{eq:DEQ_can_basis} can be expressed as
\begin{equation}
\bJ(\bx,\eps) = \mathbb{P}_{\gamma}(\bx,\eps)\bJ_0(\eps)\,,
\end{equation}
where $\bJ_0(\eps)$ denotes the value of $\bJ(\bx,\eps)$ at some point $\bx=\bx_0$ and $\gamma$ is a path from $\bx_0$ to a generic point $\bx$. We denote by $\mathbb{P}_{\gamma}(\bs{x},\eps)$ the path-ordered exponential 
\begin{equation}\label{eq:Pexp_def}
\mathbb{P}_{\gamma}(\bx,\eps) = \mathbb{P}\exp\left[\eps\int_{\gamma}\bA(\bx)\right]\,,
\end{equation}
which can be expanded in $\eps$ up to any given order, and the coefficients of this expansion involve iterated integrals~\cite{ChenSymbol},
\begin{equation}\label{eq:Pexp_def_exp}
\mathbb{P}_{\gamma}(\bx,\eps) = \mathds{1} + \sum_{k=1}^\infty \eps^k\sum_{1\le i_1,\ldots,i_k\le p}\bA_{i_1}\cdots\bA_{i_k}  I_{\gamma}(\omega_{i_1},\ldots,\omega_{i_k}) \,,
\end{equation}
with
\beq\label{eq:A_basis_omega}
\bA(\bx) = \sum_{i=1}^p \bA_i \omega_i\,,
\eeq
and where we defined the iterated integral
\begin{equation}\bsp\label{eq:iterated_int_def}
 I_{\gamma}(\omega_{i_1},\ldots,\omega_{i_k}) &\,= \int_{\gamma} \omega_{i_1}\cdots \omega_{i_k}\\
 &\,= \int_{0\le \xi_k\le \cdots \le \xi_1\le1}\rd \xi_1 \,f_{i_1}(\xi_1)\,\rd \xi_2 \,f_{i_2}(\xi_2)\cdots \rd \xi_k\,f_{i_k}(\xi_k)\,,
\esp\end{equation}
with $\gamma^*\omega_{i_r}=\rd\xi_r\,f_{i_r}(\xi_r)$ the pullback of $\omega_i$ to the path $\gamma$. 
Generally, the entries of $\bs{A}(\bs{x})$ that enter the $\eps$-factorised differential equation, and thus also of the matrix $\bU(\bx,\eps)$ in eq.~\eqref{conventiontrafo}, are not rational in $\bs{x}$. In the literature there are examples where the matrix $\bs{A}(\bx)$ involves, for example, modular forms~\cite{Adams:2018yfj,Broedel:2018rwm,Adams:2018bsn,Dlapa:2022wdu,Muller:2022gec,Pogel:2022ken,Klemm:2024wtd}, the coefficients of the Kronecker-Eisenstein series~\cite{Bogner:2019lfa,Muller:2022gec}, (integrals of) Siegel modular forms~\cite{Duhr:2024uid}, or periods of Calabi-Yau varieties and integrals thereof~\cite{Pogel:2022ken,Pogel:2022vat,Pogel:2022yat,Gorges:2023zgv,Klemm:2024wtd,Frellesvig:2024rea,Driesse:2024feo,Duhr:2025lbz,Maggio:2025jel,Duhr:2025kkq,Pogel:2025bca}. 

While the methods of refs.~\cite{Primo:2017ipr,Adams:2018yfj,Broedel:2018rwm,Bogner:2019lfa,Frellesvig:2021hkr,Chen:2022lzr,Dlapa:2022wdu,Pogel:2022vat,Pogel:2022ken,Pogel:2022yat,Gorges:2023zgv,Chen:2025hzq,Duhr:2025lbz,Maggio:2025jel,e-collaboration:2025frv} all lead to $\eps$-factorised differential equations, the resulting bases are not fully equivalent, and the solutions may typically involve different classes of iterated integrals (see, e.g.,~ref.~\cite{Frellesvig:2023iwr} for a comparison of different bases). Here we use the basis defined by the method of ref.~\cite{Gorges:2023zgv}. Before we describe the algorithm underlying this approach, let us discuss some of its features. The method of ref.~\cite{Gorges:2023zgv} was introduced as a way to achieve $\eps$-factorisation while being inspired by the extension to the elliptic case~\cite{Broedel:2018qkq} of the notions of uniform transcendental weight~\cite{Kotikov:2010gf} and pure functions~\cite{Arkani-Hamed:2010pyv}, known from the case of Feynman integrals that evaluate to multiple polylogarithms.
In ref.~\cite{Duhr:2025lbz} it was empirically observed that the systems of differential equations obtained using the method of ref.~\cite{Gorges:2023zgv} have additional features besides mere $\eps$-factorisation. In particular, it was observed that the differential one-forms that define the matrix $\bA(\bx)$ define linearly independent cohomology classes\footnote{$\eps$-factorised systems that satisfy this particular property were dubbed in $C$-form in ref.~\cite{Duhr:2024xsy}.}  and locally they only have simple poles. Note that these additional properties are always satisfied if the matrix $\bA(\bx)$ is a matrix of dlog-forms. These two properties imply additional features for the classes of iterated integrals that appear in the $\eps$-expansion. Indeed, the first property implies that these iterated integrals are linearly independent~\cite{deneufchatel:hal-00558773,Duhr:2024xsy}, while a consequence of the second property is that locally the iterated integrals diverge at most like a power of a logarithm. We refer to systems that satisfy these properties in addition to $\eps$-factorisation as \emph{canonical}.

Let us conclude this subsection by briefly reviewing the algorithm introduced in refs.~\cite{Gorges:2023zgv,Duhr:2025lbz}. We expect that the basis constructed by applying this algorithm is equivalent to the one obtained from the methods described in refs.~\cite{Pogel:2022vat,Pogel:2022ken,Pogel:2022yat,Maggio:2025jel,e-collaboration:2025frv} (see refs.~\cite{Duhr:2025lbz,Maggio:2025jel} for a detailed discussion). 
An important point of the method of ref.~\cite{Gorges:2023zgv} is the choice of a good starting basis $\bs{I}(\bx,\eps)$. Often it is convenient to choose a so-called \textit{derivative basis}, i.e., a basis where the master integrands are derivatives of each other, see also refs.~\cite{Gorges:2023zgv, Maggio:2025jel,Duhr:2024uid,Duhr:2025lbz}. More generally, it was proposed in ref.~\cite{Duhr:2025lbz} that a good starting basis is aligned with the mixed-Hodge structure~\cite{PMIHES_1971__40__5_0,PMIHES_1974__44__5_0} of the geometry attached to the maximal cuts of the integrals. Since for the purposes of this paper the starting basis is not central, we just assume that we have identified a good starting basis $\bs{I}(\bx,\eps)$.
The goal is then to construct a rotation $\bs{U}(\bs{x},\eps)$, such that the basis $\bs{J}(\bs{x},\eps)$ defined through eq.~\eqref{conventiontrafo} is in $\eps$-factorised form. We stress again that the algorithm of ref.~\cite{Gorges:2023zgv} only aims at achieving $\eps$-factorisation, but that in all known cases the resulting bases are also canonical according to the definition given above. We expect that this is true in general. 

The algorithm of ref.~\cite{Gorges:2023zgv} constructs the rotation matrix $\bU(\bx,\eps)$ as a sequence of rotations,
\begin{align}\label{Usplitting}
\bU(\bs{x},\eps)=\bUt(\bx,\eps)\bUeps(\eps)\bUss(\bx)\,.
\end{align}
The first rotation $\bUss(\bx)$ is a block-diagonal matrix, where each block is the inverse of the semi-simple part of the period matrix~\cite{Broedel:2018qkq,Broedel:2019kmn,Ducker:2025wfl,Maggio:2025jel}, related to the different geometries obtained from the maximal cuts.
This rotation introduces rational functions of the periods (and their derivatives) of the corresponding geometries into the basis. 
The matrix $\bUeps(\eps)$ is diagonal and rescales the overall power of $\eps$. It is constructed such that after this rotation the highest power of $\eps$ appearing in the differential equation is $\eps^1$, and the non-positive (and zeroth) powers of $\eps$, are confined to a lower-triangular matrix. The final rotation $\bUt(\bx,\eps)$ is the central object of interest in this paper. Unlike the entries of $\bUss(\bx)$ and $\bUeps(\eps)$, which are fixed a priori without detailed knowledge of the specific differential equations, the entries of $\bUt(\bx,\eps)$
are obtained by making an ansatz and requiring that the transformed differential equation matrix in eq.~\eqref{eq:gauge_to_canonical} is $\eps$-factorised.
Due to the appearance of the differential $\rd \bs{U}$ in eq. \eqref{eq:gauge_to_canonical}, this leads to a set of differential equations that fix these entries.
If we expand $\bUt(\bx,\eps)$ in $\eps$, we can write
\begin{equation}\label{eq:expU}
    \bUt(\bx,\eps)=\bUt^{(0)}(\bx)+\frac{1}{\eps}\bUt^{(-1)}(\bx)+...+\frac{1}{\eps^{n-1}}\bUt^{(1-n)}(\bx)\,,
\end{equation}
where $n$ is the dimension of the underlying geometry. 

At this point we make a crucial observation. We have already mentioned that, even though the matrix $\bOmega(\bx,\eps)$ in eq.~\eqref{eq:DEQ_prototype} is a matrix of rational one-forms, this may not be the case for the matrix $\bA(\bx)$ describing the canonical differential equation. We can now see why this is the case. The rotation matrix $\bUss(\bx)$ introduces the (derivatives of the) periods of some algebraic varieties into the basis, and thus into the differential equations. Periods of algebraic varieties are typically not expressible in terms of rational (or even algebraic) functions, but they define transcendental functions. Said differently, if $\cF$ is the field of algebraic functions in $\bx$ and $\cFss$ is the field obtained by adding to $\cF$ the periods (and their derivatives) that appear in $\bUss(\bx)$, then we see that the entries of the matrices $\bOmega_i(\bx,\eps)$ lie in $\cF(\eps)$,\footnote{Here $\cF(\eps)$ denotes the field of rational functions in $\eps$ with coefficients in $\cF$.} while the differential one-forms that enter the differential equation take the form $\sum_{j=1}^rf_{ij}(\bx)\,\rd x_j$, with $f_{ij}\in\cFss$. 
As the rotation matrices $\bUt^{(i)}(\bx)$ are defined via differential equations with coefficients in $\cFss$, the solutions to these differential equations can be cast in the form of (iterated) integrals over kernels from $\cFss$. The functions defined in this way were called \emph{$\eps$-functions} in ref.~\cite{Duhr:2025kkq}. A priori, $\eps$-functions will define new classes of transcendental functions, that cannot necessarily be expressed in terms of algebraic functions involving periods and their derivatives. We denote by $\cFeps$ the field of functions obtained by adding to $\cFss$ the $\eps$-functions and their derivatives, i.e., $\cFeps$ is the field of rational functions in $\eps$-functions and their derivatives with coefficients in $\cFss$. Hence we have an inclusion of function fields,
\beq
\cF \subseteq \cFss\subseteq \cFeps\,.
\eeq

It is natural to ask if the new functions that are added at every step are genuinely new, or if they can be expressed in terms of functions that were already known. For the first inclusion, this boils down to the question if the corresponding periods of algebraic varieties are expressible in terms of rational or algebraic functions. While this is an open question in general, one typically expects that periods of families of non-trivial algebraic varieties, like families of Riemann surfaces of genus $g\ge 1$ or families of Calabi-Yau varieties, define transcendental functions, cf.,~e.g.,~refs.~\cite{Schneider1937,Wuestholz1989,10.1007/BFb0099460,waldschmidt:hal-00411301}. For the second inclusion, i.e., the question which $\eps$-functions can be expressed in terms of algebraic functions involving periods, and their derivatives, the situation is much less explored. One of the main goals of this paper is to take first steps in this direction, and to present a way to identify $\eps$-functions that are expressible in terms of elements from $\cFss$. In order to achieve this, we need to rely on results from twisted cohomology theory, which is the appropriate mathematical framework to study Feynman integrals in dimensional regularisation~\cite{Mizera:2017rqa,Mastrolia:2018uzb}. We therefore present a brief review of twisted cohomology groups in the next subsection.

\subsection{Feynman integrals and intersection matrices} 
\label{intersectionmatrixintro}

In general, Feynman integrals in dimensional regularisation are multi-valued, and they can naturally be interpreted as periods of twisted (co-)homology groups. Loosely speaking, twisted cohomology theory studies integrals of the form
\beq
\int_{\mathcal{C}}\Phi\varphi\,,
\eeq
where $\Phi$ is a multivalued function, $\varphi$ is a rational one-form, and $\mathcal{C}$ is a \emph{twisted cycle}, i.e., a cycle together with information on the branch on which $\Phi$ is to be evaluated. We give a very short review of the relevant concepts and their relation to Feynman integrals here.  We keep this review very brief, we refer to the standard literature for extensive discussions and reviews of the theory of twisted (co-)homology groups and their use in physics \cite{Mastrolia:2018uzb,yoshida_hypergeometric_1997,aomoto_theory_2011,Mizera:2017rqa,Mizera:2019gea,matsumoto_relative_2019-1,Frellesvig:2017aai,Frellesvig:2020qot,Frellesvig:2019kgj,Weinzierl:2020xyy,Weinzierl:2020nhw,Cacciatori:2021nli,Chestnov:2022xsy,Brunello:2023rpq,Frellesvig:2019uqt,Fontana:2023amt,Brunello:2024tqf,Crisanti:2024onv,Lu:2024dsb,
Caron-Huot:2021xqj,Caron-Huot:2021iev,Chen:2022lzr,Giroux:2022wav,Chen:2023kgw,Gasparotto:2023roh,Giroux:2024yxu,Chen:2024ovh,Duhr:2024xsy,e-collaboration:2025frv}.

In order to define the twisted (co)-homology groups related to a family of multi-valued integrals, we split the integrand into a multi-valued part $\Phi$, called the \textit{twist}, and a single-valued part $\varphi$. Here, we generally consider Feynman integrals in the Baikov representation computed in the loop-by-loop approach -- see eq. (\ref{Baikov_LL}) -- and define the twist to be
\begin{align}
\label{baikovtwist}
    \Phi= \mathcal{B}_1(\bs{z})^{\mu_1}\dots \mathcal{B}_K(\bs{z})^{\mu_K}\, . 
\end{align}
Due to the dimensional regulator $\eps$ appearing in the $\mu_i$, this factor will always be multi-valued. 
We want to interpret the single-valued part $\varphi$ as defining an element of a so-called twisted cohomology group. For Feynman integrals this single-valued factor typically takes the form
\begin{align}
 \varphi=   \rd^{n} z \,f(\bz)\,\prod_{s=1}^{n'} z_s^{-\nu_s}\,,
\end{align}
where $f(\bz)$ is a rational function with poles at most at the zeroes of the twist.
If we consider cuts, the integrand arises from this factor after a certain number of residues have been taken on the full integrand $\Phi\varphi$. In particular, for a Feynman integral, this factor will have poles. Since by definition the maximal cut is obtained by taking all residues at the propagators, $\varphi$ does not have poles at $z_s=0$ for maximal cuts. For the twisted cohomology groups to consider, this is an important distinction: If the factor $\varphi$ does not have poles at loci where the twist is non-vanishing, we can simply consider it as an element of a twisted cohomology group $H^n(X,\nabla_{\!\Phi})$, where  $X=\mathbb{C}^n-\Sigma$, and $\Sigma$ is a union of hypersurfaces defined by the vanishing of the twist and $\nabla_{\!\Phi}=\rd + \rd\! \log \Phi \wedge$. This group contains equivalence classes of forms that are closed with respect to $\nabla_{\!\Phi}$ modulo forms that are exact. If $\varphi$ has poles outside of $\Sigma$ -- which generally happens if we consider non-maximal cuts -- we need to work with \emph{relative} twisted cohomology groups, see  refs.~\cite{Caron-Huot:2021iev,Caron-Huot:2021xqj,matsumoto_relative_2019-1},  or introduce a regulator, i.e., a factor in the twist whose exponent is taken to zero at the end of the calculation. An extended discussion of these frameworks can be found in refs.~\cite{Brunello:2023rpq,Duhr:2024rxe}. 

A twisted cohomology group is a vector space and we can construct a basis for it. Similarly, one can define the dual twisted cohomology group $H_{\rm{dR}}^n(\check{X},\check{\nabla}_{\!\Phi})$ by the connection $\check{\nabla}_{\!\Phi}= \rd -\rd\! \log \Phi\wedge $ on a space $\check{X}$ which we do not specify much further. The group $H_{\rm{dR}}^n(\check{X},\check{\nabla}_{\!\Phi})$ consists of equivalence classes of \emph{dual} differential forms. We will not need the dual differential forms directly, and it suffices to say that if we do not consider poles outside $X$, then $X=\check{X}$ and the bases of both groups can be taken in a very similar form (see below). We will refer to this situation as \emph{self-duality}, and it will play an important role in this paper. Otherwise we need to work with relative twisted cohomology groups, and the elements of the dual basis take a  form that is different from the forms that define $H_{\rm{dR}}^n({X},{\nabla}_{\!\Phi})$. Explicit constructions in particular for Feynman integrals can be found in refs.~\cite{Caron-Huot:2021iev,Caron-Huot:2021xqj}. 

The duality between the twisted cohomology group and its dual is expressed by the intersection pairing $\langle . |.\rangle $. The \emph{cohomology intersection matrix} is the matrix of pairings between basis elements $\varphi_i$ and $\check{\varphi}_j$ of the cohomology group and its dual: 
\begin{align}
    {C_{ij}} =  \frac{1}{(2\pi i)^n} \langle \varphi_i |\check{\varphi}_j\rangle = \frac{1}{(2\pi i)^n}\int_X \varphi_i\wedge \check\varphi_{j}\, .
\end{align}
This object is central to all considerations in this paper. Its computation is discussed in detail in the literature, see refs.~\cite{yoshida_hypergeometric_1997,Mizera:2017rqa,Mizera:2019gea, aomoto_theory_2011}, and refs.~\cite{Caron-Huot:2021iev,Caron-Huot:2021xqj} for relative twisted cohomology groups. Note, that the cohomology intersection matrix is only defined after we specify bases $\{\varphi_i\}$ and $\{\check{\varphi}_i\}$ for the twisted cohomology group and its dual. If we perform a change of basis by rotating them with $\bs{U}$ and $\bs{\check{U}}$ respectively, the cohomology intersection matrix changes as 
\begin{align}
\label{trafointemat}
    \bs{C}_{\text{new}}= \bs{U}\,  \bs{C}\, \bs{\check{U}}^T\, . 
\end{align}

Similarly to the twisted cohomology groups, one can also define the twisted homology group $H_n(X, \check{\mathcal{L}}_\Phi)$ and its dual $H_n(\check{X},\mathcal{L}_\Phi)$, where $\check{\mathcal{L}}_\Phi$ is the local system that defines the multivalued twist $\Phi$. These groups contain equivalence classes of cycles. There exists an intersection pairing $[.\, |\, .]$ between the cycles and their duals, which counts the local intersections taking the local branch choice of the twist into account. We refer to the matrix whose entries are the intersections between fixed bases $\{\gamma_i\}$ and $\{\check{\gamma}_i\}$ of the homology group and its dual as the \emph{homology intersection matrix}, and we denote it by $\bs{H}$. 

Finally, we may also pair (dual) cycles and (dual) differential forms. This defines \emph{twisted periods}, obtained by pairing cycles from the (dual) homology group with forms from the (dual) cohomology group via integration. Assuming that we have fixed bases for all (co)homology groups and their duals, the \emph{period matrix} is obtained by applying this pairing to all basis elements:
\begin{align}
    {P}_{ij} = \langle \varphi_i |\gamma_j] = \int_{\gamma_j}\Phi\varphi_i\, . 
\end{align}
Similarly, one defines the \emph{dual period matrix},
\begin{align}
    {\check{P}}_{ij} = [\check{\gamma}_j|\check{\varphi}_i \rangle = \int_{\check{\gamma}_j}\Phi^{-1}\check{\varphi}_i\, . 
\end{align}

In the non-relative case, we can choose the bases and dual bases for both the homology and cohomology groups to be regularised versions of each other, i.e., $\check{\varphi}_i=[\varphi_i]_c$ and $\gamma_i = [\check{\gamma}_i]_c$. The details of the regularisation procedure implied by the notation $[\cdot ]_c$ are not important for the present discussion and can be found in refs.~\cite{aomoto_theory_2011, yoshida_hypergeometric_1997}. For that choice and the twist  as in eq. (\ref{baikovtwist}), we find: 
\begin{align} \label{eq:checkP_to_minus}
\check{P}_{ij} 
=P_{ij}|_{\mu_k\rightarrow -\mu_k}\, . 
\end{align} 
In particular, in ref.~\cite{Duhr:2024xsy} it was shown that, as a consequence of eq.~\eqref{eq:mu_def}, for maximal cuts we can always choose the dual basis such that $\bs{\check{P}}=\bs{P}|_{\eps\rightarrow -\eps}$. This may be interpreted as a concrete manifestation of the self-duality of maximal cuts.  

In general, the period matrix and its dual fulfill differential equations of the form 
\begin{align}
\label{deqperiod}
    \rd \bs{P} = \bs{\Omega} \bs{P}\textrm{~~~and~~~}\rd \bs{\check{P}} = \bs{\check{\Omega}} \bs{\check{P}}\, ,
\end{align}
where $\rd$ denotes the exterior derivative with respect to the external parameters, which in physical examples generally contain the kinematic parameters. The cohomology intersection matrix satisfies the differential equation
\begin{align}\label{DEQC}
\rd \bC=\bs{\Omega} \bC+\bC\bs{\check{\Omega}}^T\,.
\end{align} 
The period and intersection matrices are related by the twisted Riemann bilinear relations~\cite{Cho_Matsumoto_1995}
\beq
\label{eq:TRBRs}
\bH = \frac{1}{(2\pi i)^n}\,\bP^T\big(\bC^{-1}\big)^T\bs{\check{P}}\,.
\eeq


\section{The canonical intersection matrix}
\label{sec:can_int_matrix}
The goal of our paper is to identify relations between the $\eps$-functions that are introduced by the rotation $\bUt(\bx,\eps)$, or equivalently, to describe a (possibly minimal) set of generators for the function field $\cFeps$. As already mentioned, we do this by combining insights from canonical differential equations with tools from twisted cohomology, in particular our knowledge of the intersection matrix for a canonical basis. We start by reviewing some results of ref.~\cite{Duhr:2024xsy}, and we then extend them by some new results that will be useful to understand the relationship between canonical bases and twisted cohomology.

\subsection{The intersection matrix in a canonical basis}\label{sec:canon_int}
Assume that we have determined bases of master integrals and their duals, and denote the corresponding cohomology intersection matrix by $\bC(\bx,\eps)$. Assume furthermore that we have determined (e.g., by using the algorithms of refs.~\cite{Pogel:2022vat,Pogel:2022ken,Pogel:2022yat,Gorges:2023zgv,Duhr:2025lbz,Maggio:2025jel}) rotations $\bU(\bx,\eps)$ and $\bs{\check{U}}(\bx,\eps)$ to canonical bases for both the integrals and their duals. We denote the canonical differential equation matrices for the integrals and their duals by $\eps\bA(\bx)$ and $\eps\bs{\check{A}}(\bx)$, respectively. From eq.~\eqref{trafointemat} we know that after rotating to a canonical basis, the intersection matrix becomes
\beq\label{eq:Cc_def}
\bC_c(\bx,\eps) = \bU(\bx,\eps)\bC(\bx,\eps)\bs{\check{U}}(\bx,\eps)^T\,. 
\eeq
In ref.~\cite{Duhr:2024xsy} some of the authors have shown that the intersection matrix $\bC_c(\bx,\eps)$ for the canonical bases has a very special form. In particular, $\bC_c(\bx,\eps)$ is constant in $\bx$ and takes the form 
\beq\label{eq:Cc_to_Delta}
\bC_c(\bx,\eps) = \bDelta\,\bM(\eps)\,,
\eeq
where $\bDelta$ is a constant matrix (the entries will typically be rational numbers) and $\bM(\eps)$ is a matrix of rational functions in $\eps$ that commutes with $\bs{\check{A}}(\bx)^T$, 
\beq\label{eq:commutation}
\big[\bM(\eps), \bs{\check{A}}(\bx)^T\big]=0\,.
\eeq
Due to this commutation relation the precise form of $\bM(\eps)$ will be irrelevant, and it drops out of all equations, and all the information is contained in the constant matrix $\bDelta$. We refer to $\bDelta$ as the \emph{canonical intersection matrix}. Note that, strictly speaking, $\bDelta$ is not uniquely defined, as we can always redefine it by a constant matrix (in $\eps$) that commutes with $\bs{\check{A}}(\bx)^T$,
\beq
\bC_c(\bx,\eps) = \bDelta'\,\bM'(\eps), \textrm{~~~with~~~} \bDelta'=\bDelta\bs{E}^{-1},\quad \bM'(\eps)=\bs{E}\bM(\eps)\,.
\eeq
In practice, this does not cause any confusion, as we may fix $\bs{E}$ by requiring that, e.g., $\bM(\eps) = \eps^m\mathds{1}+\mathcal{O}(\eps^{m+1})$ for some non zero integer $m$. 

The canonical intersection matrix $\bDelta$ depends on the choices of canonical bases for the integrals and their duals. If we change the canonical bases by a rotation by some constant matrices $\bs{M}$ and $\bs{\check{M}}$, then the canonical intersection matrix changes to $\bs{M}\bDelta\bs{\check{M}}^T$. We may use this freedom to identify a specific canonical basis where the canonical intersection matrix takes a simpler form. We will return to this point in section~\ref{sec:consequences}.

The canonical intersection matrix relates the differential equation matrices for the canonical basis and its dual. More specifically, using eq.~\eqref{eq:commutation} and the fact that $\bC_c(\bx,\eps)$ is constant in $\bx$, we see that eq.~\eqref{DEQC} implies~\cite{Duhr:2024xsy},
\beq\label{eq:A_dual_Delta}
\bA(\bx) = -\varphi_{\Delta}\big(\bs{\check{A}}(\bx)\big) := -\bDelta\bs{\check{A}}(\bx)^T\bDelta\!^{-1} \,.
\eeq
This relation has an immediate consequence: In order to define the canonical intersection matrix, we need to independently rotate the basis of master integrals and their duals into a canonical basis using the rotation matrices $\bU(\bx,\eps)$ and $\bs{\check{U}}(\bx,\eps)$. From each of the two rotations, we obtain function fields $\cFss$, $\cFeps$ and $\check{\cF}_{\!\textrm{ss}}$, $\check{\cF}_{\!\eps}$, respectively. Equation~\eqref{eq:A_dual_Delta} shows that these are mutually the same,
\beq
\cF \subseteq  \check{\cF}_{\!\textrm{ss}} = \cFss \subseteq \check{\cF}_{\!\eps} = \cFeps\,.
\eeq
In particular, Feynman integrals and their duals share the same $\eps$-functions, and the dual canonical differential equation matrix $\bs{\check{A}}(\bx)$ can be expanded into the same basis of differential forms as $\bA(\bx)$,
\beq
\bs{\check{A}}(\bx) = \sum_{i=1}^p\bs{\check{A}}_i\,\omega_i\,,\qquad \bA_i=-\varphi_{\Delta}\big(\bs{\check{A}}_i\big)\,,
\eeq
where the differential one-forms $\omega_i$ were defined in eq.~\eqref{eq:A_basis_omega}. 

In the remainder of this paper we will argue that we can make even stronger statements when working with a self-dual scenario. If we assume that we are working with a dual basis such that $\bs{\check{P}} = \bP_{|\eps\to-\eps}$, then we must have
$\bs{\check{A}}(\bx) = -\bA(\bx)$, so that eq.~\eqref{eq:A_dual_Delta} reduces to~\cite{Duhr:2024xsy}
\beq\label{eq:A_phiA}
\bA(\bx) = \varphi_{\Delta}\big(\bA(\bx)\big) = \bDelta\bA(\bx)^T\bDelta\!^{-1}\,,\textrm{~~~if self-duality holds}\,,
\eeq
and $\bDelta$ is either symmetric or antisymmetric.
Moreover, if $\bA(\bx)$ is irreducible (i.e., we cannot find a basis such that $\bA(\bx)$ is block-triangular), Schur's lemma implies that the matrix $\bM(\eps)$ is proportional to the identity,
\beq\label{eqfeps}
\bM(\eps) = f(\eps)\,\mathds{1}\,,
\eeq
where $f$ is a rational function of $\eps$. In particular, this means that in this case $\bDelta$ is uniquely defined up to multiplication by a non-zero scalar. 

\subsection{Obtaining the canonical intersection matrix}
\label{subsec:obtaining_Delta}
In this section we discuss ways to determine the canonical intersection matrix $\bDelta$. We assume that we have determined a canonical basis for the master integrals and their duals, and we follow the notation and conventions of the previous sections. 

One way to obtain the canonical intersection matrix $\bDelta$ is to rotate the intersection matrix $\bC(\bx,\eps)$ from the original bases to the canonical ones using eq.~\eqref{eq:Cc_def}. This approach, however, requires one to compute the original intersection matrix $\bC(\bx,\eps)$ independently. While various methods exist to perform this task~\cite{Frellesvig:2019kgj,Frellesvig:2019uqt,Frellesvig:2020qot,Chestnov:2022xsy,Fontana:2023amt,Brunello:2023rpq,Brunello:2024tqf}, in practice this may be prohibitively complicated. However, as we now explain, it is often possible to obtain $\bDelta$ without explicit knowledge of the intersection matrix $\bC(\bx,\eps)$ computed in the original bases. The starting point is the realisation that the canonical intersection matrix is essentially unique, in the following sense:
\begin{lemma}
Let $\bD$ be a constant matrix such that eq.~\eqref{eq:A_dual_Delta} holds with $\bDelta$ replaced by $\bD$. Then there is a constant matrix $\bs{E}$ such that
\beq
\bD = \bDelta\,\bs{E}^{-1} \textrm{~~~and~~~} \big[\bs{E},\bs{\check{A}}(\bx)^T\big]=0\,.
\eeq
\end{lemma}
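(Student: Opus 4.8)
The plan is to observe that $\bDelta$ and $\bD$ are, by hypothesis, two solutions of the \emph{same} matrix equation, and to define $\bs{E}$ as the ``ratio'' of the two. First I would note that for eq.~\eqref{eq:A_dual_Delta} to make sense with $\bDelta$ replaced by $\bD$, the matrix $\bD$ has to be invertible; like $\bDelta$, it is constant in $\bx$. The hypothesis then reads, as an identity of $\bx$-dependent matrices,
\beq
-\bDelta\,\bs{\check{A}}(\bx)^T\bDelta^{-1} = \bA(\bx) = -\bD\,\bs{\check{A}}(\bx)^T\bD^{-1}\,.
\eeq

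Next I would set $\bs{E} := \bD^{-1}\bDelta$, which is a constant matrix because $\bD$ and $\bDelta$ are. The relation $\bD = \bDelta\,\bs{E}^{-1}$ is then immediate, since $\bDelta\,\bs{E}^{-1} = \bDelta\,(\bD^{-1}\bDelta)^{-1} = \bDelta\,\bDelta^{-1}\bD = \bD$. For the commutation property, I would cancel the common sign and the common factor $\bA(\bx)$ in the displayed equation to get $\bDelta\,\bs{\check{A}}(\bx)^T\bDelta^{-1} = \bD\,\bs{\check{A}}(\bx)^T\bD^{-1}$, and then conjugate: multiplying on the left by $\bD^{-1}$ and on the right by $\bD$ gives $\bs{E}\,\bs{\check{A}}(\bx)^T\bs{E}^{-1} = \bs{\check{A}}(\bx)^T$, i.e.\ $[\bs{E},\bs{\check{A}}(\bx)^T] = 0$. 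Since the original identity holds for every $\bx$, so does the commutation relation; equivalently, $\bs{E}$ commutes with each $\bs{\check{A}}_i$ in the decomposition $\bs{\check{A}}(\bx) = \sum_i \bs{\check{A}}_i\,\omega_i$.

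I do not expect any genuine obstacle: the content of the lemma is simply that the conjugation map $\bX \mapsto -\bX\,\bs{\check{A}}(\bx)^T\bX^{-1}$ separates points only up to its stabiliser, which is precisely the commutant of $\{\bs{\check{A}}(\bx)^T\}_{\bx}$, so the admissible choices of $\bDelta$ form a torsor under left multiplication by that commutant. The one point worth making explicit in the write-up is the logical one just noted, that the commutation with $\bs{\check{A}}(\bx)^T$ is required simultaneously for all values of $\bx$; this is automatic here because we work throughout with identities of $\bx$-dependent matrices.
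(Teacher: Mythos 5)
Your proof is correct and follows essentially the same route as the paper: equate the two expressions for $\bA(\bx)$, define $\bs{E}$ as the constant ``ratio'' of $\bD$ and $\bDelta$, and read off the commutation with $\bs{\check{A}}(\bx)^T$ for all $\bx$. The only (cosmetic) difference is that you take $\bs{E}=\bD^{-1}\bDelta$ where the paper takes $\bs{E}=\bDelta^{-1}\bD$; your choice actually matches the stated form $\bD=\bDelta\,\bs{E}^{-1}$ verbatim, and the two differ only by replacing $\bs{E}$ with its inverse, which also commutes with $\bs{\check{A}}(\bx)^T$.
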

\begin{proof}
By assumption, both $\bDelta$ and $\bD$ satisfy eq.~\eqref{eq:A_dual_Delta}, and so
\beq
\bDelta\bs{\check{A}}(\bx)^T\bDelta\!^{-1} = \bD\bs{\check{A}}(\bx)^T\bD^{-1}\,,
\eeq
or equivalently
\beq
\bs{E}\bs{\check{A}}(\bx)^T=\bs{\check{A}}(\bx)^T\bs{E}\,,
\eeq
with $\bs{E} := \bDelta\!^{-1}\bD$. The claim then immediately follows.
\end{proof}
This lemma implies that it is sufficient to find any constant solution to eq.~\eqref{eq:A_dual_Delta} to obtain the canonical intersection matrix, up to a matrix that commutes with $\bs{\check{A}}(\bx)^T$. 

Let us now discuss a strategy to obtain the canonical intersection matrix $\bDelta$ without the need to know the original intersection matrix explicitly. 
Assume that we have determined a rotation to a canonical form for both the integrals and their duals. In particular, we assume that we have representations of all $\eps$-functions, i.e., of all entries of the matrix $\bUt^{(0)}$. For example, we may have determined series or integral representations for them, and we assume that we know how to evaluate them in some region. Note that we do not require that we know how to evaluate the $\eps$-functions everywhere in parameter space, but it is enough that we have numerical control in some small region.
We start by casting eq.~\eqref{eq:A_dual_Delta} in the form
\begin{equation}\label{num}
    \bs{A}(\bx)\bs{\Delta}+\bs{\Delta}\bs{\check{A}}(\bx)^T=0\,.
\end{equation}
We interpret this equation as a linear system for the entries of the canonical intersection matrix. Since $\bs{\Delta}$ is a constant, it is enough to evaluate the entries of $\bs{A}(\bx)$ and $\bs{\check{A}}(\bx)$ for some choices of $\bx$, and we can solve for $\bDelta$. We obtain in this way a numerical solution for $\bDelta$. Since the entries of $\bDelta$ are typically simple rational numbers, we can often determine an exact representation for $\bDelta$ from this numerical result. In section~\ref{sec:consequences} we will present an alternative  approach to determining $\bDelta$ valid for self-dual scenarios.

\subsection{The canonical homology intersection matrix}
So far we have only focused on the canonical \emph{cohomology} intersection matrix. We conclude this section by briefly commenting on the corresponding canonical \emph{homology} intersection matrix. We can compute the homology intersection matrix via the twisted Riemann bilinear relations in eq.~\eqref{eq:TRBRs}.

Our goal will be to compute the homology intersection matrix $\bH$ if, for the quantities on the right-hand side, we work in a canonical basis.
In order to fully define the homology intersection matrix, we need to specify the bases of the homology group and its dual. We fix a basis of the homology group such that the period matrix (in the canonical basis) is simply given by the path-ordered exponential in eq.~\eqref{eq:Pexp_def}. Similarly, we fix a basis of dual cycles such that the dual period matrix is
\beq
\check{\mathbb{P}}_{\gamma}(\bx,\eps) = \mathbb{P}\exp\left[\eps\int_{\gamma}\bs{\check{A}}(\bx)\right]\,.
\eeq
The original period matrix $\bP$ and its dual $\bs{\check{P}}$ can be expressed in terms of the path-ordered exponentials as,
\beq\bsp
\bP(\bx,\eps) &\,= \bU(\bx,\eps)^{-1}\mathbb{P}_{\gamma}(\bx,\eps)\bP_0(\eps)\,,\\
\bs{\check{P}}(\bx,\eps) &\,= \bs{\check{U}}(\bx,\eps)^{-1}\check{\mathbb{P}}_{\gamma}(\bx,\eps)\bs{\check{P}}_0(\eps)\,.
\esp\eeq
Here $\bU(\bx,\eps)$ and $\bs{\check{U}}(\bx,\eps)$ are the rotation matrices that encode the transformation from the original basis of the twisted cohomology group to the canonical one, cf.~eqs.~\eqref{conventiontrafo} and~\eqref{Usplitting}. The matrices $\bP_0$ and $\bs{\check{P}}_0$ encode the initial conditions of the system of differential equations, given by the value of the period matrices at the initial point of the path $\gamma$. We may also think of these matrices as the rotations from the original choice of basis of the twisted homology group to a canonical basis of cycles where the period matrix is given by the path-ordered exponential. Consequently, the canonical homology intersection matrix is given by
\beq\bH_c(\eps) = \big(\bP_0(\eps)^{-1}\big){}^T\bH(\eps)\left(\bs{\check{P}}_0(\eps)\right)^{-1}\,.
\eeq
Our goal is to determine the explicit form of $\bH_c$.

We start by proving the following identity, which should be thought of as the analogue of the twisted Riemann bilinear relations~\eqref{eq:TRBRs}, but for a canonical choice of basis for the cohomology and homology groups such that the period matrix is given by the path-ordered exponential. We have
\beq\label{eq:canonical_TRBR}
\big(\bDelta\!^{-1}\big){}^T = {{\mathbb{P}}}_{\gamma}(\bx,\eps)^T\big(\bDelta\!^{-1}\big){}^T\check{\mathbb{P}}_{\gamma}(\bx,\eps)\,.
\eeq
Let us present the proof of eq.~\eqref{eq:canonical_TRBR}. Using eq.~\eqref{eq:A_dual_Delta}, we have
\beq\bsp\label{eq:can_TRBR_proof}
\check{\mathbb{P}}&_{\gamma}(\bx,\eps) = \mathbb{P}\exp\left[\eps\int_{\gamma}\bs{\check{A}}(\bx)\right]\\
&= \sum_{k=0}^{\infty}\sum_{1\le i_1,\ldots,i_k\le p}\eps^k\bs{\check{A}}_{i_1}\cdots\bs{\check{A}}_{i_k}\int_{\gamma}\omega_{i_1}\cdots\omega_{i_k}\\
&= \sum_{k=0}^{\infty}\sum_{1\le i_1,\ldots,i_k\le p}\eps^k\bDelta\!^T\bs{{A}}_{i_1}^T\cdots\bs{{A}}_{i_k}^T\big(\bDelta\!^{-1}\big){}^T\,(-1)^k\int_{\gamma}\omega_{i_1}\cdots\omega_{i_k}\\
&= \sum_{k=0}^{\infty}\sum_{1\le i_1,\ldots,i_k\le p}\eps^k\bDelta\!^T\big(\bs{{A}}_{i_k}\cdots\bs{{A}}_{i_1}\big)^T\big(\bDelta\!^{-1}\big){}^T\,(-1)^k\int_{\gamma}\omega_{i_1}\cdots\omega_{i_k}\\
&= \sum_{k=0}^{\infty}\sum_{1\le i_1,\ldots,i_k\le p}\eps^k\bDelta\!^T\big(\bs{{A}}_{i_k}\cdots\bs{{A}}_{i_1}\big)^T\big(\bDelta\!^{-1}\big){}^T\int_{\gamma^{-1}}\omega_{i_k}\cdots\omega_{i_1}\\
&=\bDelta\!^T{\mathbb{P}}_{\gamma^{-1}}(\bx,\eps)^T\big(\bDelta\!^{-1}\big){}^T\,,
\esp\eeq
where in the fourth equality we used the fact that for iterated integrals reversing the orientation of the path is equivalent to reversing the order of the differential forms, up to a sign~\cite{ChenSymbol},
\beq
\int_{\gamma^{-1}}\omega_{i_k}\cdots\omega_{i_1} = (-1)^k\int_{\gamma}\omega_{i_1}\cdots\omega_{i_k}\,.
\eeq
Inserting eq.~\eqref{eq:can_TRBR_proof} into the right-hand of eq.~\eqref{eq:canonical_TRBR}, and using the fact that
\beq
{\mathbb{P}}_{\gamma^{-1}}(\bx,\eps){\mathbb{P}}_{\gamma}(\bx,\eps) = \mathds{1}\,,
\eeq
we see that eq.~\eqref{eq:canonical_TRBR} follows.

We can now use eq.~\eqref{eq:canonical_TRBR} to obtain the explicit expression for the canonical homology intersection matrix $\bH_c$. Using eqs.~\eqref{eq:Cc_def} and~\eqref{eq:Cc_to_Delta}, we find,
\beq\bsp
\bH_c(\eps) &\,= \frac{1}{(2\pi i)^n}\,\mathbb{P}_{\gamma}(\bx,\eps)^T\big(\bU(\bx,\eps)^{-1}\big){}^T\big(\bC(\bx,\eps)^{-1}\big){}^T\bs{\check{U}}(\bx,\eps)^{-1}\check{\mathbb{P}}_{\gamma}(\bx,\eps)\\
&\,= \frac{1}{(2\pi i)^n}\,\mathbb{P}_{\gamma}(\bx,\eps)^T\big(\bC_c(\bx,\eps)^{-1}\big){}^T\check{\mathbb{P}}_{\gamma}(\bx,\eps)\\
&\,= \frac{1}{(2\pi i)^n}\,\mathbb{P}_{\gamma}(\bx,\eps)^T\big(\bDelta\!^{-1}\big){}^T\big(\bM(\eps)^{-1}\big){}^T\check{\mathbb{P}}_{\gamma}(\bx,\eps)\,.
\esp\eeq
From eq.~\eqref{eq:commutation}, we see that $\big(\bM(\eps)^{-1}\big){}^T$ and $\check{\mathbb{P}}_{\gamma}(\bx,\eps)$ commute, and using eq.~\eqref{eq:canonical_TRBR}, we get
\beq\bsp
\bH_c(\eps) 
&\,= \frac{1}{(2\pi i)^n}\big(\bDelta\!^{-1}\big){}^T\big(\bM(\eps)^{-1}\big){}^T = \frac{1}{(2\pi i)^n}\big(\bC_c(\eps)^{-1}\big){}^T\,.
\esp\eeq


\section{Constraining \texorpdfstring{$\eps$}{eps}-functions}
\label{sec:constraints}

In this section we present the main result of our work, namely we address the question of how to find polynomial relations among $\eps$-functions with coefficients in $\cF_\mathrm{ss}$. Throughout this section we follow the notations from the previous section, and we assume that we have determined a rotation $\bU(\bx,\eps)$ to a canonical basis and a corresponding canonical intersection matrix $\bDelta$. The entries of the rotation matrix $\bUt(\bx,\eps)$ are then the $\eps$-functions, obtained by solving a system of first-order differential equations. We also assume that we work in a self-dual scenario (e.g., we work on the maximal cut). Our main result can then be summarised as a set of polynomial relations that relate the $\eps$-functions. Note that all $\eps$-functions that appear on the maximal cut also contribute to the full Feynman integrals, so that our results also apply beyond the maximal cut. We will come back to non-maximal cuts in section~\ref{sec:non-maximal_cuts}.

\subsection{Constraining \texorpdfstring{$\eps$}{eps}-functions on the maximal cut}
We consider a self-dual system with $N$ master integrals, e.g., the system of differential equations satisfied by $N$ independent maximal cuts in a given sector. Assume that we have performed the rotations by the matrices $\bUss$ and $\bUeps$, and denote the differential equation matrix after these two rotations by $\tilde{\bs{\Omega}}$. In general the system will not be $\eps$-factorised at this point, but, as already mentioned, the highest power in $\eps$ appearing in the Laurent expansion of $\tilde{\bs{\Omega}}$ is 1. Hence, we can write
\begin{align}\label{deqbeforeUt}
\tilde{\bs{\Omega}}&=\big[\bU_\eps\bU_{\mathrm{\mathrm{ss}}}\bs{\Omega} +\rd(\bU_\eps\bU_{\mathrm{\mathrm{ss}}})\big]\bU_{\mathrm{\mathrm{ss}}}^{-1}\bU_\eps^{-1}=\sum_{k=k_\text{min}}^0\tilde{\bs{\Omega}}_k\eps^k+\eps \tilde{\bs{\Omega}}_1\,.
\end{align}
It is easy to see that we have
\beq\label{eq:A_to_Ut0}
\bA = \bUt^{(0)}\tilde{\bs{\Omega}}_1\bUt^{(0)-1}\,.
\eeq
The entries of $\bA$ are one-forms defined from the function field $\cFeps$, while those of $\tilde{\bs{\Omega}}_1$ are defined from the function field $\cF_\mathrm{ss}$. The non-trivial entries of $\bUt^{(0)}$ can be interpreted as a set of \emph{generators} for $\cFeps$, by which we mean that every element of $\cFeps$ can be written as a rational function in the elements of $\bUt^{(0)}$ and their derivatives with coefficients in $\cF_\mathrm{ss}$. 
We stress that this set of generators is by no means expected to be minimal, and there may be polynomial relations with coefficients in $\cF_\mathrm{ss}$ between these generators. In ref.~\cite{Duhr:2024uid} it was pointed out that eq.~\eqref{eq:A_dual_Delta}, which follows from self-duality, can be used to obtain polynomial relations between the generators (see also ref.~\cite{Pogel:2024sdi} for a related idea). Solving these relations allows one to reduce the number of generators needed to describe $\cFeps$. However, the constraints may be highly non-linear, and therefore it may be hard to obtain explicit solutions to them. One of the goals of this section is to identify another set of generators such that the constraints that follow from eq.~\eqref{eq:A_dual_Delta} can be reduced to linear relations.

We start by discussing the structure of the matrix $\bUt$ in more detail. After having performed the rotation with the matrix $\bUss$, we expect that the matrix $\bUt$ is unipotent.\footnote{We recall that a matrix $\bU$ is called \emph{unipotent} if $\bU-\mathds{1}$ is nilpotent, i.e., there is $k\ge 2$ such that $(\bU-\mathds{1})^k=0$.} This means that we can find a basis in which $\bUt$ is lower triangular with 1's on the diagonal. We denote by UT the  group of unipotent lower triangular $N\times N$ matrices. 
Empirically, we find that $\bUt$ has more structure. In applications, where the matrix $\bUt$ is determined by the method of refs.~\cite{Gorges:2023zgv,Duhr:2025lbz}, $\bUt$ contains additional zeroes below the diagonal. These zero entries are related to a good choice of initial basis, which, according to ref.~\cite{Duhr:2025lbz}, takes into account the mixed Hodge structure of the geometry attached to the maximal cut. We can phrase this more mathematically by saying that the method of refs.~\cite{Gorges:2023zgv,Duhr:2025lbz} determines a filtration $F_{\textrm{t},N}\subseteq F_{\textrm{t},N-1}\subseteq\ldots\subseteq F_{\textrm{t},0}$ on each sector, and this filtration captures the way the master integrals from the sector are allowed to mix under the rotation $\bUt$. The matrix $\bUt$ must therefore lie in the group $\Gpar$ of unipotent lower triangular matrices that preserve this filtration.\footnote{This is the so-called unipotent radical of the parabolic subgroup of UT attached to the filtration.}

Remarkably, we find that there is a connection between the group $\Gpar$ determined by the filtration and the canonical intersection matrix $\bDelta$.
To state this precisely, we consider the map $\varphi_{\Delta}$ defined in eq.~\eqref{eq:A_dual_Delta}. It is easy to check that, if $\bDelta$ is symmetric or antisymmetric (which is always satisfied in a  self-dual scenario), we have
\beq
\varphi_{\Delta}^2 = \textrm{id} \textrm{~~~and~~~} \varphi_{\Delta}(\bM_1\bM_2) = \varphi_{\Delta}(\bM_2)\varphi_{\Delta}(\bM_1)\,.
\eeq
We refer to the operation $\varphi_{\Delta}$ as the \emph{$\Delta$-transposition}.\footnote{The name refers to the fact that for $\bDelta=\mathds{1}$ we recover the standard transposition of matrices.}
The $\Delta$-transpose of a matrix from UT will typically no longer be an element from this group.
For us only the subgroup $\Gpar\subseteq \mathrm{UT}$ is relevant.
Based on the examples that we have studied, we find:
\begin{observation}\label{obs:Ut_in_UT}
The group $\Gpar$ is closed under $\Delta$-transposition, i.e., $\varphi_{\Delta}(\bU)\in\Gpar$, for all $\bU\in\Gpar$.
\end{observation}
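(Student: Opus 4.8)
The plan is to reduce Observation~\ref{obs:Ut_in_UT} to a single compatibility property between the canonical intersection matrix $\bDelta$ and the filtration defining $\Gpar$. Write the latter as a strictly decreasing flag $\mathbb{C}^N=W_0\supsetneq W_1\supsetneq\cdots\supsetneq W_s=\{0\}$, obtained by discarding the repetitions among the $F_{\textrm{t},j}$; since $\Gpar$ is the unipotent radical of the parabolic attached to this flag, it is characterised by
\beq\label{eq:Gpar-char}
\Gpar=\big\{\,\bU\in\mathrm{GL}_N\;:\;(\bU-\mathds{1})\,W_i\subseteq W_{i+1}\ \text{ for all }i\,\big\}\,.
\eeq
The first step is the elementary remark that, for $\bU\in\Gpar$, the transpose $(\bU-\mathds{1})^{T}$ sends the annihilator $W_{m}^{\circ}:=\{\xi\in(\mathbb{C}^N)^{*}:\xi|_{W_m}=0\}$ into $W_{m-1}^{\circ}$: for $\xi\in W_{m}^{\circ}$ and $v\in W_{m-1}$ one has $\big((\bU-\mathds{1})^{T}\xi\big)(v)=\xi\big((\bU-\mathds{1})v\big)=0$ because $(\bU-\mathds{1})v\in W_{m}$ by~\eqref{eq:Gpar-char}. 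On the other hand, from the definition of $\varphi_{\Delta}$ in eq.~\eqref{eq:A_dual_Delta} we have $\varphi_{\Delta}(\bU)-\mathds{1}=\bDelta\,(\bU-\mathds{1})^{T}\bDelta\!^{-1}$, so that, by~\eqref{eq:Gpar-char}, $\varphi_{\Delta}(\bU)\in\Gpar$ is equivalent to $(\bU-\mathds{1})^{T}\big(\bDelta\!^{-1}W_i\big)\subseteq\bDelta\!^{-1}W_{i+1}$ for all $i$, where $\bDelta\!^{-1}$ is read as the isomorphism $\mathbb{C}^N\to(\mathbb{C}^N)^{*}$ determined by the pairing.

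Putting the two remarks together, Observation~\ref{obs:Ut_in_UT} follows at once from the self-duality property
\beq\label{eq:sdual}
\bDelta\!^{-1}\,W_i=W_{s-i}^{\circ}\qquad\text{for all }i\,,
\eeq
i.e.\ that $\bDelta\!^{-1}$ carries the flag $(W_i)_i$ onto the annihilator flag $(W_i^{\circ})_i$ in reversed order; equivalently, the canonical intersection pairing $\bDelta$ identifies each graded piece $W_i/W_{i+1}$ with the dual of $W_{s-1-i}/W_{s-i}$. Indeed, granting~\eqref{eq:sdual} one gets $(\bU-\mathds{1})^{T}\big(\bDelta\!^{-1}W_i\big)=(\bU-\mathds{1})^{T}W_{s-i}^{\circ}\subseteq W_{s-i-1}^{\circ}=\bDelta\!^{-1}W_{i+1}$, which is exactly what is required. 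Note that~\eqref{eq:sdual} forces $\dim W_i+\dim W_{s-i}=N$, a symmetry that is manifest in all filtrations produced by the algorithm of refs.~\cite{Gorges:2023zgv,Duhr:2025lbz}. Finally, since $\bDelta$ is symmetric or antisymmetric in the self-dual scenario we have $\varphi_{\Delta}^{2}=\mathrm{id}$, and hence the inclusion $\varphi_{\Delta}(\Gpar)\subseteq\Gpar$ just established is in fact an equality.

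It remains to establish~\eqref{eq:sdual}, for which I see two complementary routes. The first is Hodge-theoretic: the starting basis advocated in ref.~\cite{Duhr:2025lbz} is aligned with the (mixed) Hodge structure carried by the piece of cohomology attached to the maximal cut, so that $F_{\textrm{t},\bullet}$ is (the pullback of) the corresponding Hodge/weight filtration while $\bDelta$ encodes the associated polarisation; the first twisted Riemann bilinear relations, eq.~\eqref{eq:TRBRs} (cf.\ refs.~\cite{Cho_Matsumoto_1995,Duhr:2024rxe}), then state precisely that the filtration is self-dual for this pairing, which is~\eqref{eq:sdual}. The second route uses eq.~\eqref{eq:A_phiA}: the canonical connection $\bA(\bx)$ produced by the algorithm has, with respect to the grading induced by $F_{\textrm{t},\bullet}$, a prescribed shape -- a filtration-preserving part together with a single grading-shifting ``Frobenius'' block -- and imposing $\bA(\bx)=\varphi_{\Delta}\big(\bA(\bx)\big)=\bDelta\,\bA(\bx)^{T}\bDelta\!^{-1}$ forces $\bDelta$ to be block anti-diagonal in this grading, which again is~\eqref{eq:sdual}. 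In either case one uses the essential-uniqueness lemma proved above: it is enough to exhibit \emph{one} solution $\bDelta$ of eq.~\eqref{eq:A_dual_Delta} with the required block structure, since any two solutions differ by a matrix commuting with $\bs{\check{A}}(\bx)^{T}$.

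The main obstacle is precisely~\eqref{eq:sdual}: both routes rest on structural inputs that are currently controlled only example by example -- the exact dictionary between the algorithmically constructed filtration and the Hodge-theoretic one (first route), or the exact normal form of $\bA(\bx)$ in a canonical basis (second route) -- which is why the statement is phrased as an observation rather than a theorem. Everything else (the reduction to~\eqref{eq:sdual}, the manipulations with annihilators, and the upgrade from inclusion to equality) is routine and uniform in $N$.
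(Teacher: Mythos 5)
First, a point of comparison: the paper does not prove this statement at all. Observation~\ref{obs:Ut_in_UT} is presented explicitly as an empirical observation --- the authors state that it holds in all examples they studied, that they expect it in general, that they have no formal proof, and that it is checked case by case. So there is no proof in the paper to measure your argument against; the relevant question is whether your proposal actually supplies one.

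It does not, and you say so yourself, but the reduction you give is sound and is genuinely more than the paper offers. The linear algebra is correct: characterising $\Gpar$ by $(\bU-\mathds{1})W_i\subseteq W_{i+1}$, using $\varphi_{\Delta}(\bU)-\mathds{1}=\bDelta\,(\bU-\mathds{1})^{T}\bDelta\!^{-1}$ and the annihilator shift $(\bU-\mathds{1})^{T}W_m^{\circ}\subseteq W_{m-1}^{\circ}$, you correctly show that closure under $\Delta$-transposition follows once $\bDelta\!^{-1}$ carries the flag onto its reversed annihilator flag. But that self-duality condition is precisely where all the content sits: it is essentially the paper's Observation~\ref{obs:anti-diag} (block anti-diagonality of $\bDelta$ adapted to $F_{\mathrm{t},p}$), which the paper likewise only records empirically, so you have traded one unproven observation for another. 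Neither of your two routes closes the gap. The Hodge-theoretic route needs the identification of the algorithmically produced filtration $F_{\mathrm{t},p}$ with the Hodge/weight filtration and of $\bDelta$ with the corresponding polarisation; in the paper and in ref.~\cite{Duhr:2025lbz} this alignment is a proposal supported by examples, not a theorem, and the twisted Riemann bilinear relations by themselves do not say anything about the algorithmic filtration. The second route needs a normal form for $\bA(\bx)$ with respect to the grading (filtration-preserving part plus a single shifting block) that is not established anywhere, and using eq.~\eqref{eq:A_phiA} to force block anti-diagonality of $\bDelta$ is delicate, since the shape of $\bA$ in the canonical basis is exactly the kind of structural statement one is trying to control; the essential-uniqueness lemma only helps once such a solution of eq.~\eqref{eq:A_dual_Delta} with the required block structure has actually been exhibited. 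So, judged as a proof, the key step is missing; judged as a reduction, it is valuable: it isolates a clean structural statement ($\bDelta$-self-duality of the flag, forcing $\dim W_i+\dim W_{s-i}=N$) from which Observation~\ref{obs:Ut_in_UT} follows by routine arguments, and it makes precise that the two empirical observations of the paper are not independent.
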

In the following we assume that this property holds.
We stress that, while this property holds for all examples we have studied and we expect it to hold in general, we do not have a formal proof. However, this property is easy to check on a case by case basis.
Note that we can phrase Observation~\ref{obs:Ut_in_UT} in the following equivalent way. Let $\UT$ denote the largest subgroup of UT that is closed under $\Delta$-transposition,
\beq
\UT = \big\{\bM\in\mathrm{UT}: \varphi_{\Delta}(\bM)\in \mathrm{UT}\big\}\,.
\eeq
Then Observation~\ref{obs:Ut_in_UT} is equivalent to the statement that $\Gpar$ is a subgroup of $\UT$.

We now explain how we can combine Observation~\ref{obs:Ut_in_UT} with eq.~\eqref{eq:A_dual_Delta} to obtain a set of generators for $\cFeps$ that satisfy relations that can be reduced to solving linear constraints. The main mathematical tool to identify this set of generators is the following result:
\begin{proposition}\label{prop:main_1}
Let $\bU\in\UT$. Then there are unique matrices $\bO,\bR\in\UT$ such that
\beq\label{eq:decomp}
\bU=\bO\bR\,,
\eeq
with
\beq\label{eq:sym_orth}
\varphi_{\Delta}(\bO) = \bO^{-1} \textrm{~~~and~~~}\varphi_{\Delta}(\bR) = \bR\,.
\eeq
\end{proposition}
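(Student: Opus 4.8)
The plan is to construct $\bO$ and $\bR$ explicitly by induction on the (finite) nilpotency degree of $\bU-\mathds{1}$, or equivalently by peeling off the decomposition one "layer" of the lower-triangular filtration at a time, and to prove uniqueness by a direct algebraic argument using the two relations $\varphi_\Delta^2=\mathrm{id}$ and $\varphi_\Delta(\bM_1\bM_2)=\varphi_\Delta(\bM_2)\varphi_\Delta(\bM_1)$. This is the exact analogue of the classical statement that in a group with an involution one can split elements into a "symmetric" and an "antisymmetric" (here: $\varphi_\Delta$-orthogonal) part, and the unipotent setting is what makes existence go through without analytic input.

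First I would set up uniqueness, since it guides the construction. Suppose $\bU=\bO\bR=\bO'\bR'$ with $\bO,\bO',\bR,\bR'\in\UT$ satisfying \eqref{eq:sym_orth}. Then $\bO'^{-1}\bO=\bR'\bR^{-1}=:\bS$. Applying $\varphi_\Delta$ and using that $\varphi_\Delta$ reverses products together with $\varphi_\Delta(\bR)=\bR$, $\varphi_\Delta(\bO)=\bO^{-1}$, one computes $\varphi_\Delta(\bS)=\varphi_\Delta(\bR'\bR^{-1})=\varphi_\Delta(\bR^{-1})\varphi_\Delta(\bR')=\bR^{-1}\bR'=\bS$ from the $\bR$-side, while from the $\bO$-side $\varphi_\Delta(\bS)=\varphi_\Delta(\bO'^{-1}\bO)=\varphi_\Delta(\bO)\varphi_\Delta(\bO'^{-1})=\bO^{-1}\bO'=\bS^{-1}$. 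Hence $\bS=\bS^{-1}$, so $\bS^2=\mathds{1}$; but $\bS\in\mathrm{UT}$ is unipotent, and the only unipotent matrix whose square is the identity is $\mathds{1}$ itself (write $\bS=\mathds{1}+\bs{N}$ with $\bs{N}$ nilpotent strictly lower triangular; $\bS^2=\mathds{1}$ forces $2\bs{N}+\bs{N}^2=0$, i.e. $\bs{N}(2\mathds{1}+\bs{N})=0$, and $2\mathds{1}+\bs{N}$ is invertible, so $\bs{N}=0$). Therefore $\bO=\bO'$ and $\bR=\bR'$.

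For existence, I would argue by induction on the lower central series / the filtration degree. Write $\bU=\mathds{1}+\bs{N}$ with $\bs{N}$ strictly lower triangular, and let $k$ be minimal with $\bs{N}^k=0$. For the base of a successive-approximation scheme, note that modulo terms of "depth" $\ge 2$ in $\bs{N}$ one has $\varphi_\Delta(\mathds{1}+\bs{M})=\mathds{1}-\varphi_\Delta^{\mathrm{lin}}(\bs{M})+\mathcal{O}(\bs{M}^2)$, where $\varphi_\Delta^{\mathrm{lin}}(\bs{M})=\bDelta\bs{M}^T\bDelta^{-1}$ is linear; so the conditions $\varphi_\Delta(\bO)=\bO^{-1}$ and $\varphi_\Delta(\bR)=\bR$ become, at leading order, "$\bO$ lies in the $-1$ eigenspace of $\varphi_\Delta^{\mathrm{lin}}$" and "$\bR$ lies in the $+1$ eigenspace," and any strictly-lower-triangular matrix splits uniquely into these two eigenspaces since $\varphi_\Delta^{\mathrm{lin}}$ is an involution on the (finite-dimensional) space of matrices. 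Crucially, Observation~\ref{obs:Ut_in_UT} guarantees that $\varphi_\Delta^{\mathrm{lin}}$ preserves the subspace of matrices whose associated unipotent element lies in $\Gpar$ (equivalently in $\UT$), so the eigenspace decomposition stays inside $\UT$. One then bootstraps: having matched the decomposition up to filtration degree $j$, the mismatch is pushed to degree $j+1$, where the same linear eigenspace splitting corrects it; since the filtration is finite this terminates and produces $\bO,\bR\in\UT$ exactly. Equivalently, and perhaps cleaner to write up, one can take logarithms: in the nilpotent Lie algebra $\mathfrak{n}$ of $\UT$ the map $\varphi_\Delta$ induces a Lie algebra involution (it is an anti-automorphism of the associative algebra, hence $X\mapsto -\varphi_\Delta^{\mathrm{lin}}(X)$ is a Lie algebra automorphism of order two), $\mathfrak{n}=\mathfrak{n}_+\oplus\mathfrak{n}_-$ splits into $\pm1$ eigenspaces, and one solves $\exp(\mathfrak{n}_-\text{-part})\cdot\exp(\mathfrak{n}_+\text{-part})=\bU$ order by order in the lower central series via Baker–Campbell–Hausdorff — again terminating because $\mathfrak{n}$ is nilpotent.

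The main obstacle is making sure every step stays inside $\Gpar$ (or $\UT$) rather than merely inside $\mathrm{UT}$: the eigenspace splitting of $\varphi_\Delta^{\mathrm{lin}}$ is automatic on all of $\mathfrak{gl}_N$, but one needs it to restrict to $\mathfrak{n}=\mathrm{Lie}(\UT)$, and this is precisely what Observation~\ref{obs:Ut_in_UT} provides (closure of $\Gpar\subseteq\UT$ under $\varphi_\Delta$, hence closure of the Lie algebra under the induced involution). A secondary technical point is checking that the BCH / successive-approximation recursion does not leak out of the filtration degrees — this is immediate from nilpotency but should be stated. Everything else is the routine linear algebra of involutions on a nilpotent group, and the finiteness coming from unipotency is what replaces any convergence argument.
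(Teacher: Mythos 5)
Your uniqueness argument contains a concrete error. Setting $\bS:=\bO'^{-1}\bO=\bR'\bR^{-1}$, your ``$\bR$-side'' computation gives $\varphi_{\Delta}(\bS)=\varphi_{\Delta}(\bR^{-1})\varphi_{\Delta}(\bR')=\bR^{-1}\bR'$, and you then assert this equals $\bS=\bR'\bR^{-1}$; since $\varphi_{\Delta}$ reverses products, $\bR^{-1}\bR'$ and $\bR'\bR^{-1}$ are not equal unless $\bR$ and $\bR'$ commute, which is not known. So the chain leading to $\bS=\bS^{-1}$ breaks down as written. The argument is repairable: equating the two valid expressions, $\varphi_{\Delta}(\bS)=\bR^{-1}\bR'$ and $\varphi_{\Delta}(\bS)=\bS^{-1}=\bR\bR'^{-1}$, yields $\bR'^2=\bR^2$, and then the uniqueness of the unipotent lower-triangular square root forces $\bR=\bR'$, hence $\bO=\bO'$. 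This is essentially how the paper argues (its third lemma item reduces $\bO\bM\in\SUT$ to $(\bO\bM)^2=\bM^2$ and invokes the same square-root uniqueness), but you need the general statement $\bR^2=\bR'^2\Rightarrow\bR=\bR'$, not just the special case ``a unipotent involution is the identity.''

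On existence, your route (split the nilpotent Lie algebra $\NT$ into $\pm1$ eigenspaces of $\varphi_{\Delta}$ and solve $\bU=\bO\bR$ order by order via BCH or successive approximation) is sound in spirit and close to the paper's Lie-algebraic setup, but it is left as a sketch: you never verify that the correction at each filtration degree stays in the right eigenspace and that the recursion closes. The paper avoids any induction by a direct construction: since $\varphi_{\Delta}(\bU)\bU\in\SUT=\exp(\Np)$, write $\varphi_{\Delta}(\bU)\bU=\exp(\bX)$ with $\bX\in\Np$, set $\bR=\exp(\bX/2)$ and $\bO=\bU\bR^{-1}$, and check in one line that $\varphi_{\Delta}(\bO)\bO=\exp(-\bX/2)\exp(\bX)\exp(-\bX/2)=\mathds{1}$. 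This is both shorter and what underlies the practical recipe $\bR=\sqrt{\varphi_{\Delta}(\bU)\bU}$ used later in the paper. Finally, a minor point: you present Observation~\ref{obs:Ut_in_UT} as crucial for the proof, but the proposition is stated for $\UT$, which is closed under $\varphi_{\Delta}$ by its very definition (as is $\NT$); the Observation is only needed to place $\bUt^{(0)}\in\Gpar$ inside $\UT$ when applying the proposition, not for the proposition itself.
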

The proof of Proposition~\ref{prop:main_1} will be presented in section~\ref{sec:proof}. 
Since Observation~\ref{obs:Ut_in_UT} implies that $\Gpar$ is a subgroup of $\UT$, the decomposition from Proposition~\ref{prop:main_1} also applies to $\Gpar$. We refer to matrices that satisfy the properties in eq.~\eqref{eq:sym_orth} as \emph{$\Delta$-orthogonal} and \emph{$\Delta$-symmetric}, respectively. It is easy to see that {$\Delta$-orthogonal} matrices form a subgroup of $\UT$, and this subgroup can be described as those matrices from $\UT$ that preserve the scalar product with Gram matrix $\bDelta\!^{-1}$ defined by the (inverse of the) canonical intersection matrix. We define,
\begin{align}
\OT&:=\UT\cap \textrm{O}\big(\Delta\!^{-1}\big) = \left\{\bM\in \UT:\varphi_{{\Delta}}(\bM)=\bM ^{-1}\right\}\,,
\end{align}
where $\textrm{O}\big(\Delta\!^{-1}\big)$ is the orthogonal group of $\bDelta\!^{-1}$,
\beq\label{eq:Orthgroup}
\textrm{O}\big(\Delta\!^{-1}\big) := \left\{\bM: \bM^T\bDelta\!^{-1}\bM=\bDelta\!^{-1}\right\}\,.
\eeq
We also define the set of $\Delta$-symmetric matrices,
\begin{align}
\SUT&:=\left\{\bM\in \UT:\varphi_{{\Delta}}(\bM)=\bM \right\}\,.
\end{align}
Note that, unlike $\OT$, $\SUT$ is not a subgroup of $\UT$. Finally, we note that the splitting in eq.~\eqref{eq:decomp} depends on the form of the canonical intersection matrix $\bDelta$, which itself depends on the precise choice of canonical basis. The group $\Gpar$, however, is independent of the precise choice of canonical basis,  and so for each choice of basis and $\bDelta$, we obtain a different splitting. We will comment on the way the splitting depends on the basis choice in section~\ref{sec:consequences}.

 We now discuss how we can apply the decomposition from Proposition~\ref{prop:main_1} to constrain $\eps$-functions. Before we do that, let us explain how we can compute the decomposition in eq.~\eqref{eq:decomp} in practice. Using eqs.~\eqref{eq:decomp} and~\eqref{eq:sym_orth} we immediately see that
\beq
\bR^2 = \varphi_\Delta(\bU)\bU\,.
\eeq
Thus, $\bR$ is a matrix square root of $\varphi_\Delta(\bU)\bU$, and it can be a complicated task to find all square roots. However, we know that both $\bR$ and $\varphi_\Delta(\bU)\bU$ are unipotent lower-triangular matrices, and every unipotent lower-triangular matrix has a unique square root that is itself unipotent and lower-triangular. We denote this square root in the following by $\sqrt{\varphi_\Delta(\bU)\bU}$. It can be computed in an algorithmic way by solving linear equations inductively in the number of rows and the number of lower diagonals. Hence, we see that we can easily compute $\bR$ as
\beq
\bR = \sqrt{\varphi_\Delta(\bU)\bU}\,.
\eeq
The $\Delta$-orthogonal matrix $\bO$ can then be obtained as
\beq
\bO = \bU\Big[\sqrt{\varphi_\Delta(\bU)\bU}\Big]^{-1}\,.
\eeq

Let us discuss the implications of Proposition~\ref{prop:main_1} for constraints on the $\eps$-functions. First, if Observation~\ref{obs:Ut_in_UT} holds, we may apply Proposition~\ref{prop:main_1} to $\bUt^{(0)}\in \Gpar$ to write
\beq\label{eq:Ut0_decomp}
\bUt^{(0)} = \bO\bR\,,
\eeq
where $\bO$ and $\bR$ are $\Delta$-orthogonal and $\Delta$-symmetric respectively.
We now claim that the entries of the $\Delta$-symmetric part $\bR$ actually lie in $\cF_\mathrm{ss}$. To see this, we insert the decomposition in eq.~\eqref{eq:Ut0_decomp} into eqs.~\eqref{eq:A_phiA} and~\eqref{eq:A_to_Ut0} to obtain
\begin{equation}
\bs{\Delta}\big(\bs{O}^{-1}\big)^T\big(\bR^{-1}\big)^T\Big(\tilde{\bs{\Omega}}_1^T\bs{R}^T\bs{O}^{T}\Big)\bs{\Delta}^{-1}
=\bs{O}\bs{R}\tilde{\bs{\Omega}}_1\bs{R}^{-1}\bs{O}^{-1}\notag\,.
\end{equation}
Using eq.~\eqref{eq:sym_orth}, we find
\begin{align}\label{constrantR}
\bs{R}^2 \tilde{\bs{\Omega}}_1=\varphi_{{\Delta}}(\tilde{\bs{\Omega}}_1)\bs{R}^2\,.
\end{align}
Let us interpret this equation. We see that $\bs{O}$ has dropped out, and we know that the entries of $\tilde{\bs{\Omega}}_1$ are one-forms defined from elements in $\cF_\mathrm{ss}$. We may thus interpret eq.~\eqref{constrantR} as a linear system for the entries of $\bR^2$, which in turn are polynomials in $\eps$-functions.
In particular, the matrix $\bR$ is fully fixed\footnote{We assume that the system in eq.~\eqref{constrantR} is of full rank, which is true in all examples we have studied. Note that, unlike what it may seem, the linear system for the entries of $\bR^2$ is not homogeneous, because $\bR^2$ is unipotent and lower-triangular.} in terms of the entries of $\tilde{\bs{\Omega}}_1$, i.e., (derivatives of) periods and algebraic functions. Once $\bR^2$ has been determined by solving the linear system in eq.~\eqref{constrantR}, we can easily compute its unipotent lower-triangular square root. Since this step also only involves solving linear constraints, we see that, using the decomposition from Proposition~\ref{prop:main_1}, we can solve the constraints from eq.~\eqref{eq:A_phiA} by solving only linear constraints.

\subsection{The proof of Proposition~\ref{prop:main_1}}
\label{sec:proof}

In this subsection we present the proof of Proposition~\ref{prop:main_1}. The content of this section does not interfere with the remainder of the paper, and the reader may skip it if he or she is not interested in the details of the proof.

We start by noting that the group UT of all unipotent lower-triangular $N\times N$ matrices is a Lie group. Its Lie algebra is the Lie algebra NT of all nilpotent, strictly lower-triangular $N\times N$ matrices. In other words, the elements of NT have the form
\beq
\left(\begin{array}{ccc}
    0 &\cdots  & 0 \\ 
    *    & \ddots\, \, &\vdots \, \, \\ 
*&*&0
    \end{array}\right)\,.
    \eeq
Similar to the case of UT, NT is in general not closed under $\Delta$-transposition. We define $\NT$ as the largest subspace of NT that is closed,
\beq
\NT :=\left\{\bX\in \textrm{NT}:\varphi_{\Delta}(\bX)\in\textrm{NT}\right\}\,.
\eeq
It is easy to see that $\NT$ is a Lie subalgebra of NT. It is in fact the Lie algebra of the Lie group $\UT$. The exponential map
\beq
\exp : \NT \to \UT; \quad \bX \mapsto \exp(\bX)\,.
\eeq
is surjective (because both the series expansions for the exponential and logarithm maps terminate for nilpotent matrices).

The $\Delta$-transposition acts linearly on $\NT$. Since $\varphi_{\Delta}$ is involutive, its eigenvalues are $\pm1$, and we can decompose $\NT$ into even and odd eigenspaces,
\begin{align}\label{LieDecomp}
\NT=\Np\oplus \Nm\,,
\end{align}
with
\begin{align}
\text{NT}^{{\Delta}\pm}:=\left\{\bX\in \NT\,|\,\varphi_{\Delta}(\bX)=\pm \bX\right\}\,.
\end{align}
Note that $\Nm$ is a Lie subalgebra of $\NT$, but $\Np$ is not closed under the Lie bracket.
From the injectivity of the exponential map it follows that the $\Delta$-orthogonal and $\Delta$-symmetric matrices can be written as the exponential of Lie algebra elements in $\Nm$  or $\Np$, respectively,
\begin{align} \label{expmap}
\OT&=\exp(\Nm)\,\quad \text{ and }\quad \SUT=\exp(\Np)\,.
\end{align}
Proposition~\ref{prop:main_1} states that every element from $\UT$ can be decomposed in a unique way into a product of matrices from $\OT$ and $\SUT$. In essence, at the level of Lie algebras, this corresponds to the decomposition into the even and odd eigenspaces in eq.~\eqref{LieDecomp}. However, due to the non-commuting nature of the matrices, we cannot immediately lift this statement to the level of the Lie groups via the exponential map, but we need some further prerequisites.
\begin{lemma}
\begin{enumerate}
    \item 
\textit{For every $\bM\in \mathrm{UT}^{\Delta}$  there exists an $\bs{X}\in \mathrm{NT}^{\Delta+}$ such that }
\begin{align}\label{req1}
\varphi_{\Delta}(\bM)\bM=\exp(\bs{X})\,.
\end{align}
\item \textit{The intersection of the sets of $\Delta$-symmetric and $\Delta$-orthogonal matrices is the identity}
\begin{align}\label{req2}
\mathrm{ST}^{\Delta}\cap\mathrm{OT}^{\Delta}=\{\mathds{1}\}\,.
\end{align}
\item \textit{If $\bM \in \mathrm{ST}^{\Delta}$ and $\bs{O} \in \mathrm{OT}^{\Delta}$ then}
\begin{align}\label{req3}
\bs{O}\bM\in \mathrm{ST}^{\Delta} \textrm{~~~iff.~~~} \bs{O}=\mathds{1}\,.
\end{align}
\end{enumerate}
\end{lemma}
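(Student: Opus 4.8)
The plan is to run everything through the exponential dictionary between the Lie group $\UT$ and its Lie algebra $\NT$. Because all the matrices in sight are unipotent, respectively nilpotent, strictly lower-triangular, the map $\exp\colon\NT\to\UT$ is a bijection with polynomial inverse $\log$, and it intertwines the action of $\varphi_{\Delta}$ on the group with its action on the algebra; together with the fact that $\varphi_{\Delta}$ is an involutive anti-automorphism (so $\varphi_{\Delta}^{2}=\mathrm{id}$, $\varphi_{\Delta}(\bM_{1}\bM_{2})=\varphi_{\Delta}(\bM_{2})\varphi_{\Delta}(\bM_{1})$, hence also $\varphi_{\Delta}(\bM^{-1})=\varphi_{\Delta}(\bM)^{-1}$), this reduces each of the three statements to a short computation. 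The identities $\SUT=\exp(\Np)$ and $\OT=\exp(\Nm)$ of eq.~\eqref{expmap} are the first instances of this dictionary and will be used freely.

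For point~1 I would set $\bN:=\varphi_{\Delta}(\bM)\bM$. It lies in $\UT$ (a product of two group elements) and is $\Delta$-symmetric, since $\varphi_{\Delta}(\bN)=\varphi_{\Delta}(\bM)\,\varphi_{\Delta}(\varphi_{\Delta}(\bM))=\varphi_{\Delta}(\bM)\bM=\bN$; hence $\bN\in\SUT=\exp(\Np)$, i.e. $\bN=\exp(\bs{X})$ for some $\bs{X}\in\Np$, which is the claim. For point~2, any $\bM\in\SUT\cap\OT$ satisfies $\bM=\varphi_{\Delta}(\bM)=\bM^{-1}$, so $\bM^{2}=\mathds{1}$; writing $\bM=\exp(\bs{X})$ this forces $\exp(2\bs{X})=\exp(0)$, so $\bs{X}=0$ and $\bM=\mathds{1}$ (equivalently, $(\bM-\mathds{1})(\bM+\mathds{1})=0$ while $\bM+\mathds{1}=2\mathds{1}+(\bM-\mathds{1})$ is invertible).

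Point~3 is the only one that needs a genuine idea. The direction $\bO=\mathds{1}\Rightarrow\bO\bM\in\SUT$ is trivial, so I would assume $\bO\bM\in\SUT$ and apply $\varphi_{\Delta}$: using $\varphi_{\Delta}(\bM)=\bM$ and $\varphi_{\Delta}(\bO)=\bO^{-1}$ one gets $\bM\bO^{-1}=\varphi_{\Delta}(\bO\bM)=\bO\bM$, and multiplying by $\bM^{-1}$ on the right, $\bM\bO^{-1}\bM^{-1}=\bO$. Passing to the Lie algebra, write $\bO=\exp(\bs{A})$ with $\bs{A}\in\Nm$; since $\mathrm{Ad}(\bM)$ commutes with $\exp$ and $\exp$ is injective on $\mathrm{NT}$, the identity becomes $\mathrm{Ad}(\bM)\,\bs{A}=-\bs{A}$. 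The decisive point is that $\bM$ is unipotent, so $\mathrm{Ad}(\bM)=\exp(\mathrm{ad}_{\log\bM})$ is a unipotent linear operator on $\mathrm{NT}$: its only eigenvalue is $1$, equivalently $\mathrm{Ad}(\bM)+\mathrm{id}=2\,\mathrm{id}+(\text{nilpotent})$ is invertible, so $\mathrm{Ad}(\bM)\,\bs{A}=-\bs{A}$ forces $\bs{A}=0$, i.e. $\bO=\mathds{1}$.

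I expect the main obstacle to be precisely this last step: points~1 and~2 are forced by the anti-automorphism property and injectivity of $\exp$ alone, whereas point~3 needs the extra observation that conjugation by a unipotent matrix can never produce $-1$ as an eigenvalue. The remaining care is purely bookkeeping --- checking that every matrix appearing along the way ($\varphi_{\Delta}(\bM)\bM$, $\bM\bO^{-1}\bM^{-1}$, $\mathrm{Ad}(\bM)\bs{A}$) is again unipotent, respectively strictly lower-triangular, so that the $\exp/\log$ correspondence and its injectivity really apply; this holds automatically because $\mathrm{UT}$ is closed under products and inverses and conjugation by $\mathrm{UT}$ preserves $\mathrm{NT}$.
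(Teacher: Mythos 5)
Your proof is correct, and for points~1 and~2 it is essentially the paper's argument: $\varphi_{\Delta}(\bM)\bM$ is fixed by $\varphi_{\Delta}$, hence lies in $\SUT=\exp(\Np)$, and an element of $\SUT\cap\OT$ is forced to be $\mathds{1}$ (the paper phrases this as the exponent lying in $\Np\cap\Nm=\{\bs{0}\}$, you phrase it via $\bM^2=\mathds{1}$ and injectivity of $\exp$ on nilpotents, or invertibility of $\bM+\mathds{1}$ --- equivalent in substance). The only genuine divergence is the finishing step of point~3. Both you and the paper derive the same key relation $\bO\bM=\bM\bO^{-1}$ from applying $\varphi_{\Delta}$ to $\bO\bM\in\SUT$; the paper then squares it to get $(\bO\bM)^2=\bM^2$ and invokes the uniqueness of the unipotent square root of a unipotent matrix to conclude $\bO\bM=\bM$, whereas you rewrite it as $\mathrm{Ad}(\bM)\bs{A}=-\bs{A}$ for $\bO=\exp(\bs{A})$, $\bs{A}\in\Nm$, and use that $\mathrm{Ad}(\bM)$ is a unipotent operator (so $\mathrm{Ad}(\bM)+\mathrm{id}$ is invertible and $-1$ cannot be an eigenvalue), forcing $\bs{A}=0$. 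The two arguments exploit the same underlying fact --- unipotency excludes any order-two behaviour --- but the paper's square-root version is shorter and reuses a tool ($\sqrt{\varphi_{\Delta}(\bU)\bU}$ and its uniqueness) that it needs anyway for the constructive decomposition in Proposition~1, while your adjoint-action version avoids appealing to uniqueness of square roots altogether and makes the spectral obstruction explicit, at the cost of a little extra Lie-theoretic bookkeeping (checking that conjugation by $\mathrm{UT}$ preserves $\mathrm{NT}$ and that $\exp$ is injective there), which you do address.
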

\begin{proof} Equation~\eqref{req1} follows from the observation that $\varphi_{\Delta}(\bM)\bM\in \SUT$. From eq.~\eqref{expmap} it follows that there is $\bX\in \Np$ such that $\exp(\bX)=\varphi_{\Delta}(\bM)\bM$.

To show eq.~\eqref{req2},  we observe that if $\bs{M}\in \SUT\cap\OT$, it follows from eq.~\eqref{expmap} that 
$\bs{M}=\exp(\bs{X})$
with $\bs{X}\in\Np\cap\Nm=\{\bs{0}\}$,
such that $\bs{M}=\mathds{1}$.

To prove eq.~\eqref{req3}, we notice that, if $\bs{O}\bM\in \SUT$, then
\begin{align}
\bs{O}\bM=\varphi_{\Delta}(\bs{O}\bM)=\varphi_{\Delta}(\bM)\varphi_{\Delta}(\bs{O})=\bM \bs{O}^{-1}\,,
\end{align}
such that $(\bs{O}\bM)^2=\bM^2$. Since $\bM$ and $\bO$ are unipotent, and since the unipotent square root is unique, we conclude that $\bs{O}\bM=\bM$ and $\bs{O}=\mathds{1}.$
\end{proof}

The pivotal step in proving Proposition~\ref{prop:main_1} is the following statement.
\begin{proposition}\label{prop:main_2} {The map $f$ defined by}
\begin{align}\label{bijection}
f:\mathrm{OT}^{\Delta} \times \mathrm{NT}^{\Delta_+}& \rightarrow \mathrm{UT}^{\Delta}, \quad  (\bs{O},\bs{N})\mapsto\bs{O}\exp(\bs{N}) \,,
\end{align}
{is a bijection.} 
\end{proposition}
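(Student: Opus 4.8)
The plan is to prove that $f$ is a bijection by establishing surjectivity and injectivity separately, using the three parts of the preceding Lemma as the main tools. First I would address \textbf{surjectivity}: given $\bU\in\UT$, I want to produce $\bs{O}\in\OT$ and $\bs{N}\in\Np$ with $\bU=\bs{O}\exp(\bs{N})$. Using part~1 of the Lemma, $\varphi_\Delta(\bU)\bU\in\SUT$, so there is $\bs{X}\in\Np$ with $\varphi_\Delta(\bU)\bU=\exp(\bs{X})$. The natural candidate for the $\Delta$-symmetric factor is $\bR:=\exp(\bs{X}/2)$, which lies in $\SUT$ (since $\bs{X}/2\in\Np$) and satisfies $\bR^2=\varphi_\Delta(\bU)\bU$; here I use that $\Np$ is a linear subspace, so it is closed under scalar multiplication. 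Then I set $\bs{O}:=\bU\bR^{-1}$ and must check $\bs{O}\in\OT$, i.e. $\varphi_\Delta(\bs{O})=\bs{O}^{-1}$. Computing: $\varphi_\Delta(\bs{O})=\varphi_\Delta(\bU\bR^{-1})=\varphi_\Delta(\bR^{-1})\varphi_\Delta(\bU)=\bR^{-1}\varphi_\Delta(\bU)$, using that $\varphi_\Delta$ is an anti-homomorphism and $\varphi_\Delta(\bR)=\bR$. On the other hand $\bs{O}^{-1}=\bR\bU^{-1}$, so the claim $\varphi_\Delta(\bs{O})=\bs{O}^{-1}$ becomes $\bR^{-1}\varphi_\Delta(\bU)=\bR\bU^{-1}$, i.e. $\varphi_\Delta(\bU)\bU=\bR^2$, which is exactly how $\bR$ was constructed. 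One subtlety I would flag: I must confirm $\bs{O}$ (and $\bR$) actually lie in $\mathrm{UT}$, not just in $\mathrm{GL}_N$; this follows because $\mathrm{UT}$ is a group containing $\bU$ and $\bR=\exp(\bs{X}/2)$ (as $\bs{X}/2\in\NT$, the Lie algebra of $\mathrm{UT}$), so $\bs{O}=\bU\bR^{-1}\in\mathrm{UT}$ and then $\varphi_\Delta(\bs{O})=\bs{O}^{-1}\in\mathrm{UT}$ too, placing $\bs{O}\in\UT$ hence in $\OT$.

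Next I would handle \textbf{injectivity}. Suppose $f(\bs{O}_1,\bs{N}_1)=f(\bs{O}_2,\bs{N}_2)$, i.e. $\bs{O}_1\exp(\bs{N}_1)=\bs{O}_2\exp(\bs{N}_2)$. Writing $\bR_i:=\exp(\bs{N}_i)\in\SUT$, we get $\bs{O}_2^{-1}\bs{O}_1=\bR_2\bR_1^{-1}$. The left-hand side lies in $\OT$ (a subgroup), and the strategy is to show the right-hand side being in $\OT$ forces it to be the identity. Applying $\varphi_\Delta$ to $\bs{O}_2^{-1}\bs{O}_1=\bR_2\bR_1^{-1}$ and using the anti-homomorphism property plus $\varphi_\Delta(\bs{O}_i)=\bs{O}_i^{-1}$ and $\varphi_\Delta(\bR_i)=\bR_i$, one finds $\bs{O}_1^{-1}\bs{O}_2=\bR_1^{-1}\bR_2$, which is the inverse of the original relation — consistent but not yet conclusive. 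The cleaner route is: set $\bs{O}:=\bs{O}_2^{-1}\bs{O}_1\in\OT$ and $\bM:=\bR_1^{-1}$, $\bM':=\bR_2$; then $\bs{O}=\bM'\bM^{-1}$... Actually the simplest packaging is to invoke part~3 of the Lemma directly: from $\bs{O}_1\bR_1=\bs{O}_2\bR_2$ we get $\bs{O}:=\bs{O}_2^{-1}\bs{O}_1 = \bR_2\bR_1^{-1}$. Now I claim $\bR_2\bR_1^{-1}\in\SUT$: indeed $\varphi_\Delta(\bR_2\bR_1^{-1})=\varphi_\Delta(\bR_1^{-1})\varphi_\Delta(\bR_2)=\bR_1^{-1}\bR_2$, which need not equal $\bR_2\bR_1^{-1}$ in general, so this requires a small extra argument. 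I would instead argue via squares, mirroring the proof of part~3: since $\bs{O}=\bR_2\bR_1^{-1}\in\OT$, we have $\varphi_\Delta(\bs{O})=\bs{O}^{-1}$, i.e. $\bR_1^{-1}\bR_2=\bR_1\bR_2^{-1}$, hence $(\bR_2\bR_1^{-1})(\bR_1^{-1}\bR_2)=(\bR_2\bR_1^{-1})(\bR_1\bR_2^{-1})=\mathds{1}$ — wait, let me instead use $\bs{O}^2=\bR_2\bR_1^{-1}\cdot\bR_1\bR_2^{-1}=\mathds{1}$, obtained by substituting $\bs{O}^{-1}=\bR_1\bR_2^{-1}$; but $\bs{O}^{-1}=\bR_1\bR_2^{-1}$ is just the inverse of $\bs{O}=\bR_2\bR_1^{-1}$, so this is automatic and gives nothing. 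The genuinely useful relation is $\varphi_\Delta(\bs{O})=\bs{O}^{-1}$ \emph{together with} $\varphi_\Delta(\bs{O})=\bR_1^{-1}\bR_2$ (computed above) and $\bs{O}^{-1}=\bR_1\bR_2^{-1}$, giving $\bR_1^{-1}\bR_2=\bR_1\bR_2^{-1}$, i.e. $\bR_1^{-2}=\bR_2^{-2}$, hence $\bR_1^2=\bR_2^2$. Since $\bR_1,\bR_2$ are unipotent lower-triangular and the unipotent lower-triangular square root is unique (as used already in the paper and in the proof of part~3 of the Lemma), $\bR_1=\bR_2$, and therefore $\bs{N}_1=\bs{N}_2$ by injectivity of $\exp$ on $\NT$, and then $\bs{O}_1=\bs{O}_2$.

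For \textbf{well-definedness of the codomain}, I should also verify the forward direction: for any $\bs{O}\in\OT$ and $\bs{N}\in\Np$, the product $\bs{O}\exp(\bs{N})$ actually lies in $\UT$. Both factors are in $\mathrm{UT}$ (for $\exp(\bs{N})$ because $\bs{N}\in\NT$), so the product is in $\mathrm{UT}$; and $\varphi_\Delta(\bs{O}\exp(\bs{N}))=\varphi_\Delta(\exp(\bs{N}))\varphi_\Delta(\bs{O})=\exp(\bs{N})\bs{O}^{-1}\in\mathrm{UT}$, so $\bs{O}\exp(\bs{N})\in\UT$, confirming $f$ maps into $\UT$ as claimed.

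\textbf{Main obstacle.} The routine parts — the anti-homomorphism and involutivity identities for $\varphi_\Delta$, surjectivity/injectivity of $\exp$ on nilpotent matrices, uniqueness of the unipotent square root — are all either already established in the excerpt or standard. I expect the only real subtlety to be the bookkeeping of which products land in which of $\mathrm{UT}$, $\UT$, $\OT$, $\SUT$ after applying $\varphi_\Delta$ (since $\varphi_\Delta$ reverses order and does not in general preserve $\mathrm{UT}$), and making sure the scalar $\tfrac12$ used in $\bR=\exp(\bs{X}/2)$ is legitimate — which it is precisely because $\Np$ is a \emph{linear} subspace even though it is not a Lie subalgebra. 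In particular the proof crucially uses that $\Np$ is closed under scalar multiplication (to halve $\bs{X}$) even though part~3 of the Lemma's proof only needs closure of $\SUT$ under $\varphi_\Delta$; I would make this reliance explicit.
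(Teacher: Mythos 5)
Your proposal is correct and follows essentially the same route as the paper: surjectivity via part~1 of the Lemma with $\bR=\exp(\bX/2)$ and $\bO=\bU\bR^{-1}$, checking $\varphi_\Delta(\bO)=\bO^{-1}$ exactly as in the paper's computation. For injectivity you re-derive inline the square-root argument ($\bR_1^2=\bR_2^2$ plus uniqueness of the unipotent lower-triangular square root) that the paper has already packaged as part~3 of the Lemma and simply invokes, so the content is the same, just slightly less economically organised.
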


\begin{proof} We start by showing the surjectivity of $f$. Take an element $\bM\in \UT$. According to eq.~\eqref{req1}, there exists a corresponding $\bX\in \Np$ with $\varphi_{\Delta}(\bM)\bM=\exp(\bX)$. We define $\bs{R}\in \SUT$ with $\bs{R}=\exp(\bX/2)$ and a matrix $\bs{\bs{O}}\in \UT$ as $\bs{O}=\bM \bs{R}^{-1}$.
We have
\begin{align}
\varphi_{\Delta}(\bs{O})\bs{O}&=\varphi_{\Delta}(\bs{R}^{-1})\varphi_{\Delta}(\bM)\bM \bs{R}^{-1}=\exp(-\bX/2)\exp(\bX)\exp(-\bX/2)=\mathds{1}\,.
\end{align}
It follows that $\bO$ is $\Delta$-orthogonal,
i.e., $\bs{O}\in \OT$. We conclude that for every element $\bM \in \UT$ there exists a pair $(\bs{O},\bs{N})\in\OT \times \Np$ with $\bO\in\OT$ and $\bs{R}=\exp(\bs{N})\in\SUT$  such that $\bM=\bs{O}\bR$.

Let us now prove injectivity of $f$.
Consider two pairs $(\bs{O}_1,\bs{N}_1)$ and $(\bs{O}_2,\bs{N}_2)$ and assume that $f(\bs{O}_1,\bs{N}_1)=f(\bs{O}_2,\bs{N}_2)$. It follows that
\begin{align}
 \bs{O}_1^{-1}\bs{O}_2 \exp(\bs{N}_2)&=\exp(\bs{N}_1)\in\SUT\,.\notag
\end{align}
From $\bs{O}_1^{-1}\bs{O}_2\in\OT$ and eq.~\eqref{req3} we conclude $\bs{O}_1^{-1}\bs{O}_2 =\mathds{1}$ such that $\bs{O}_1=\bs{O}_2$ and $\exp(\bs{N}_1)=\exp(\bs{N}_2)$, i.e., $\bs{N}_1=\bs{N}_2$. Accordingly, $f$ is a bijection.
\end{proof}

Proposition~\ref{prop:main_1} is now a simple corollary of Proposition~\ref{prop:main_2}. More precisely, the decomposition from Proposition~\ref{prop:main_1} is equivalent to the surjectivity of the map $f$. Indeed, let $\bU\in\UT$. Then by the surjectivity of $f$, there is $(\bs{O},\bs{N})\in\OT \times \Np$ such that 
\beq
\bU = f(\bs{O},\bs{N}) = \bO\bR\,,
\eeq
with $\bR = \exp(\bN)$. The injectivity implies the uniqueness of the decomposition.

\subsection{Some consequences}
\label{sec:consequences}
We present in this subsection some consequences of the decomposition in Proposition~\ref{prop:main_1}.

\paragraph{Coset decompositions for $\UT$.}
 Proposition~\ref{prop:main_2} can also be interpreted as stating that there is a bijection between the right-cosets of $\OT$ in $\UT$ and the set $\Np$ of $\Delta$-symmetric nilpotent matrices. Since $\Np$ maps to $\SUT$ under the exponential map (cf.~eq.~\eqref{expmap}), we get the following explicit description of the right-coset space:
\beq
\OT\Big\backslash\UT \simeq \SUT\,.
\eeq
We have already observed in the previous subsection that $\SUT$ is not a subgroup of $\UT$, and consequently $\OT$ is not a normal subgroup of $\UT$. Hence, the left- and right-cosets are not identical. It is, however, also possible to describe the left-cosets explicitly. Indeed, we have
\beq
\bU = \bO\bR = \bO\bR\bO^{-1}\bO\,,
\eeq
and
\beq
\varphi_{\Delta}\big(\bO\bR\bO^{-1}\big) = \varphi_{\Delta}\big(\bO^{-1}\big)\varphi_{\Delta}(\bR)\varphi_{\Delta}(\bO) = \bO\bR\bO^{-1}\,.
\eeq
Hence, there is $\bO\in\OT$ and $\bR'=\bO\bR\bO^{-1}\in\SUT$ such that $\bU=\bR'\bO$. Using the same arguments as in the proof of Proposition~\ref{prop:main_1}, we can see that this decomposition
is unique, and we find a description of the left-cosets in terms of $\Delta$-symmetric matrices,
\beq
\UT\Big/\OT \simeq \SUT\,.
\eeq

\paragraph{An upper bound on the number of generators of $\cFeps$.}

Via Proposition~\ref{prop:main_1}, rather than choosing the non-trivial entries of $\bUt^{(0)}$ as the generators of $\cFeps$, we may choose a set of generators obtained by parametrising the $\Delta$-orthogonal and $\Delta$-symmetric matrices $\bO$ and $\bR$. From the previous argument, we know that the entries of $\bR$ are actually in $\cF_\mathrm{ss}$, and so they can be removed from our set of generators. Since $\bO$ drops out from eq.~\eqref{constrantR}, we cannot constrain its entries to lie in $\cF_\mathrm{ss}$.
Hence, the functions obtained by parametrising the $\Delta$-orthogonal matrix $\bO$ (and its derivatives) are sufficient to generate the whole function field $\cFeps$.

 It is natural to ask if there may be other constraints that allow one to express the $\eps$-functions in $\bO$ in terms of elements from $\cF_\mathrm{ss}$. While we do not have a definite answer to this question, the examples that we have studied suggest that the entries $\bO$ define genuine $\eps$-functions that cannot be expressed in term of functions from $\cF_\mathrm{ss}$. Either way, we see that the minimal number of generators of $\cFeps$ is bounded by the number of degrees of freedom in the matrix $\bO$. Equivalently, we may summarise this by the statement:
\begin{proposition}
The dimension of the Lie group $\mathrm{OT}^{\Delta}$ is an upper bound on the minimal number of generators of $\cFeps$.
\end{proposition}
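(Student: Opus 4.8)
The plan is to show that the entries of the $\Delta$-orthogonal matrix $\bO$, together with their derivatives, suffice to generate $\cFeps$ over $\cF_\mathrm{ss}$, and that the number of such entries is controlled by $\dim\mathrm{OT}^\Delta$. First I would recall the two facts established earlier: (i) the non-trivial entries of $\bUt^{(0)}$ form a set of generators of $\cFeps$ over $\cF_\mathrm{ss}$, by eq.~\eqref{eq:A_to_Ut0} and the definition of $\eps$-functions; and (ii) by Proposition~\ref{prop:main_1} (applied to $\bUt^{(0)}\in\Gpar\subseteq\UT$) we have the unique decomposition $\bUt^{(0)}=\bO\bR$ with $\bO\in\mathrm{OT}^\Delta$ and $\bR\in\mathrm{ST}^\Delta$. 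From the algorithmic formulas $\bR=\sqrt{\varphi_\Delta(\bUt^{(0)})\bUt^{(0)}}$ and $\bO=\bUt^{(0)}\bR^{-1}$, the entries of $\bO$ and $\bR$ are rational (indeed polynomial) expressions in the entries of $\bUt^{(0)}$ and vice versa, so the combined set of entries of $\bO$ and $\bR$ is again a generating set for $\cFeps$ over $\cF_\mathrm{ss}$.

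Next I would invoke the key observation from eq.~\eqref{constrantR}: $\bR^2$ is determined as the (full-rank) solution of a linear system whose coefficients are built from the entries of $\tilde{\bs\Omega}_1$, which lie in $\cF_\mathrm{ss}$; hence the entries of $\bR^2$, and then by taking the unique unipotent lower-triangular square root (again only linear algebra over $\cF_\mathrm{ss}$), the entries of $\bR$ itself, all lie in $\cF_\mathrm{ss}$. Therefore $\bR$ contributes nothing new to the generating set, and the entries of $\bO$ alone generate $\cFeps$ over $\cF_\mathrm{ss}$ (together with their derivatives, as in the definition of a generating set). This gives: the minimal number of generators of $\cFeps$ is at most the number of independent functions appearing among the entries of $\bO$.

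To finish, I would translate \say{number of independent entries of $\bO$} into $\dim\mathrm{OT}^\Delta$. Since $\bO$ ranges over the Lie group $\mathrm{OT}^\Delta$ as $\bUt^{(0)}$ ranges over $\Gpar$, and $\mathrm{OT}^\Delta=\exp(\mathrm{NT}^{\Delta-})$ by eq.~\eqref{expmap}, the matrix $\bO$ is parametrised by $\dim\mathrm{OT}^\Delta = \dim\mathrm{NT}^{\Delta-}$ coordinates; equivalently, one may take the free parameters of a $\mathrm{NT}^{\Delta-}$-valued element $\bN'$ with $\bO=\exp(\bN')$ as the new generators. Each such coordinate is a $\cF_\mathrm{ss}$-rational function of the entries of $\bO$, so a set of at most $\dim\mathrm{OT}^\Delta$ functions generates $\cFeps$ over $\cF_\mathrm{ss}$, which is exactly the claimed bound.

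The main obstacle is not conceptual but lies in making precise the notion of \say{number of generators}: one must be careful that the entries of $\bO$ are genuinely $\cF_\mathrm{ss}$-rational functions of a set of $\dim\mathrm{OT}^\Delta$ \say{independent} $\eps$-functions (the exponential coordinates on $\mathrm{OT}^\Delta$), so that counting group dimensions indeed bounds the minimal generating set; the surjectivity of $\exp:\mathrm{NT}^{\Delta-}\to\mathrm{OT}^\Delta$ and its polynomial inverse (the logarithm, which terminates on nilpotents) handle this, but it is the step that requires the most care. A secondary point to state explicitly is the full-rank assumption on the linear system~\eqref{constrantR}, which is what guarantees $\bR$ is fixed over $\cF_\mathrm{ss}$ rather than merely constrained; this is assumed in the paper and should simply be invoked.
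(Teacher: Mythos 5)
Your proposal is correct and follows essentially the same route as the paper: decompose $\bUt^{(0)}=\bO\bR$ via Proposition~\ref{prop:main_1}, use eq.~\eqref{constrantR} (with the full-rank assumption) to place the entries of $\bR$ in $\cFss$, and conclude that the entries of $\bO$, parametrised by $\dim\OT$ coordinates, generate $\cFeps$. Your additional care in making the parametrisation of $\bO$ explicit through the exponential map and its terminating logarithm is a welcome sharpening of the paper's phrase \enquote{the functions obtained by parametrising $\bO$}, but it does not change the argument.
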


\paragraph{Implications for the rotation to a canonical basis.}

Let us discuss an interpretation of the decomposition from Proposition~\ref{prop:main_1} in the context of performing the rotation to a canonical basis via the algorithm of ref.~\cite{Gorges:2023zgv}. We use the notations from section~\ref{sec:conventions}. In the following, it is convenient to rewrite the last rotation  as a sequence of rotations,
\beq
\bUt=\bUt^{(0)}\bU_{\!\mathrm{m}}\,.
\eeq
If we transform the system with the matrix $\bU_{\!\mathrm{X}} := \bU_{\!\mathrm{m}}\bUeps\bUss$, the differential equation matrix becomes,
\begin{align}\label{AltdeqbeforeUt}
\widehat{\bOmega}=\big(\bU_{\!\mathrm{X}}\bOmega+\rd\bU_{\!\mathrm{X}}\big)\bU_{\!\mathrm{X}}^{-1}=\eps\tilde{\bOmega}_1+\widehat{\bOmega}_{0}\,,
\end{align}
where $\tilde{\bOmega}_1$ is the same matrix as in eq.~\eqref{eq:A_to_Ut0}. Note that after the rotation by $\bU_{\!X}$, the differential equation matrix is free of poles in $\eps$.
Prior to performing this last rotation, the cohomology intersection matrix takes the form 
\beq
 \bs{\widehat{C}}(\bx,\eps):=\bU_{\!\mathrm{X}}(\bx,\eps)\bC(\bx,\eps) \bU_{\!\mathrm{X}}(\bx,-\eps)^T\,.
 \eeq
 
In order to achieve complete $\eps$-factorisation, we still need to perform the rotation $\bUt^{(0)}$. 
After the final rotation by $\bUt^{(0)}$, we reach the canonical form, and we have
\begin{align}\label{eq:lastrotC}
\bUt^{(0)}(\bx)\bs{\widehat{C}}(\bx,\eps)\bUt^{(0)}(\bx){}^T=f(\eps)\,\bDelta\,.
\end{align}
We may now apply Proposition~\ref{prop:main_1} to write
\beq
\bUt^{(0)}(\bx) = \bO(\bx)\bR(\bx)\,.
\eeq
 Since $\bO(\bx)$ is in the orthogonal group of $\bDelta$ (cf.~eq.~\eqref{eq:Orthgroup}), we see that  already the first rotation $\bR\in \mathcal{F}_{\mathrm{ss}}$  brings the intersection matrix into its canonical form,  
\begin{align}\label{eq:C_to_R}
\bR(\bx) \,\bs{\widehat{C}}(\bx,\eps)\bR(\bx)^T=f(\eps)\,\bDelta \,,
\end{align} 
while $\bO(\bx)$ has no effect on the intersection matrix, and only serves to $\eps$-factorise the differential equation. We therefore find the following  factorisation of the last rotation,
\beq
\bUt(\bx,\eps) = \bO(\bx)\bR(\bx)\bU_{\!\mathrm{m}}(\bx,\eps)\,,
\eeq
and each factor has a very clear interpretation: the rotation $\bU_{\!\mathrm{m}}(\bx,\eps)$ removes the poles in the differential equation matrix, the rotation $\bR(\bx)$ only involves functions from $\cFss$ and rotates the cohomology intersection matrix to its canonical form $\bDelta$, and the matrix $\bO(\bx)$ brings the differential equation into an $\eps$-factorised form and introduces $\eps$-functions into the differential equation matrix.

\paragraph{Parametrising the canonical intersection matrix.} We have already seen that the precise form of the canonical intersection matrix $\bDelta$ depends on the choice of the canonical basis. In this section we discuss how the form of $\bDelta$ changes with the canonical basis choice. 

As a first step, we need to discuss the freedom in defining the canonical basis within the framework of the method of refs.~\cite{Gorges:2023zgv,Duhr:2025lbz}. We have already seen that the method of refs.~\cite{Gorges:2023zgv,Duhr:2025lbz} endows the space of master integrals with a filtration $F_{\mathrm{t},p}$, which captures how the master integrals can mix under the rotation to the canonical basis. We can understand this freedom via the following reasoning: we know that the last rotation $\bUt$ must preserve the filtration $F_{\mathrm{t},p}$, and its shape is constrained so that it lies in the group $\Gpar$. In particular, we may ask, what freedom we have in defining the rotation $\bUt^{(0)}$, after we have fixed the matrix $\bU_{\!\mathrm{m}}$. It is easy to see that $\bUt^{(0)}$ is fixed by the differential equation
\beq\label{eq:Ut0_DEQ}
\rd\bUt^{(0)} = -\bUt^{(0)}\widehat{\bOmega}_0\,,
\eeq
where $\widehat{\bOmega}_0$ 
was defined in
eq.~\eqref{AltdeqbeforeUt}. It follows that, once $\bU_{\!\mathrm{m}}$ has been fixed, any solution to eq.~\eqref{eq:Ut0_DEQ} takes the form $\bM\bUt^{(0)}$, where $\bUt^{(0)}$ now denotes a particular solution to eq.~\eqref{eq:Ut0_DEQ} and $\bM$ is a constant matrix from $\Gpar$. In other words, all other canonical bases constructed in this way will only differ by a constant rotation that respects the filtration $F_{\mathrm{t},p}$.

Let us now discuss how the canonical intersection matrix depends on the choice of basis, i.e., how it changes with $\bM$. We assume that we have determined the canonical intersection matrix $\bDelta$ corresponding to the particular solution $\bUt^{(0)}$, cf.~eq.~\eqref{eq:lastrotC}. Under a constant rotation $\bM\in\Gpar$, the canonical intersection matrix changes to
\beq\label{eq:Delta'}
\bDelta' = \bM\bDelta \bM^T\,.
\eeq
Note that Observation~\ref{obs:Ut_in_UT} also holds for this other choice of canonical intersection matrix. Since $\bM\in\Gpar$, we can write it in the form
\beq
\bM = \bM_{\!R}\bM_O\,,
\eeq
with $\bM_{\!R}\in \SUT$ and $\bM_O\in\OT$. Since $\OT$ can be identified with the orthogonal group of $\Delta$, the $\Delta$-orthogonal part drops out from eq.~\eqref{eq:Delta'}, and we have
\beq\label{eq:Delta'_to_Delta}
\bDelta' = \bM_{\!R}\bDelta \bM_{\!R}^T\,.
\eeq
We thus see that the different choices for the canonical intersection matrix are parametrised by the set $\SUT$. Based on our explicit computations, we find that: 
\begin{observation}\label{obs:anti-diag}
    There is a basis in which the canonical intersection matrix is block anti-diagonal, with the blocks being dictated by the filtration $F_{\mathrm{t},p}$. 
\end{observation}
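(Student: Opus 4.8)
The plan is to read Observation~\ref{obs:anti-diag} as a statement about block-anti-diagonalising the bilinear form $\bDelta$ with respect to the filtration $F_{\mathrm{t},p}$, exploiting the basis freedom from eq.~\eqref{eq:Delta'}. Recall that under a constant rotation $\bM\in\Gpar$ the canonical intersection matrix becomes $\bM\bDelta\bM^T$ (this is just eq.~\eqref{trafointemat} applied to $\bUt^{(0)}$ and uses nothing beyond that), so it suffices to exhibit one $\bM\in\Gpar$ for which $\bM\bDelta\bM^T$ has non-zero blocks only in the positions $(\mathrm{gr}_p,\mathrm{gr}_{w-p})$, where $\mathrm{gr}_p=F_{\mathrm{t},p}/F_{\mathrm{t},p+1}$ are the graded pieces of the filtration and $w$ is a fixed integer. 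First I would fix a canonical basis adapted to the filtration -- which is what the algorithm of refs.~\cite{Gorges:2023zgv,Duhr:2025lbz} produces, since the starting basis is aligned with the Hodge structure and the rotations $\bUss,\bUeps,\bUt$ all preserve $F_{\mathrm{t},p}$ -- so that $\bDelta$ acquires a well-defined block structure labelled by the $\mathrm{gr}_p$.

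The crucial input is then a Riemann-bilinear-type orthogonality of $\bDelta$ with respect to $F_{\mathrm{t},p}$: the bilinear form associated with $\bDelta$ pairs $F_{\mathrm{t},p}$ with $F_{\mathrm{t},q}$ trivially whenever $p+q>w$, for a suitable normalisation of the filtration index. This is the assertion that $F_{\mathrm{t},p}$ is (a shift of) the pole-order/Hodge filtration on the twisted cohomology attached to the maximal cut, polarised by the intersection pairing; since $\bDelta$ is constant in $\bx$ (cf.~eqs.~\eqref{eq:Cc_def}--\eqref{eq:Cc_to_Delta}), this vanishing can equally be traced back to the pole-order structure of the intersection matrix $\bC(\bx,\eps)$ in a derivative starting basis, which is preserved by the filtration-respecting rotation to canonical form. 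Granting this, $\bDelta$ is ``block-anti-triangular'' (its $(\mathrm{gr}_p,\mathrm{gr}_q)$ block vanishes for $p+q>w$), and non-degeneracy of $\bDelta$ forces the anti-diagonal blocks $\mathrm{gr}_p\times\mathrm{gr}_{w-p}$ to be perfect pairings, in particular $\dim\mathrm{gr}_p=\dim\mathrm{gr}_{w-p}$.

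The last step is a block version of completing the square, entirely parallel to the linear-algebra arguments of Section~\ref{sec:proof}. Proceeding inductively over the anti-diagonals $p+q=w-1,w-2,\dots$, one cancels each sub-anti-diagonal block by replacing the representatives of $\mathrm{gr}_p$ by their sum with a suitable combination of the representatives of $\mathrm{gr}_{p'}$ with $p'>p$ (equivalently $F_{\mathrm{t},p'}\subseteq F_{\mathrm{t},p}$); since the anti-diagonal blocks are invertible they provide the pivots, and each such move is a unipotent transformation that preserves every step of $F_{\mathrm{t},p}$, hence lies in $\Gpar$. One checks, as usual, that these moves do not reintroduce non-zero entries above the anti-diagonal -- all such contributions carry an index sum $>w$ and hence vanish -- so the procedure terminates after finitely many steps with an $\bM\in\Gpar$ such that $\bDelta'=\bM\bDelta\bM^T$ is block anti-diagonal; symmetry (resp.~antisymmetry) of $\bDelta$ is preserved throughout. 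As a by-product, in such a basis Observation~\ref{obs:Ut_in_UT} becomes manifest, since conjugating a block-lower-triangular unipotent matrix by a block-anti-diagonal $\bDelta$ again yields a block-lower-triangular unipotent matrix. I expect the genuine obstacle to be establishing the Riemann-bilinear vanishing for the algorithmically defined $F_{\mathrm{t},p}$, which rests on the (so far partly empirical) link between the construction of refs.~\cite{Gorges:2023zgv,Duhr:2025lbz} and the mixed Hodge structure of the underlying geometry; a more self-contained alternative would be to derive the anti-triangular vanishing directly from the known pole filtration of the Baikov integrand together with the constancy of $\bDelta$.
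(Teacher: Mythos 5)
You should first be aware that the paper does not prove Observation~\ref{obs:anti-diag} at all: like Observation~\ref{obs:Ut_in_UT}, it is put forward as an empirical statement (``based on our explicit computations'') and is only checked case by case in the examples --- $\bDelta=\bK_N$ in eq.~\eqref{deltaCY1} for deformed CY operators, the block anti-diagonal forms in section~\ref{subsec:higherGenus}, $\bK_7$ for the banana integral of section~\ref{fourlooptwomassbaanna}, and the parametrisation $\bDelta_d$ of eq.~\eqref{mostgendelta}, whose $d_i=0$ representative is anti-diagonal. So there is no proof in the paper for your argument to be compared with; the only question is whether your proposal closes the gap, and it does not.

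The gap is the one you flag yourself, and it is not a technical loose end but the entire content of the statement: the ``Riemann-bilinear-type orthogonality'', i.e.\ the vanishing of the $(\mathrm{gr}_p,\mathrm{gr}_q)$ blocks of $\bDelta$ for $p+q>w$, is (by your own completing-the-square step, and trivially in the converse direction) equivalent to Observation~\ref{obs:anti-diag} itself. Justifying it by identifying $F_{\mathrm{t},p}$ with a polarised Hodge/pole-order filtration on the twisted cohomology is precisely the conjectural link between the algorithm of refs.~\cite{Gorges:2023zgv,Duhr:2025lbz} and the mixed Hodge structure that the paper deliberately leaves at the level of an observation; neither the constancy of $\bDelta$ nor the pole structure of $\bC(\bx,\eps)$ in a derivative basis is shown (here or in the paper) to imply it. What you do add, and what is correct, is the reduction: for a constant, non-degenerate, (anti)symmetric $\bDelta$ compatible with the filtration in the anti-triangular sense, your inductive unipotent congruence stays inside $\Gpar$ (it is exactly a transformation of the form of eq.~\eqref{eq:Delta'_to_Delta}), does not disturb the anti-diagonal pivots because the would-be new contributions carry index sum $>w$, handles the $p=q$ blocks with the usual factor of $\tfrac{1}{2}$, forces $\dim\mathrm{gr}_p=\dim\mathrm{gr}_{w-p}$, and preserves (anti)symmetry; your remark that Observation~\ref{obs:Ut_in_UT} becomes manifest in the adapted basis is also consistent with the paper's comment below eq.~\eqref{eq:Delta'}. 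So the proposal should be read as a correct conditional reduction plus an unproven hypothesis that carries all the weight --- the observation itself remains exactly as unproven as in the paper.
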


\paragraph{An analytic method to determine $\bDelta$.}
We now argue that, by combining the properties from this subsection, we can often determine the form of $\bDelta$ without having to compute any intersection numbers. In a nutshell, the idea is that  $\Gpar$ and $\bUt^{(0)}$ a priori do not rely on the form of $\bDelta$, but nevertheless they know about it through the decomposition from Proposition~\ref{prop:main_1}. 

Using the notation introduced earlier in this section, we can see from eqs.~\eqref{eq:A_phiA},~\eqref{eq:A_to_Ut0} and~\eqref{eq:lastrotC} that the intersection matrix before the last rotation (cf.,~eq.~\eqref{AltdeqbeforeUt}) satisfies the relation
\begin{align}\label{eq:Omega1CTilde1}
\tilde{\bOmega}_1\bs{\widehat{C}}=\bs{\widehat{C}}\tilde{\bOmega}_1^T\,.
\end{align}
Assume now that we have determined the matrix $\tilde\bOmega_1$ by rotating the system with the matrix $\bU_{\!\mathrm{X}}$, but we did not compute the intersection matrix, so that $\bs{\widehat{C}}$ is unknown. Then we may interpret eq.~\eqref{eq:Omega1CTilde1} as a linear system for $\bs{\widehat{C}}$. Note that the entries of $\tilde\bOmega_1$ lie in $\cFss$, and so the entries of $\bs{\widehat{C}}$ take values in the same space. 
The system in eq.~\eqref{eq:Omega1CTilde1} does not yet determine $\bs{\widehat{C}}$ uniquely. We may further constrain it by imposing that $\bs{\widehat{C}}$ is (anti-)symmetric and exploiting that 
\beq\label{eq:Ctilde_Gpar}
\varphi_{\bs{\widehat{C}}}\Big(\bUt^{(0)}\Big)\in\Gpar\,.
\eeq
Equation~\eqref{eq:Ctilde_Gpar} follows from eq.~\eqref{eq:lastrotC}, because
\begin{align}
\varphi_{\bs{\widehat{C}}}\Big(\bUt^{(0)}\Big)=\bUt^{(0)\,-1}\varphi_{\bDelta} \Big(\bUt^{(0)}\Big) \bUt^{(0)}\,,
\end{align}
and Observation~\ref{obs:Ut_in_UT} implies that the right-hand side lies in $\Gpar$.

Once the (typically very simple) expression for $\bs{\widehat{C}}$ has been found, we may perform the last rotation, which transforms $\bs{\widehat{C}}$ into the canonical intersection matrix, cf.~eq.~\eqref{eq:lastrotC}. We see that only the $\Delta$-symmetric part plays a role in this last step, cf.~eq.~\eqref{eq:C_to_R}, and we know that these relations can be solved algebraically over $\cFss$.
Since the $\eps$-functions are only defined up to a constant (because they are defined as solutions to a system of first-order differential equations), all entries containing linearly independent combinations of $\eps$-functions can be set to zero, yielding the desired relations. After imposing these conditions, we obtain the final, constant expression for $\bDelta$. 
The advantage of this approach is that it allows us to extract in one go the canonical intersection matrix $\bDelta$ and the algebraic relations that define the entries of $\bR$.
We illustrate this strategy on an explicit example in section~\ref{app.moref43}.

\section{Examples}
\label{sec:examples}

In this section we provide a set of examples to illustrate the general results of sections~\ref{sec:can_int_matrix} and~\ref{sec:constraints}. In particular, we consider one- and multi-parameter families of integrals related to Calabi-Yau varieties in subsections \ref{CY1V} and \ref{fourlooptwomassbaanna} and families of integrals related to higher-genus Riemann surfaces in subsection \ref{subsec:higherGenus}. In many applications appearing in physics, Feynman integrals are related to more complicated geometries. We consider such an example in subsection \ref{fourlooptwomassbaanna}, see also ref.~\cite{Duhr:2025kkq}. 

\subsection{Deformed Calabi-Yau operators}
\label{CY1V}
We start by discussing a class of systems of differential equations in one variable obtained as deformations of so-called \emph{Calabi-Yau (CY) operators}~\cite{Almkvist2,Bogner:2013kvr,BognerThesis,CYoperators}. More precisely, we consider systems defined in the following way: it is well known that an $N\times N$ linear system of ordinary differential equations is equivalent to finding the kernel of a linear differential operator of order $N$. Following ref.~\cite{Duhr:2025lbz}, we consider $N\times N$ systems that lead to a differential operator $\mathcal{L}_{\eps}$ such that $\mathcal{L}_{\eps=0}$ is a CY operator according to the definition of ref.~\cite{Almkvist2,Bogner:2013kvr,BognerThesis,CYoperators}. In particular, this implies that $\mathcal{L}_{\eps=0}$ is (essentially) self-adjoint and has a point of maximal unipotent monodromy (MUM) at the origin. CY operators of order $N$ are interesting, because they correspond to the Picard-Fuchs operators of large classes of one-parameter families of CY $(N-1)$-folds, including families of CY varieties associated to (the maximal cuts of) relevant multi-loop Feynman integrals~\cite{Bonisch:2020qmm,Bonisch:2021yfw,Pogel:2022vat,Pogel:2022ken,Pogel:2022yat,Frellesvig:2024rea,Duhr:2022pch,Duhr:2023eld,Duhr:2024hjf,Frellesvig:2023bbf,Klemm:2024wtd,Driesse:2024feo,Frellesvig:2024zph,Brammer:2025rqo}. They also cover certain classes of interesting hypergeometric functions~\cite{Duhr:2025lbz}.

We will not give an extensive review of (deformed) CY operators. A detailed discussion on how to obtain a canonical basis for these cases can be found in ref.~\cite{Duhr:2025lbz} (see also refs.~\cite{Pogel:2022vat,Pogel:2022ken,Pogel:2022yat} for a discussion of the equal-mass bananas using a different, but equivalent, method). Here we only highlight that, even though $\mathcal{L}_{\eps=0}$ is (by definition) self-dual, the same does not need to be true for $\mathcal{L}_{\eps}$~\cite{Duhr:2025lbz}. In the following we restrict the discussion to cases where $\mathcal{L}_{\eps}$ leads to a self-dual scenario. We will come back to non self-dual cases in section~\ref{subsec:CY_non_self-dual}. Based on the results for the canonical forms for the examples in ref.~\cite{Duhr:2025lbz}, we can distill the general form of the canonical intersection matrix for deformed CY operators. For self-dual scenarios we can also determine the number of generators of $\cFeps$. In the remainder of this section we present those results, and we illustrate them by discussing in detail the cases of two deformed CY operators of order three and four, respectively.

\subsubsection{Generalities}
\label{sec:deformed_CY}
In ref.~\cite{Duhr:2025lbz} an extensive list of deformed CY operators was studied and the corresponding systems were transformed into canonical form. Based on these results, we conjecture that for the cases of a deformed CY operator of order $N$, the  $\eps^0$ part of the final rotation to the canonical form takes the form
\begin{equation}\label{eq:Ut0_CY_op}
    \bUt^{(0)}=\begin{pmatrix}
    
         1 \\
         t_{2,1} & 1 \\
         \vdots & &\ddots \\
        t_{N,1} & \dots & t_{N,N-1}  &1
        
    \end{pmatrix}\,.
\end{equation}
From the shape of this matrix we can see that in this case the group $\Gpar$ is the whole group UT of lower-triangular unipotent matrices. We can also read off the filtration $F_{\textrm{t},p}$. All the graded quotients $F_{\mathrm{t},p-1}/F_{\mathrm{t},p}$ are one-dimensional, and we can identify this filtration with the Hodge filtration on the middle cohomology of the CY variety.
There is a basis in which the canonical intersection matrix takes the form
\begin{equation}
\label{deltaCY1}
    \bDelta=\bK_{N}\, ,
\end{equation}
where $\bK_N$ is the $N\times N$ matrix
\begin{equation}\label{eq:K_matrix}
\bK_N = \bK_N^T=\bK_N^{-1}= \left(\begin{smallmatrix}
0 & 0 & \ldots & 0& 1\\
0 & 0 &  & 1& 0\\
\vdots &  & {.^{.^{.^{.^.}}}} & & \vdots\\
\phantom{.}&&&&\\
0 & 1 & \ldots & 0& 0\\
1 & 0 &  & 0& 0\\
\end{smallmatrix}\right) \,.
\end{equation}
Note that this matrix is anti-diagonal, in agreement with Observation~\ref{obs:anti-diag}.
The $\frac{N(N-1)}{2}$ non-trivial entries $t_{ij}$ of $\bUt^{(0)}$ are a set of generators for the function field $\cFeps$.
The map $\varphi_{\Delta} = \varphi_{\bK_N}$ corresponds to reflecting the elements of an $N\times N$ matrix over the anti-diagonal, and the shape of $\bUt^{(0)}$ is clearly preserved by this operation, i.e., $\bUt^{(0)}\in\Gpar$, in agreement with Observation~\ref{obs:Ut_in_UT}.

So far all considerations hold for general deformed CY operators, independently of the assumption of self-duality. For the rest of this section, we focus on cases where the deformed CY operator is self-dual for generic values of $\eps$. The case of non-self-dual deformed CY operators will be discussed in section~\ref{subsec:CY_non_self-dual}. In a self-dual scenario
we can apply Proposition~\ref{prop:main_1} to write $\bUt^{(0)}=\bO\bR$, where $\bO\in\textrm{OT}^{\bK_N}$ and $\bR\in\textrm{ST}^{\bK_N}$. Note that $\textrm{ST}^{\bK_N}$ is precisely the set of unipotent, lower-triangular matrices that are persymmetric, and so $\bR$ is persymmetric. The dimension of $\textrm{ST}^{\bK_N}$ is given by
\beq
\dim \textrm{ST}^{\bK_N} = \begin{cases} 
\frac{1}{4}N^2, & \text{if } N \text{ even}\,, \\
\frac{1}{4}(N^2 - 1), & \text{if } N \text{ odd}\,.
\end{cases}
\eeq
It follows that
\beq
\label{eq:dim_OTKN}\dim\textrm{OT}^{\bK_N} = 
\begin{cases} 
\frac{1}{4}N(N-2), & \text{if } N \text{ even}\,, \\
\frac{1}{4}(N-1)^2, & \text{if } N \text{ odd}\,.
\end{cases}
\eeq
Equation~\eqref{eq:dim_OTKN} provides an upper bound on the number of generators for $\cFeps$, i.e., on the number of genuine $\eps$-functions that we need to introduce to obtain a system in canonical form. Let us make a few comments. First, we observe that $\dim\textrm{OT}^{\bK_2}=0$, and so $\cFeps=\cFss$ for $N=2$, in agreement with the analysis in ref.~\cite{Duhr:2024uid}. Second, we may ask when the upper bound is saturated. While we do not have a definite answer to this question, we expect that the bound is always saturated, and that the minimal number of generators for $\cFeps$ is given by eq.~\eqref{eq:dim_OTKN}. We will see examples of this below, and it also precisely matches the number of independent functions introduced in refs.~\cite{Pogel:2022yat,Pogel:2022ken,Pogel:2022vat} to $\eps$-factorise the equal-mass banana integrals. 

So far all considerations were generic and apply to arbitrary self-dual, deformed CY operators. In the remainder of this subsection, we discuss two examples where the solutions are hypergeometric functions, which for $\eps=0$ compute the periods of a family of K3 surfaces and CY threefolds, respectively.

\subsubsection{Hypergeometric \texorpdfstring{${}_3F_2$}{3F2} 
functions associated with a one-parameter family of K3 surfaces}\label{hypergeometricexample}
\newcommand{\lt}{{\scriptscriptstyle(3F2)}} 
 Let us illustrate the previous discussion on the normalised version of the hypergeometric ${}_3F_2$ function:\footnote{ This version differs from the standard definition by a factor $ (\Gamma(1+m_1\eps)\Gamma(1+m_2\eps))/(\Gamma\left(\frac{1}{2}+n_2\eps\right)\Gamma\left(\frac{1}{2}+n_3\eps\right)\Gamma\left((m_1-n_2)\eps-\frac{1}{2}\right)\Gamma\left((m_2-n_3)\eps-\frac{1}{2}\right))$.} 
\begin{align}
\label{eq:3F2_def}
&{}_3\mathcal{F}_2\left(\bs{\alpha}\, ; \, \bs{\beta}\, ;\lambda\right)=\int_{D} \Phi_{3F2}(x_1,x_2)\, \frac{\rd x_1\wedge \rd x_2}{(1-x_1)(x_1-x_2)x_2}\, . 
\end{align}
Here $D$ is some cycle and 
\begin{align}
\label{phi32}
    \Phi_{3F2}(x_1,x_2) = x_1^{\alpha_1-\beta_2} x_2^{\alpha_2} (1-x_1)^{\beta_1-\alpha_1} (1- \lambda x_2)^{-\alpha_0} (x_1-x_2)^{\beta_2-\alpha_2}\, . 
\end{align}
The multivalued function in eq.~(\ref{phi32}) can be interpreted as the twist defining the twisted cohomology group $H_{\text{dR}}^{2}(X_{3F2}, \nabla_{3F2})$ with 
\begin{align}
    X_{3F2}= \mathbb{C}^2 - \left\{ (x_1, x_2): x_1 =0 \cup x_2=0 \cup x_1=1\cup x_2= \frac{1}{\lambda} \cup x_1=x_2\right\} \,,
\end{align}
and $\nabla_{3F2} = \rd + \rd\! \log \Phi_{3F2}$, with $\rd = \rd x_1 \partial_{x_1}+\rd x_2 \partial_{x_2}$. 
We consider the case where 
\begin{equation}\bsp
\label{cyparameters}
 \bs{\alpha}&=\{\alpha_0,\alpha_1,\alpha_2\}=  \left\{\frac{1}{2}+ n_1 \eps,\frac{1}{2}+ n_2 \eps,\frac{1}{2}+ n_3 \eps\right\} \,,\\
 \bs{\beta}&= \{\beta_1,\beta_2\}=\left\{ \, 1+ m_1 \eps,  1+ m_2 \eps\,\right\}\,.
\esp\end{equation}
We choose our initial basis for the twisted cohomology group with $I^\lt_i=\int_D  \Phi_{3F2}\, \varphi_i^\lt$ as follows~\cite{Duhr:2025lbz}:
\begin{align}\label{initialbasisCY1}
I_1^\lt&={_3\mathcal{F}}_2\left(\bs{\alpha}\, ; \, \bs{\beta}\, ;\lambda\right)\,,\quad I_2^\lt=\,\partial_\lambda I_1^\lt\, ,\quad I_3^\lt=\,\partial^2_\lambda I_1^\lt\,.
\end{align}
For the dual basis we choose 
\begin{align}
\label{initialduabasisCY1}
\check{\varphi}^\lt_i = \left.\frac{(x_1-1)(x_1-x_2)x_2}{x_1(x_2\lambda-1)}  \varphi_i^\lt\right|_{\eps\rightarrow -\eps}\, .
\end{align}
Note that taking $\eps\rightarrow -\eps$ explicitly in eq.~\eqref{initialduabasisCY1} comes from the fact that our initial basis in eq.~\eqref{initialbasisCY1} is $\eps$-dependent.

The hypergeometric function in eq.~\eqref{eq:3F2_def} with $\bs{\alpha}$ and $\bs{\beta}$ as in eq.~\eqref{cyparameters} is annihilated by a differential operator $\mathcal{L}^\lt_{\eps}$ of order three. The operator $\mathcal{L}^\lt_{\eps=0}$ is a CY operator, and its solutions compute the periods of a one-parameter family of K3 surfaces (cf., e.g., refs.~\cite{Duhr:2022pch,Duhr:2023eld}). The operator $\mathcal{L}^\lt_{\eps}$ is self-dual for generic values of $\eps$ and $m_1,m_2,n_1, n_2,n_3$~\cite{Duhr:2025lbz}. Hence, this operator falls into the class defined at the beginning of this subsection. With the choice of bases in eqs.~\eqref{initialbasisCY1} and~\eqref{initialduabasisCY1}, the period matrix and its dual are related by
\beq
\bs{\check{P}}^\lt(\lambda,\eps) = \bs{{P}}^\lt(\lambda,-\eps)\,.
\eeq

In ref.~\cite{Duhr:2025lbz} it was shown how to transform the first-order linear system corresponding to $\mathcal{L}^\lt_{\eps}$ to a canonical basis. Details can be found in ref.~\cite{Duhr:2025lbz}. Here we only note that the final part of the rotation takes the form
\begin{align}\label{eq:bUt_3F2}
\bUt=\left(
\begin{array}{ccc}
 1 & 0 & 0 \\
t_1 & 1 & 0 \\
t_2+\frac{t_{2,-1}}{\eps}  & t_3 & 1 \\
\end{array}
\right) \,,
\end{align}
in agreement with the general form given in eq.~\eqref{eq:Ut0_CY_op}. 
We see that it contains four potentially new functions, three of which appear in the $\mathcal{\eps}^0$ part of the rotation. 

We now compute the canonical intersection matrix for this case, and show that we obtain the general form in eq.~\eqref{deltaCY1}. We have explicitly computed the cohomology intersection matrix $\bs{C}^\lt$ for the choice of initial bases in eqs.~\eqref{initialbasisCY1}, and~\eqref{initialduabasisCY1}. It is a rational function of $\lambda$ and $\eps$. We rotate it into the canonical basis,
\begin{equation}
    \bs{C}^\lt_c(\lambda,\eps) =\bs{U}(\lambda,\eps) \bs{C}^\lt(\lambda,\eps)\bs{U}(\lambda,-\eps)^T\,.
\end{equation}
At first glance, $\bs{C}^\lt_c(\lambda,\eps)$ does not appear to be constant in $\lambda$. However, we can evaluate the entries numerically, and we find
\begin{align}\label{canonicalC3F2}
    \bs{C}^\lt_c(\lambda,\eps)=-\eps^4 \bK_3 \,.
\end{align}
in agreement with eq.~\eqref{deltaCY1}.

If we require eq.~\eqref{canonicalC3F2} to hold, we obtain relations among the $\eps$-functions. In this case, we find the relations
\begin{align}\label{rels3F2}
    t_3+t_1&\,=\psi_0\,\frac{ m_1+m_2-\lambda (n_1+n_2+n_3)}{\sqrt{1-\lambda}}=:2 r_1^\lt \,, \\
   t_2+\frac{t_1^2}{2}&\,=\frac{\psi_0^2}{2}  (m_1 m_2-\lambda (n_1n_2+n_1n_3+n_2 n_3))=:(r_1^\lt)^2+2r_2^\lt \notag \,,
\end{align}
where $\psi_0$ is the solution of $\mathcal{L}^\lt_{\eps=0}$ that is holomorphic at the MUM-point $\lambda=0$.
Let us make two comments. First, recall that the $\eps$-functions, and thus the previous relations, are only defined up to some undetermined additive constants, which we can interpret as the integration constants of the functions we fixed, and we set them to zero.
Second, in this particular case the relations in eq.~\eqref{rels3F2} can also be found from the defining differential equations of the functions $t_i$. In cases with more variables this can however quickly become complicated, but our method still allows us to find such relations in an algorithmic way.

We may solve the relations in eq.~\eqref{rels3F2} and express $t_2$ and $t_3$ in terms of $t_1$ and functions from $\cFss$. In our case, this is particularly easy, because eq.~\eqref{rels3F2} is linear in $t_2$ and $t_3$. In other cases (with more integrals and/or depending on more variables), these relations will in general be non-linear and hard to solve. As discussed in section~\ref{sec:constraints}, we can use the decomposition from Proposition~\ref{prop:main_1} to parametrise $\bUt$ and define a set of generators for $\cFeps$, and we reduce the problem to solving only linear constraints. We now illustrate this on our example.

The group $\textrm{UT}^{\bK_3}$ has dimension 3, in agreement with the parametrisation in eq.~\eqref{eq:Ut0_CY_op} for $N=3$. From Proposition~\ref{prop:main_1}, we know that there is $\bR^\lt\in\textrm{ST}^{\bK_3}$ and $\bO^\lt\in\textrm{OT}^{\bK_3} $ such that
\beq
\bUt^{(0)} = \bO^\lt\bR^\lt\,.
\eeq
The matrix $\bR^\lt$ satisfies $\bR^\lt=\bK_3\big(\bR^\lt\big){}^T\bK_3$, and so it is persymmetric. An easy computation shows that
\begin{align}\label{gensym3F2}
\bR^\lt=\left(
\begin{array}{ccc}
 1 & 0 & 0 \\
 r_1^\lt & 1 & 0 \\
r_2^\lt  & r_1^\lt & 1 \\
\end{array}
\right)\,,
\end{align}
where $r_1^\lt$ and $r_2^\lt$ are related to $t_1$, $t_2$, $t_3$ as in eq.~\eqref{rels3F2}. Using eq.~\eqref{constrantR}, we can fix $r_1^\lt$ and $r_2^\lt$ to the values in $\cFss$ given in eq.~\eqref{rels3F2}. Similarly, we find
\begin{align}
\bs{O}^\lt=\left(
\begin{array}{ccc}
 1 & 0 & 0 \\
 G^\lt  & 1 & 0 \\
 -\frac{\left(G^\lt\right)^2}{2}  &- G^\lt  & 1 \\
\end{array}
\right)\,,
\end{align}
where 
\beq
G^\lt=\frac{1}{2}(t_1-t_3)\,.
\eeq
 We thus see that $\cFeps$ is generated by at most the single function $G^{\lt}$, in agreement with the upper bound in eq.~\eqref{eq:dim_OTKN}. We now argue that the bound in eq.~\eqref{eq:dim_OTKN} is saturated, and $G^{\lt}$ cannot be expressed in terms of functions from $\cFss$. 
If we change variables from $\lambda$ to the canonical variable $\tau$ defined by
\beq
\tau = \frac{\psi_1(\lambda)}{\psi_0(\lambda)}\,,
\eeq
where $\psi_1$ is the solution of $\mathcal{L}_{\eps=0}^\lt$ that diverges like a single power of a logarithm close to the MUM-point, then the field $\cFss$ can be identified with the field generated by meromorphic quasi-modular forms for the congruence subgroup $\Gamma(2)$. The function $G^\lt$ can be written as a double-primitive of a meromorphic modular form\footnote{In fact, a magnetic modular form~\cite{magnetic1,magnetic2,magnetic3,magnetic4,Bonisch:2024nru}.} of weight four for the congruence subgroup $\Gamma(2)$ with a triple pole at $\lambda=\frac{1}{2}$. It can be shown that it is not possible to evaluate this primitive in terms of (quasi-)modular forms for $\Gamma(2)$, i.e., in terms of elements from $\cFss$~\cite{matthes_AMS,Broedel:2021zij,brown_fonseca}. Hence, $G^{\lt}\notin\cFss$, and so the bound in eq.~\eqref{eq:dim_OTKN} is saturated. 

\subsubsection{Hypergeometric \texorpdfstring{${}_4F_3$}{4F3} functions 
associated with a one-parameter family of CY 3-folds}
\label{app.moref43}

As a second example, we discuss a CY operator of degree four whose solution can be written in terms of a hypergeometric ${}_4F_3$ function.  Since the general setup is very similar to section~\ref{hypergeometricexample}, we will be brief in defining the context. 

More specifically, we consider a hypergeometric ${}_4F_3$ function defined by the twist
 \begin{align}
 \label{eq:twist4F3}
\Phi_{4F3}= x_3^{\alpha_3} (1-x_1)^{\beta_1-\alpha_1} x_1^{\alpha_1-\beta_2} x_2^{\alpha_2-\beta_3} (1-x_3 \lambda)^{-\alpha_0} (x_1-x_2)^{\beta_2-\alpha_2} (x_2-x_3)^{\beta_3-\alpha_3}\,,
 \end{align}
 with parameters
 \begin{align}\label{eq:4f3parameters}
 \bs{\alpha}&=\{\alpha_0,\alpha_1,\alpha_2,\alpha_3\}=\left\{\frac{1}{2}+n_1 \eps,\frac{1}{2}+n_2 \eps,\frac{1}{2}+n_3 \eps,\frac{1}{2}+n_4 \eps\right\}\,,\\
\bs{\beta}&=\{\beta_1,\beta_2,\beta_3\}=\left\{1+m_1 \eps,1+m_2 \eps,1+m_3 \eps\right\}\,.\notag
 \end{align}
This hypergeometric function is annihilated by a fourth-order ordinary differential operator, which for $\eps=0$ is the CY operator corresponding to the entry AESZ 3 in the database of CY operators~\cite{Almkvist2}.
The solutions to this CY operator compute the periods of a one-parameter family of CY threefolds. 
The parameters $m_i$ and $n_i$ are assumed to be generic (in the sense discussed earlier), and the deformed CY operator is self-dual.

We choose as a basis for the twisted cohomology group defined by the twist in eq.~\eqref{eq:twist4F3} the differential forms $\varphi_i^\lbf$, such that
$I^\lbf_i=\int_D  \Phi_{4F3}\, \varphi_i^\lbf$ is
\begin{align}
\label{CY3basis}
I_1^\lbf&={_4\mathcal{F}}_3\left(\bs{\alpha}\, ; \, \bs{\beta}\, ;\lambda\right)\,,\,I_2^\lbf=\,\partial_\lambda I_1^\lbf\,,\,I_3^\lbf=\,\partial^2_\lambda I_1^\lbf\,,\, 
I_4^\lbf=\,\partial^3_\lambda I_1^\lbf\,.
\end{align}
We choose the dual basis as 
 \begin{align}\label{CY3basisdual}
\check{\varphi}_i^\lbf=\frac{(1-x_1) (x_1-x_2) (x_2-x_3) x_3}{x_1 x_2 (1-x_3 \lambda)}\varphi^\lbf_i|_{\eps\rightarrow -\eps}\,,
 \end{align}
 which ensures that the period matrix and its dual are simply related by changing the sign of $\eps$.

In ref.~\cite{Duhr:2025lbz} a canonical basis for the first-order linear system attached to this deformed CY operator was obtained.  
The last transformation $\bUt$ that rotates the basis in eq.~\eqref{CY3basis} into a canonical form is given by
 \begin{align}
 \label{eq:bUt_CY3}
 \bUt=\left(
\begin{array}{cccc}
 1 & 0 & 0 & 0 \\
t_1 & 1 & 0 & 0 \\
 t_2+\frac{t_{2,-1}}{\eps } & t_3& 1 & 0 \\
 t_4+\frac{t_{4,-1}}{\eps }+\frac{t_{4,-2}}{\eps ^2} &t_5+\frac{t_{5,-1}}{\eps } & t_6& 1 \\
\end{array}
\right)\,,
 \end{align}
 where the functions $t_i$ provide a set of generators for $\cFeps$, and they can be written as (iterated) integrals over functions from $\cFss$. We can also obtain converging series representations for all these functions close to the MUM-point $\lambda=0$. The functions $t_{i,-1}$ and $t_{4,-2}$ can be expressed in terms of the derivatives of the generators $t_i$.

 We have computed the intersection matrix $\bC^\lbf(\lambda,\eps)$ in the original bases in eqs.~\eqref{CY3basis} and~\eqref{CY3basisdual}. The resulting matrix $\bC^\lbf_c(\lambda,\eps)$ depends on the $\eps$-functions $t_i$. Using the series representations for the latter, we can evaluate all entries of $\bC^\lbf_c$ numerically, and we find $\bC^\lbf_c(\lambda,\eps) = \eps^5 \bDelta^\lbf$, with

\begin{align}\label{delta4f3}
\bDelta^\lbf=\bK_4 \,, 
\end{align}
in agreement with eq.~\eqref{deltaCY1}. Note that it is easy to see that $\bUt^\lbf\in\textrm{UT}^{\bK_4}$, so that Observation~\ref{obs:Ut_in_UT} is fulfilled.

In eq.~\eqref{eq:bUt_CY3}, we have introduced a set of 6 generators $t_i$, $i=1,\ldots,6$, for $\cFeps$. From eq.~\eqref{eq:dim_OTKN} with $N=4$, we expect that we need at most two generators, and so there must be four independent polynomial relations between the $t_i$. These relations are provided by eq.~\eqref{eq:A_phiA}. In order to solve these constraints, we use Proposition~\ref{prop:main_1} to write
\beq\label{eq:Ut0_CY3}
\bUt^{\lbf,(0)} = \bO^\lbf \bR^\lbf\,,
\eeq
with $\bO^\lbf\in\textrm{OT}^{\bK_4}$ and $\bR^\lbf\in\textrm{ST}^{\bK_4}$. In particular, $\bR^\lbf$ is persymmetric, and we parametrise it as
\begin{align}
\bs{R}^{(4F3)}=\left(
\begin{array}{cccc}
 1 & 0 & 0 & 0 \\
 r_1^\lbf & 1 & 0& 0 \\
 r_2^\lbf& r_3^\lbf & 1 & 0 \\
 r_4^\lbf & r_2^\lbf &  r_1^\lbf& 1 \\
\end{array}
\right)\,.
\end{align}
Similarly, we can parametrise $\bO^\lbf$ as
\begin{align}
\bs{O}^\lbf=\left(
\begin{array}{cccc}
 1 & 0 & 0 & 0 \\
 G_1^\lbf & 1 & 0 & 0 \\
 G_2^\lbf & 0 & 1 & 0 \\
 -G_1^\lbf G_2^\lbf & -G_2^\lbf & -G_1^\lbf & 1 \\
\end{array}
\right)\,.
\end{align}
The functions $G_1^\lbf$, $G_2^\lbf$ and $r_i^\lbf$, $i=1,\ldots,4$, provide an alternative set of generators for $\cFeps$. The relations between the two sets of generators is easily obtained from eq.~\eqref{eq:Ut0_CY3}. We find
\begin{align}\label{rels4F3}
r_1^\lbf&=\frac{1}{2} (t_1 + t_6)\notag\,,\\
r_2^\lbf&=\frac{1}{4}(t_1 t_3+2 t_2-t_3 t_6+2 t_5)\notag\,,\\
r_3^\lbf&=t_3\,,\\
r_4^\lbf&=\frac{1}{8}\Big(6 t_1 t_2-t_1^2 t_3-2 t_1 t_5-2 t_2 t_6+t_3 t_6^2+8 t_4-2 t_5 t_6\Big)\notag\,,\\
 G_1^\lbf&=\frac{1}{2} \left(t_1-t_6\right)\,,\notag\\
 G_2^\lbf&=\frac{1}{4} \left(2 t_2-t_1 t_3-2 t_5+t_3 t_6\right)\,.\nonumber
\end{align}
We know that the functions $r_i^\lbf$ can be expressed in terms of elements from $\cFss$ by solving the constraint in eq.~\eqref{constrantR}, which yields,
\begin{align}
r_1^\lbf&=Y_2^{1/3} \psi_0^{2/3}\,\frac{ \sigma^m_1-\lambda\,\sigma^n_1 }{2(1-\lambda)^{2/3}}\,, \notag\\
r_2^\lbf &=\frac{\psi_0^{4/3}}{Y_2^{1/3}}\,\frac{4\sigma^{m}_2-(\sigma_1^m)^2-2\,\lambda\,(2\sigma^m_2-\sigma^m_1\sigma^n_1+2\sigma^n_2)+\lambda^2(4\sigma^n_2-(\sigma^n_1)^2)}{8(1-\lambda)^{4/3}}\,, \notag \\
r_3^\lbf&=\frac{r_1^\lbf}{ Y_2}\,, \\
r_4^\lbf &=\frac{\psi_0^2}{16(1-\lambda)^2}\Big[(\sigma_1^m)^3-4\sigma^m_1\sigma_2^m+8\sigma_3^m-\lambda^3((\sigma_1^n)^3-4\sigma_1^n\sigma_2^n+8\sigma_3^n)\notag \\
&\qquad +\lambda\,(4\sigma_1^m\sigma_2^m-16\sigma_3^m-3(\sigma_1^m)^2\sigma_1^n+4\sigma_2^m\sigma_1^n+4\sigma_1^m\sigma_2^n-8\sigma_3^n) \notag \\
&\qquad +\lambda^2(8\sigma_3^m-4\sigma_1^n\sigma_2^n-4\sigma_2^m\sigma_1^n+3\sigma_1^m(\sigma_1^n)^2-4\sigma_1^m\sigma_2^n+16\sigma_3^n)\Big] \,, \notag
\end{align}
where 
\beq
\sigma_k^m = \sigma_k(m_1,m_2,m_3) \textrm{~~~and~~~}\sigma_k^n = \sigma_k(n_1,n_2,n_3,n_4)\,,
\eeq
and $\sigma_k$ denotes the elementary symmetric polynomial of degree $k$. Furthermore, $\psi_0$ is the solution of the CY operator holomorphic at the MUM-point $\lambda=0$ and $Y_2$ is the associated Yukawa coupling~\cite{Bogner:2013kvr}. The precise from of the Yukawa coupling is immaterial for our purposes, and it suffices to say that it is holomorphic at the MUM-point and can be expressed in terms of minors of the period matrix of the CY threefold. We thus see that, as expected, $\cFeps$ can be generated by the two functions $G_1^\lbf$ and $G_2^\lbf$ alone. These functions can be written as (iterated) integrals over elements from $\cFss$. While we do not have a proof, we expect these functions not to be expressible in terms of functions from $\cFss$. 

We stress that, in principle, it is not mandatory to use Proposition~\ref{prop:main_1}, and one may attempt to solve the constraints from eq.~\eqref{eq:A_phiA} directly in terms of the six generators $t_i$. The resulting equations, however, are non-linear, and are harder to solve. If we change instead to the new set of generators provided by Proposition~\ref{prop:main_1}, we can obtain the $r_i^\lbf$ by only solving simpler, linear constraints. 

\paragraph{Obtaining $\bDelta$ analytically.}
We conclude this example by illustrating the strategy introduced at the end of section~\ref{sec:consequences} of how to determine the canonical intersection matrix without having to evaluate the matrix of intersection numbers in the original basis or having to resort to numerical evaluations. We will see that we recover the form predicted in eq.~\eqref{deltaCY1}, though the method described here does not rely on eq.~\eqref{deltaCY1} and it can be applied also to other cases. 

We assume that $\bDelta$ is symmetric.\footnote{If we do not find a solution where $\bDelta$ is symmetric, then we just need to restart the reasoning with an antisymmetric matrix.}
With the rotation in eq.~\eqref{eq:bUt_CY3} it follows that any symmetric matrix $\widehat{\bC}$ satisfying the closure condition in eq.~\eqref{eq:Ctilde_Gpar} that lies in the solution space of eq.~\eqref{eq:Omega1CTilde1} is of the form
\begin{align}
\widehat{\bC}=c\left(
\begin{array}{cccc}
 0 & 0 & 0 & 1 \\
 0 & 0 & 1& \widehat{C}_{2,4} \\
 0 & 1& \widehat{C}_{3,3} &\widehat{C}_{3,4} \\
 1 &\widehat{C}_{2,4} & \widehat{C}_{3,4} & \widehat{C}_{4,4} \\
\end{array}
\right)\,,
\end{align}
with some normalization $c\in \mathcal{F}_{\mathrm{ss}}$
and 
\begin{align}
\widehat{C}_{2,4}=&\frac{(Y_2 \psi_0^2)^{\frac{1}{3}} \left(\lambda \sigma_1^n-\sigma_1^m\right)}{\left(1-\lambda\right)^{2/3}}\,,\notag\\
\widehat{C}_{3,3}=&\frac{\left(\psi_0^2\right)^{\frac{1}{3}} \left(\lambda \sigma_1^n-\sigma_1^m\right)}{\left((1-\lambda) Y_2\right)^{2/3}}\,,\\
\widehat{C}_{3,4}=&\frac{\left(\psi_0^2\right)^{2/3} }{(1-\lambda)^{\frac{4}{3}} Y_2^{\frac{1}{3}}}\Big[-2 \lambda \sigma_1^m \sigma_1^n+(\lambda-1) \left(\sigma_2^m-\lambda \sigma_2^n\right)+\left(\sigma_1^m\right)^2+\lambda^2 \left(\sigma_1^n\right)^2\Big]\,,\nonumber\\
\widehat{C}_{4,4}=&\frac{\psi_0^2 }{(1-\lambda)^2}\Big[2\sigma_1^m\sigma_2^m-(\sigma_1^m)^3-\sigma_3^m+\lambda(2\sigma_3^m-2\sigma_1^m\sigma_2^m+3(\sigma_1^m)^2\sigma_1^n\nonumber\\
&-2\sigma_1^m\sigma_2^n-2\sigma_2^m\sigma_1^n+\sigma_3^n)  -\lambda^2 (\sigma_3^m+ 3\sigma_1^m(\sigma_1^n)^2-2\sigma_1^m\sigma_2^n \nonumber\\
&-2\sigma_2^m\sigma_1^n -2\sigma_1^n\sigma_2^n+2\sigma_3^n)+\lambda^3((\sigma_1^n)^3-2\sigma_1^n\sigma_2^n+\sigma_3^n)\Big]\,. \nonumber
\end{align}
Rotating to $\bDelta$ gives
\begin{align}
\bDelta=
c\left(
\begin{array}{cccc}
 0 & 0 & 0 & 1 \\
 0 & 0 & 1 & \Delta_{2,4} \\
 0 &1& \Delta_{3,3}& \Delta_{3,4} \\
 1 &\Delta_{4,2}&\Delta_{4,3} & \Delta_{4,4}\\
\end{array}
\right) \,,
\end{align}
with 
\beq\bsp\label{eq:Delta4F3tdep}
\Delta_{2,4}&=\widehat{C}_{2,4}+t_1 + t_6\,,\\
\Delta_{3,3}&=\widehat{C}_{3,3}+ 2t_3\,,\\
\Delta_{3,4}&=\widehat{C}_{3,4}+t_5 +t_3 \widehat{C}_{2,4}+t_6 \left(\widehat{C}_{3,3}+t_3 \right)+ t_2\,,\\
\Delta_{4,4}&=\widehat{C}_{4,4}+t_6^2 \widehat{C}_{3,3}+2 t_6 \widehat{C}_{3,4}+2 t_4 +2 t_5 \left(\widehat{C}_{2,4}+t_6 \right)\,.
\esp\eeq
The only entries that are independent of the $\eps$-functions $t_i$ are either zero or lie on the anti-diagonal and are equal to $c$, implying that $c$ is a constant. We require all linearly independent entries that contain $\eps$-functions $t_i$ to be constant. This immediately gives the relations in eq.~\eqref{rels4F3}. 
The canonical intersection matrix takes the form 
 \beq\label{mostgendelta}
 \bDelta_d=\left(
 \begin{array}{cccc}
  0 & 0 & 0 & 1 \\
  0 & 0 & 1 & d_1 \\
  0 & 1 & d_2 & d_3 \\
  1 & d_1 & d_3 & d_4 \\
 \end{array}
 \right)\,,
 \eeq
where the $d_i$ are arbitrary constants introduced by requiring the relations in eq.~\eqref{eq:Delta4F3tdep} to evaluate to constants. The overall constant $c$ can be absorbed into $f(\epsilon)$ in eq.~\eqref{eqfeps}. For $d_i=0$ we recover the form of the intersection matrix in eq.~\eqref{deltaCY1}. We see that we can extract in one go both the relations from eq.~\eqref{eq:Delta4F3tdep} defining $\bR$ and the canonical intersection matrix from eq.~\eqref{deltaCY1}.

Let us conclude by giving an interpretation of the arbitrary coefficients $d_i$ introduced in eq.~\eqref{mostgendelta}. In section~\ref{sec:consequences} we have argued that the most general form of the canonical intersection matrix is given by eq.~\eqref{eq:Delta'_to_Delta}. We now show that the free parameters $d_i$ are precisely related to this freedom in defining the canonical intersection matrix. To see this, pick a constant lower-triangular unipotent $\bM\in\Gpar$,
 \begin{align}
 \bM=\left(
\begin{array}{cccc}
 1 & 0 & 0 & 0 \\
c_1 & 1 & 0 & 0 \\
 c_2 & c_3& 1 & 0 \\
 c_4 &c_5 & c_6& 1 \\
\end{array}
\right)\,.
 \end{align}
 We can write it as $\bM=\bM_R\bM_O$, where $\bM_R$ is $\bK_4$-symmetric and $\bM_O$ is $\bK_4$-orthogonal. We have
  \beq\bsp
 \bM_O &=\left(
\begin{array}{cccc}
 1 & 0 & 0 & 0 \\
 o_1 & 1 & 0 & 0 \\
 o_2 & 0 & 1 & 0 \\
 -o_1\, o_2 & -o_2 & -o_1 & 1 \\
\end{array}
\right)\,,\\
\bM_R&=\left(
\begin{array}{cccc}
 1 & 0 & 0 & 0 \\
 \frac{d_1}{2} & 1 & 0 & 0 \\
 \frac{1}{8} (4 d_3-d_2 d_2) & \frac{d_2}{2} & 1 & 0 \\
 \frac{1}{16} \left(d_2 d_1^2-4 d_3 d_1+8 d_4\right) & \frac{1}{8} (4 d_3-d_1 d_2 )& \frac{d_1}{2} & 1 \\
\end{array}
\right)\,,
\esp\eeq
with
 \begin{align}\label{didelta}
 o_1&=\frac{1}{2}(c_1-c_6)\notag\,,\quad 
 o_2=\frac{1}{4} \left(2 c_2-c_1 c_3-2 c_5+c_3 c_6\right)\notag\,,\\
 d_1&=c_1+c_6\,,\quad d_2=2 c_3\,, \quad d_3=c_2+c_5+c_3 c_6\,, \quad d_4=c_4+c_5 c_6\notag\,\,.
 \end{align}
 It is now easy to see that
 \beq
 \bM_R\bK_4\bM_R^T = \bDelta_d\,,
 \eeq
 i.e., the arbitrary parameters $d_i$ precisely parametrise the different forms for the canonical intersection matrix discussed in section~\ref{sec:consequences}.

\subsection{Lauricella functions associated to higher-genus curves}
\label{subsec:higherGenus}
 
While many efforts have focused on Feynman integrals associated with CY varieties, higher-genus Riemann surfaces have recently attracted some attention~\cite{Doran:2023yzu,Huang:2013kh,Georgoudis:2015hca,Marzucca:2023gto,Duhr:2024uid}. We will now study setups which we expect to appear in the maximal cuts of Feynman integrals associated with a hyperelliptic curve of genus $g$, i.e., a Riemann surface defined by a polynomial equation
\begin{equation}
    y^2=P(x) \,,
\end{equation}
where $P(x)$ is of degree $2g+1$ or $2g+2$. We will focus on maximal cuts admitting a Baikov representation with twist of the form
\begin{equation}\label{eq:Phi_g}
   \Phi_g= (x-b_1)^{-\frac{1}{2}+a_1\eps}\dots (x-b_n)^{-\frac{1}{2}+a_n\eps} \,,
\end{equation}
for $n=2g+1$ or $2g+2$. The $b_i$ are distinct complex numbers, and the $a_i$ are parameters. There may be further linear factors that become unity for $\eps=0$, which may lead to additional master integrals. 

The twist $\Phi_g$ in eq.~\eqref{eq:Phi_g} defines a twisted cohomology group, and the corresponding twisted periods can be expressed in terms of Lauricella $F_D^{(n)}$ functions. Hypergeometric functions of this type  and the associated twisted cohomology theories are well studied, and we refer to the literature for details (see, e.g., refs.~\cite{matsumotoFD,Brown:2019jng}). In ref.~\cite{Duhr:2024uid} some of the authors have shown how to apply the algorithm of ref.~\cite{Duhr:2025lbz} to bring the corresponding system of differential equations into a canonical form. While the explicit computations in ref.~\cite{Duhr:2024uid} were restricted to $g=1,2$, it is possible to extend the results to arbitrary genus. The details of the rotation to the canonical form are irrelevant for the purposes of this paper, and we only want to discuss the structure of the canonical intersection matrix and give an upper bound on the number of generators for $\cFeps$ here.
We will distinguish whether the hyperelliptic curve is defined by a polynomial of odd or even degree, and we will refer to the corresponding curves as being \emph{odd} or \emph{even}. For an odd curve there are $2g$ master integrals, which precisely correspond to the first and second kind differentials spanning the (first) cohomology group of the hyperelliptic curve. For an even curve there is an additional third kind differential which is associated with an extra puncture on the curve, typically chosen to be at infinity.

\paragraph{Odd curves.}
For an odd curve, we expect the final rotation to be independent of $\eps$ and to take the form
\begin{equation}\label{eq:Ut_odd}
    \bUt^\lodd=\begin{pmatrix}
        \mathds{1} & \bs{0} \\ \bT & \mathds{1}
    \end{pmatrix} \,.
\end{equation}
The filtration $F_{\mathrm{t},p}$ is the filtration provided by the Hodge filtration on the first cohomology group of the underlying hyperelliptic curve at $\eps=0$. There is a basis in which the canonical intersection matrix takes the block anti-diagonal form
\begin{equation}
    \bDelta\!^\lodd=\begin{pmatrix}
        \bs{0} & \mathds{1} \\ \mathds{1} & \bs{0} 
    \end{pmatrix} \,.
\end{equation}
The above matrices are block matrices, i.e., each entry is a $g\times g$ matrix for $g>0$. This is precisely what was observed for $g=1,2$ in ref.~\cite{Duhr:2024uid}, and we expect it to hold also for higher genera. Note that Observation~\ref{obs:Ut_in_UT} is satisfied.

Let us now discuss the number of generators for $\cFeps$. From eq.~\eqref{eq:Ut_odd}, we see that $\cFeps$ is generated by the $g^2$ entries of the matrix $\bT$. This number of generators, however, is too large, and we can identify a smaller set. To this effect, we apply Proposition~\ref{prop:main_1} to write
\beq
\bUt^\lodd = \bO^\lodd\bR^\lodd\,,
\eeq
with
\beq
\bO^\lodd = \begin{pmatrix}
        \mathds{1} & \bs{0} \\ \bT_{\!A} & \mathds{1}
    \end{pmatrix}\in\textrm{OT}^{\Delta^\lodd}\textrm{~~~and~~~}\bR^\lodd=\begin{pmatrix}
        \mathds{1} & \bs{0} \\ \bT_{\!S} & \mathds{1}
\end{pmatrix}\in\textrm{ST}^{\Delta^\lodd}\,.
\eeq
It is easy to check that the matrices $\bT_{\!S}$ and $\bT_{\!A}$ must be symmetric and antisymmetric, respectively. From eq.~\eqref{constrantR} it follows that the entries of $\bT_{\!S}$ lie in $\cFss$, and so the entries of $\bT_{\!A}$ alone generate $\cFeps$. Hence, we conclude that an upper bound on the number of generators for $\cFeps$ is given by
\beq\label{eq:dim_OT_higher_g}
\dim\textrm{OT}^{\Delta^\lodd} = \frac{g(g-1)}{2}\,.
\eeq
Note that this implies that $\cFeps=\cFss$ for $g=1$, in agreement with the corresponding result for $N=2$ in eq.~\eqref{eq:dim_OTKN} (because curves of genus $g=1$ are at the same time CY one-folds). Moreover, we see that we need a single generator for $g=2$, in agreement with the explicit computations in ref.~\cite{Duhr:2024uid}. In ref.~\cite{Duhr:2024uid} it was also argued that this single generator of $\cFeps$ for $g=2$ does not lie in $\cFss$, so that the bound is saturated at least for $g\le 2$, and we expect it to be always saturated.

\paragraph{Even curves.}
In this case we expect the final rotation to be a block matrix of the form
\begin{equation}
    \bUt^\leven=\begin{pmatrix}
        1 \\
        &  \ddots  \\
        & &  1   \\
        t_{0,1} & \dots & t_{0,g} & 1 \\
        t_{1,1}& \dots & t_{1,g}& t_{1,g+1} & 1 \\
        \vdots & & \vdots & \vdots & &\ddots \\
        t_{g,1} & \dots & t_{g,g} & t_{g,{g+1}} &  &&1
        
    \end{pmatrix}\,.
    \label{eq:higher_genus_even_Ut}
\end{equation}
The filtration $F_{\!\mathrm{t},p}$ is again provided by the (mixed) Hodge structure of the underlying (punctured) hyperelliptic curve at $\eps=0$, and there is a basis in which
 the canonical intersection matrix is
\begin{equation}\label{eq:Delta_higher_genus_even}
    \bDelta\!^\leven=\begin{pmatrix}
        & & \mathds{1} \\
        & 1 & \\
        \mathds{1} & &
    \end{pmatrix} \,.
\end{equation}
One might expect that one also needs further entries below the diagonal, i.e., some entries $t_{i,j}$ with $i>1,j>g+1$. However, one can see that these entries are necessarily zero for the expected form of $\bDelta\!^\leven$, because otherwise Observation~\ref{obs:Ut_in_UT} would be violated.

In general the final rotation matrix $\bUt^\leven$ has $g(g+2)$ degrees of freedom, and $\cFeps$ is generated by the $g(g+2)$ functions $t_{i,j}$. Using Proposition~\ref{prop:main_1}, we see that these degrees of freedom can be separated into
\beq\label{eq:higher_genus_even_dim}
\dim\textrm{OT}^{\Delta\!^\leven} = \frac{g(g+1)}{2} \textrm{~~~and~~~}
\dim\textrm{ST}^{\Delta\!^\leven} = \frac{g(g+3)}{2}\,.
\eeq
Hence, $\cFeps$ can be generated by at most $\frac{g(g+1)}{2}$ functions.
This is consistent with the results of ref.~\cite{Duhr:2024uid} for $g=1,2$.

\subsection{A four-loop banana integral with two different masses}
\label{fourlooptwomassbaanna}
\begin{figure}[!h]
\centering
\begin{tikzpicture}
\coordinate (llinks) at (-2.5,0);
\coordinate (rrechts) at (2.5,0);
\coordinate (links) at (-1.5,0);
\coordinate (rechts) at (1.5,0);
\begin{scope}[very thick,decoration={
    markings,
    mark=at position 0.5 with {\arrow{>}}}
    ] 
\draw [-, thick,postaction={decorate}] (links) to [bend right=25]  (rechts);
\draw [-, thick,postaction={decorate}] (links) to [bend right=-25]  (rechts);

\draw [-, thick,postaction={decorate}] (links) to [bend left=85]  (rechts);
\draw [-, thick,postaction={decorate}] (llinks) to [bend right=0]  (links);
\draw [-, thick,postaction={decorate}] (rechts) to [bend right=0]  (rrechts);
\draw [-, thick,postaction={decorate}] (links) to [bend right=0]  (rechts);
\end{scope}
\begin{scope}[very thick,decoration={
    markings,
    mark=at position 0.5 with {\arrow{>}}}
    ]
\draw [-, thick,postaction={decorate}] (links) to  [bend right=85] (rechts);
\end{scope}
\node (d1) at (0,1.1) [font=\scriptsize, text width=.15 cm]{$m_1$};
\node (d2) at (0,0.6) [font=\scriptsize, text width=.15 cm]{$m_1$};
\node (d3) at (0,-0.17) [font=\scriptsize, text width=.15 cm]{$m_1$};
\node (d4) at (0,-.65) [font=\scriptsize, text width=0.15 cm]{$m_2$};
\node (d4) at (0,0.2) [font=\scriptsize, text width=0.15 cm]{$m_1$};
\node (p1) at (-2.0,.25) [font=\scriptsize, text width=1 cm]{$p$};
\node (p2) at (2.4,.25) [font=\scriptsize, text width=1 cm]{$p$};
\end{tikzpicture}
\caption{The four-loop banana graph with two unequal masses.
}
\label{fig:banana4}
\end{figure}
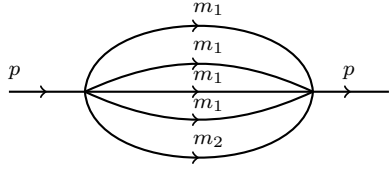
Next we apply our considerations to the family of Feynman integrals depicted in figure~\ref{fig:banana4}.
In our conventions, this family is defined as:
\begin{equation}
    I_{\nu_1,\nu_2,\dots,\nu_{14}}(p^2,m_1^2,m_2^2;\epsilon)=e^{4\gamma_\text{E} \eps}\int \dfrac{\text{d}^D \ell_1}{i\pi^{\frac{D}{2}}}\dfrac{\text{d}^D \ell_2}{i\pi^{\frac{D}{2}}} \dfrac{\text{d}^D \ell_3}{i\pi^{\frac{D}{2}}}\dfrac{\text{d}^D\ell_4}{i\pi^{\frac{D}{2}}} \dfrac{1}{D_1^{\nu_1} D_2^{\nu_2}\dots D_{14}^{\nu_{14}}}\,,
\end{equation}
with the denominators
\begin{align}
    D_1&=\ell_1^2-m_1^2\,,\quad D_2=\ell_2^2-m_1^2\,,\quad D_3=\ell_3^2-m_1^2\,,\quad
    D_4=\ell_4^2-m_1^2\,,\notag\\
    D_5&=(\ell_1+\ell_2+\ell_3+\ell_4-p)^2-m_2^2\,,\quad D_6=(\ell_1-p)^2\,,\quad D_7=(\ell_2-p)^2\,,\notag\\
    D_8&=(\ell_3-p)^2\,,\quad D_9=(\ell_4-p)^2\,,\quad D_{10}=(\ell_1-\ell_2)^2\,,\quad D_{11}=(\ell_1-\ell_3)^2\,,\notag\\
    D_{12}&=(\ell_1-\ell_4)^2\,,\quad D_{13}=(\ell_2-\ell_3)^2\,,\quad D_{14}=(\ell_2-\ell_4)^2\,.
\end{align}
We define the two dimensionless variables
\begin{equation}
    x_1=\dfrac{m_1^2}{p^2}\textrm{~~~and~~~} x_2=\dfrac{m_2^2}{p^2}\,,
\end{equation}
where $x_1=x_2=0$ is a MUM-point.
This integral family can be expressed in terms of nine master integrals, which we may choose as
\begin{equation}
 \bs{I^\text{(4-1)}}=\begin{pmatrix}I_{1,1,1,1,0,0,0,0,0,0,0,0,0,0}\\
 I_{1,1,1,0,1,0,0,0,0,0,0,0,0,0}\\
 I_{1,1,1,1,1,0,0,0,0,0,0,0,0,0}\\
 \partial_{x_1} I_{1,1,1,1,1,0,0,0,0,0,0,0,0,0}\\
 \partial_{x_2} I_{1,1,1,1,1,0,0,0,0,0,0,0,0,0}\\
 I_{1,1,1,1,1,-1,0,0,0,0,0,0,0,0}\\
 \partial_{x_1}^2 I_{1,1,1,1,1,0,0,0,0,0,0,0,0,0}\\
 \partial_{x_2}^2 I_{1,1,1,1,1,0,0,0,0,0,0,0,0,0}\\
 \partial_{x_1}^2\partial_{x_2} I_{1,1,1,1,1,0,0,0,0,0,0,0,0,0}\end{pmatrix} \,.
 \end{equation}
 Here, the first two integrals are tadpole integrals and the integral $I_{1,1,1,1,1,-1,0,0,0,0,0,0,0,0}$ involves an ISP.
 The other six master integrals define, at $\eps=0$, the periods of a two-parameter family of CY threefolds. More details about these CY threefolds and their periods can be found in ref.~\cite{Maggio:2025jel} and in appendix \ref{Yuk_appendix}. In that appendix we also discuss how to use the self-duality of the periods and eq.~\eqref{DEQC} to compute the Yukawa couplings as rational functions in $x_1$ and $x_2$.

A canonical basis for this family of Feynman integrals was determined in section 5.2 of ref.~\cite{Maggio:2025jel}. The rotation to this canonical basis is achieved through a sequence of rotations, 
\beq
\bs{J}^\text{(4-1)}=\bUt^{(\text{4-1})} \bUeps^\text{(4-1)}\bUss^\text{(4-1)} \bs{I}^{(\text{4-1})}\,,
\eeq
with $\bs{J}^\text{(4-1)}$ the vector of master integrals in the canonical basis. The expressions for the first two rotations
$\bs{U}_\eps^\text{(4-1)}$ and $\bs{U}_{\text{ss}}^\text{(4-1)}$ were given in ref.~\cite{Maggio:2025jel}. The entries of these matrices lie in $\mathcal{F}_\text{ss}$.\footnote{In particular we will be using the basis of eq.~(5.39) of ref.~\cite{Maggio:2025jel} after setting the auxiliary function $F_{i,j}^{\text{4-1}}$ to zero. 
Additionally, we make a slight change of ordering for the master integrals: $\{M_0^{\text{4-1}},M_1^{\text{4-1}},M_2^{\text{4-1}},M_3^{\text{4-1}},M_4^{\text{4-1}},M_7^{\text{4-1}},M_6^{\text{4-1}},M_5^{\text{4-1}},M_8^{\text{4-1}}\}$.} 

 The last rotation $\bs{U}_t^\text{(4-1)}$ is the focus of our paper,\footnote{Note that the functions $t_{i,j}^\text{(4-1)}$ can be related in a non-trivial way to the auxiliary functions $F_{i,j}^\text{(4-1)}$ (and their derivatives) of  ref.~\cite{Maggio:2025jel}. The reason for this non-trivial relation is that in ref.~\cite{Maggio:2025jel} there is no separation between the rotations $\bUss$ and $\bUt$.} and it reads 
 \begin{align}
 \bUt^{(\text{4-1})}{=}\begin{pmatrix}1&0&0&0&0&0&0&0&0\\
 0&1&0&0&0&0&0&0&0\\
 0&0&1&0&0&0&0&0&0\\
 0&0&t_{4,3}^\text{(4-1)}&1&0&0&0&0&0\\
 0&0&t_{5,3}^\text{(4-1)}&0&1&0&0&0&0\\
 0&0&t_{6,3}^\text{(4-1)}&0&0&1&0&0&0\\
 0&0&t_{7,3}^\text{(4-1)}{+}\frac{t_{7,3,{-}1}^\text{(4-1)}}{\eps}&t_{7,4}^\text{(4-1)}&t_{7,5}^\text{(4-1)}&0&1&0&0\\
 0&0&t_{8,3}^\text{(4-1)}{+}\frac{t_{8,3,{-}1}^\text{(4-1)}}{\eps}&t_{8,4}^\text{(4-1)}&t_{8,5}^\text{(4-1)}&0&0&1&0\\
 t_{9,1}^\text{(4-1)}&t_{9,2}^\text{(4-1)}&t_{9,3}^\text{(4-1)}{+}\frac{t_{9,3,{-}1}^\text{(4-1)}}{\eps}{+}\frac{t_{9,3,{-}2}^\text{(4-1)}}{\eps^2}&t_{9,4}^\text{(4-1)}{+}\frac{t_{9,4,{-}1}^\text{(4-1)}}{\eps}&t_{9,5}^\text{(4-1)}{+}\frac{t_{9,5,{-}1}^\text{(4-1)}}{\eps}&t_{9,6}^\text{(4-1)}&t_{9,7}^\text{(4-1)}&t_{9,8
}^\text{(4-1)}&1\end{pmatrix} \,.
 \end{align}
The functions $t_{i,j}^\text{(4-1)}$ are fixed by solving first-order differential equations and provide a set of generators for $\mathcal{F}_\eps$.
 We give an expression for $\bUt^{(\text{4-1})}$ expanded around the MUM-point $x_1=x_2=0$ in the ancillary file {\bf{\texttt banana.nb}}.

We now restrict the discussion to the maximal cuts. This can easily be done by deleting the first two rows and columns of all matrices introduced so far, and we only keep the last seven entries in the vector of master integrals. In particular, we denote the differential equation matrix for the maximal cuts in the canonical basis by $\bs{A}_\text{MC}^\text{(4-1)}$, and the last rotation matrix needed to bring the system of maximal cuts into canonical form by $\bs{U}_{\text{t,}\,\text{MC}}^{\text{(4-1)}}$.

We start by determining the canonical intersection matrix $\bDelta^\text{\!(4-1)}$. We proceed as explained in section~\ref{subsec:obtaining_Delta}. In our case eq.~\eqref{num} takes the form,
 \begin{equation}\label{4-1-diff}
     \bs{A}_\text{MC}^\text{(4-1)}\bs{\Delta}^\text{\!(4-1)}-\bs{\Delta}^\text{\!(4-1)}\bs{A}_\text{MC}^\text{(4-1)}{}^T=0\,.
 \end{equation}
 We interpret this equation as a linear system for the constant matrix $\bs{\Delta}^\text{\!(4-1)}$. Using the series representation for the $\eps$-functions $t_{i,j}^\text{(4-1)}$, we find,\footnote{Note that in ref.~\cite{Maggio:2025jel} we find a non-diagonal intersection matrix. This is expected, because the basis of ref.~\cite{Maggio:2025jel} is related to the one here through a constant rotation which (anti)-diagonalizes $\bs{\Delta}$.}
\beq
\bDelta^\text{\!(4-1)} = \bK_7\,.
\eeq

After having determined the canonical intersection matrix $\bs{\Delta}^\text{\!(4-1)}$, we can use it to reduce the number of generators of $\cFeps$. 
 Note that we have $\bs{U}_{\text{t,}\,\text{MC}}^{\text{(4-1)},\,(0)}\in \mathrm{UT}^{\bK_7}$. 
 Using Proposition \ref{prop:main_1}, we can write $\bs{U}_{\text{t,}\,\text{MC}}^{\text{(4-1)},\,(0)}=\bs{O}^\text{(4-1)}\bs{R}^\text{(4-1)}$, 
with
\begin{equation}\label{splittingY}
     \bs{R}^\text{(4-1)}=\begin{pmatrix}1&0&0&0&0&0&0\\
     r_1^\text{(4-1)}&1&0&0&0&0&0\\
     r_2^\text{(4-1)}&0&1&0&0&0&0\\
     r_3^\text{(4-1)}&0&0&1&0&0&0\\
     r_4^\text{(4-1)}&r_5^\text{(4-1)}&r_6^\text{(4-1)}&0&1&0&0\\
     r_7^\text{(4-1)}&r_8^\text{(4-1)}&r_5^\text{(4-1)}&0&0&1&0\\
     r_9^\text{(4-1)}&r_7^\text{(4-1)}&r_4^\text{(4-1)}&r_3^\text{(4-1)}&r_2^\text{(4-1)}&r_1^\text{(4-1)}&1
     \end{pmatrix}\,,\quad
 \end{equation}
 and
 \begin{equation}
 \label{eq:orthPartBanana}
 \bs{O}^\text{(4-1)}=\begin{pmatrix}1&0&0&0&0&0&0\\
     G_1^\text{(4-1)}&1&0&0&0&0&0\\
     G_2^\text{(4-1)}&0&1&0&0&0&0\\
     G_3^\text{(4-1)}&0&0&1&0&0&0\\
     G_4^\text{(4-1)}&G_5^\text{(4-1)}&0&0&1&0&0\\
     G_6^\text{(4-1)}&0&-G_5^\text{(4-1)}&0&0&1&0\\
    G_7^\text{(4-1)}&G_8^\text{(4-1)}& G_9^\text{(4-1)} &{-}G_3^\text{(4-1)}&{-}G_2^\text{(4-1)}&{-}G_1^\text{(4-1)}&1
     \end{pmatrix}\,,
     \end{equation}
     where we have abbreviated
     \begin{align}
   \notag      G_7^\text{(4-1)}&= {-}\dfrac{1}{2} (G_3^{\text{(4-1)}})^2 {-} G_2^\text{(4-1)} G_4^\text{(4-1)} {-} G_1^\text{(4-1)} G_6^\text{(4-1)}\,,\\
         G_8^\text{(4-1)}&={-}G_2^\text{(4-1)} G_5^\text{(4-1)} {-} G_6^\text{(4-1)}\,,\\
\notag         G_9^\text{(4-1)}&= G_1^\text{(4-1)}G_5^\text{(4-1)}-G_4^\text{(4-1)}\,. 
     \end{align}
     We now use  eq.~\eqref{constrantR} to fix the entries $r_i^\text{(4-1)}$ of $\bs{R}^\text{(4-1)}$. In our case eq.~\eqref{constrantR} takes the form
\begin{equation}\label{eq:banana_constraint}
    \big(\bs{R}^\text{(4-1)}\big)^2 \,\bs{\tilde{\Omega}}_1^\text{(4-1)}-\varphi_{\bK_7}\Big(\bs{\tilde{\Omega}}_1^\text{(4-1)}\Big)\,\big(\bs{R}^\text{(4-1)}\big)^2=0\,,
\end{equation}
where $\bs{\tilde{\Omega}}_1^\text{(4-1)}=(\bs{U}_\text{t, MC}^{\text{(4-1)},\,(0)})^{-1}\bs{A}_\text{MC}^\text{(4-1)}\bs{U}_\text{t, MC}^{\text{(4-1)},\,(0)}$ is independent of the functions $t_{i,j}^\text{(4-1)}$.
We interpret eq.~\eqref{eq:banana_constraint} as a set of polynomial constraints for the functions $r_i^\text{(4-1)}$, or a linear systems for the entries of $\big(\bs{R}^\text{(4-1)}\big)^2$. 
The latter matrix takes a similar form to $\bs{R}^\text{(4-1)}$,
\begin{equation}
    (\bs{R}^\text{(4-1)})^2=\begin{pmatrix}1&0&0&0&0&0&0\\
    f_1^\text{(4-1)}&1&0&0&0&0&0\\
     f_2^\text{(4-1)}&0&1&0&0&0&0\\
     f_3^\text{(4-1)}&0&0&1&0&0&0\\
     f_4^\text{(4-1)}&f_5^\text{(4-1)}&f_6^\text{(4-1)}&0&1&0&0\\
     f_7^\text{(4-1)}&f_8^\text{(4-1)}&f_5^\text{(4-1)}&0&0&1&0\\
     f_9^\text{(4-1)}&f_7^\text{(4-1)}&f_4^\text{(4-1)}&f_3^\text{(4-1)}&f_2^\text{(4-1)}&f_1^\text{(4-1)}&1
     \end{pmatrix}\,,
\end{equation}
with the added benefit of admitting a simple relationship to the functions $t_{i,j}^\text{(4-1)}$
\beq\bsp
    f_1^\text{(4-1)}&=t_{4,3}^\text{(4-1)}+t_{9,8}^\text{(4-1)}\,,\\
    f_2^\text{(4-1)}&=t_{5,3}^\text{(4-1)}+t_{9,7}^\text{(4-1)}\,,\\
    f_3^\text{(4-1)}&=t_{6,3}^\text{(4-1)}+t_{9,6}^\text{(4-1)}\,,\\
    f_4^\text{(4-1)}&=t_{7,3}^\text{(4-1)} + t_{5,3}^\text{(4-1)} t_{7,5}^\text{(4-1)} + t_{4,3}^\text{(4-1)} t_{8,5}^\text{(4-1)} + t_{9,5}^\text{(4-1)}\,,\\
    f_5^\text{(4-1)}&=t_{7,4}^\text{(4-1)}+t_{8,5}^\text{(4-1)}\,,\\
    f_6^\text{(4-1)}&=2 t_{7,5}^\text{(4-1)}\,,\\
    f_7^\text{(4-1)}&=t_{5,3}^\text{(4-1)}t_{7,4}^\text{(4-1)} +t_{8,3}^\text{(4-1)}  + t_{4,3}^\text{(4-1)}t_{8,4}^\text{(4-1)} +t_{9,4}^\text{(4-1)}\,,\\
    f_8^\text{(4-1)}&=2t_{8,4}^\text{(4-1)}\,,\\
    f_9^\text{(4-1)}&=2t_{5,3}^\text{(4-1)}t_{7,3}^\text{(4-1)} + 2 t_{4,3}^\text{(4-1)}t_{8,3}^\text{(4-1)} + (t_{6,3}^\text{(4-1)})^2 + 2 t_{9,3}^\text{(4-1)}\,.
\esp\eeq
In the ancillary file {\bf{\texttt banana.nb}}, we give the explicit relations between the functions $r_i^\text{(4-1)}$ and $f_i^\text{(4-1)}$, and both can be expressed in terms of the elements of $\cFss$.
As an example, the simplest explicit expression for an $f_i^\text{(4-1)}$ in terms of periods and their derivatives is:
\begin{align}
f_3^\text{(4-1)}=-\dfrac{2 i \sqrt{5} \psi_0}{x_1 (\mathfrak{J}_{1,2} \mathfrak{J}_{2,1} -\mathfrak{J}_{1,1} 
   \mathfrak{J}_{2,2} )}\biggl[& -\mathfrak{J}_{1,1} \mathfrak{J}_{1,2}  (\mathfrak{J}_{2,2} -1) x_2\notag\\
   &-\mathfrak{J}_{2,2}  x_1(2 \mathfrak{J}_{1,1}  x_1-\mathfrak{J}_{1,1} 
   x_2+\mathfrak{J}_{1,1} +\mathfrak{J}_{2,1}  x_1)\\
   &+\mathfrak{J}_{1,2}  \mathfrak{J}_{2,1} x_1 (2 x_1+x_2+1)\biggr]\notag\,,
   \end{align}
   where $\psi_0$ is the period that is holomorphic at the MUM-point $x_1=x_2=0$ and the Jacobian $\mathfrak{J}$ is defined by
\begin{equation}
  \label{eq:jacobian}  \begin{pmatrix}\text{d}x_1\\\text{d}x_2\end{pmatrix}=\begin{pmatrix}\mathfrak{J}_{1,1}&\mathfrak{J}_{1,2}\\
    \mathfrak{J}_{2,1}&\mathfrak{J}_{2,2}\end{pmatrix}\begin{pmatrix}\text{d}\tau_1\\\text{d}\tau_2\end{pmatrix}\,,
\end{equation}
where the canonical variables 
\beq\tau_i=\frac{\psi_1^{(i)}}{\psi_0} \,,
\eeq
are defined as a ratio between the periods $\psi_1^{(1)}$, $\psi_1^{(2)}$, which diverge with a single power of a logarithm at the MUM-point, and the holomorphic period $\psi_0$.
   We provide explicit expressions for all $f_i^\text{(4-1)}$ and $r_i^\text{(4-1)}$ in the ancillary file {\bf{\texttt banana.nb}}.
   To conclude, we find that $\cFeps$ is generated by the six functions $G_i^\text{(4-1)}$ from eq.~\eqref{eq:orthPartBanana}. 

\section{Beyond maximal cuts and self-duality}
\label{sec:non-maximal_cuts}

So far we have focused on self-dual scenarios, which include the maximal cuts of Feynman integrals. Our main result is Proposition~\ref{prop:main_1}, which allows us to identify those $\eps$-functions that can be expressed in terms of elements from $\cFss$. In applications to Feynman integrals, however, one needs to solve the full system, including the contributions from non-maximal cuts, where self-duality is lost. In this section we give some perspective on how our method generalises beyond the maximal cut (for an alternative approach how to constrain entries of the differential equation matrix beyond the maximal cut, see ref.~\cite{Pogel:2024sdi}). We discuss two examples where we can obtain relations between $\eps$-functions for a non-self-dual scenario by relating them to a self-dual case. In the first example self-duality is recovered by taking some limit. In the second example, we work in the opposite direction, and we construct an associated self-dual scenario by adding terms to the twist and taking limits at the end. This latter approach is in principle  applicable to Feynman integrals where we add analytic regulators to all propagators.

\subsection{A non-self-dual deformed CY operator}
\label{subsec:CY_non_self-dual}

Let us start by discussing the case of a deformed CY operator $\mathcal{L}_{\eps}$ that is not self-dual for $\eps\neq 0$. Indeed, it was pointed out in ref.~\cite{Duhr:2025lbz} that, even though the CY operator $\mathcal{L}_{\eps=0}$ is essentially self-adjoint by definition, the resulting deformed operator is typically not, unless additional constraints are imposed on some parameters. Focusing on deformed CY operators of hypergeometric type, a notable exception are those where the indices $\bs{\alpha}$ and $\bs{\beta}$ of the hypergeometric function are integers or half-integers for $\eps=0$~\cite{Duhr:2024xsy,Duhr:2025lbz}. This was the case in section~\ref{CY1V}, cf.~eqs.~\eqref{cyparameters} and~\eqref{eq:4f3parameters}. 

In this section we discuss an example of a deformed CY operator of order three that is not self-dual.
More specifically, we consider a hypergeometric ${}_3F_2$ function defined by the twist \eqref{phi32} with
\beq\bsp
\label{cyparametersBSD}
 \bs{\alpha}&=\{\alpha_0,\alpha_1,\alpha_2\}=  \left\{\frac{1}{4} +n_1\eps, \frac{2}{4}+n_2\eps, \frac{3}{4} + n_3\eps\right\}\,,\\
 \bs{\beta} &= \{\beta_1,\beta_2\}=\left\{ \, 1+ m_1\eps, 1+m_2\eps\,\right\}\,.
\esp\eeq
As an initial basis for the twisted cohomology group we choose
the same basis as in eq.~\eqref{initialbasisCY1}.
The corresponding deformed CY operator $\mathcal{L}^\lb_{\eps}$ is not self-dual, but self-duality only holds for $n_1=n_3$.
Nevertheless, we observe that, if we pick the dual basis as
\begin{align}
\check{\varphi}_j^\lb=\frac{\Gamma \left(j-\frac{1}{4}\right) \Gamma \left(\frac{1}{4}-m_2 \eps+n_1 \eps\right)}{\Gamma \left(j-\frac{3}{4}\right) \Gamma \left(\frac{3}{4}-m_2 \eps+n_1 \eps\right)} \frac{(x_1-1)x_2 (x_1-x_2)}{x_1 (\lambda x_2 -1)} \left(\varphi^\lb_{j}\right)_{\eps\rightarrow -\eps, n_1\leftrightarrow n_3} \,,
\end{align}
then, as a consequence of the symmetry of $ _3F_2\left(\alpha_0,\alpha_1,\alpha_2;\beta_1,\beta_2;\lambda\right)$  under exchanging $\alpha_0$ and $\alpha_2$, the period matrix and its dual are related by 
 \begin{align}\label{eq:3F2_A_symmetry}
\bs{\check{P}}^\lb={\bP^\lb}|_{\eps \rightarrow -\eps,n_1\leftrightarrow n_3} \,.
 \end{align}
 In particular, for $n_1=n_3$ we may choose the dual basis such that the period matrix and its dual are simply related by changing the sign of $\eps$.
 
 In ref.~\cite{Duhr:2025lbz} it was discussed how to construct a rotation matrix to a canonical basis. The matrix $\bUt$ has exactly the same form as in eq.~\eqref{eq:bUt_3F2}, and the $\eps$-functions $t_1$, $t_2$ and $t_3$ (and their derivatives) generate the function field $\cFeps$.
 We can also rotate the dual system into canonical form, and we denote the corresponding matrix by $\bs{\check{U}_\textrm{{t}}}$. It has the same general form as in eq.~\eqref{eq:bUt_3F2}, but with a priori different $\eps$-functions, which we denote $\check{t}_i$. We can also determine the canonical intersection matrix. To this effect, we have determined the intersection matrix in the original bases. After rotation to the canonical bases, we find that the following relation numerically holds,
 \begin{align}\label{eq:Delta_3F2_A}
\bC^\lb_c=\bU\bC^\lb\check{\bU}^T=-\eps^4 \,\frac{\Gamma \left(\frac{3}{4}-n_1\eps \right) \Gamma \left(\frac{1}{4}-m_2 \eps +n_1 \eps\right)}{\Gamma \left(\frac{1}{4}-n_3 \eps \right) \Gamma \left(\frac{3}{4}-m_2 \eps +n_3 \eps \right)}\,\bK_3
\,,
\end{align}
 in agreement with eq.~\eqref{deltaCY1}. Equation~\eqref{eq:Delta_3F2_A} imposes the following relations between the entries of $\bUt$ and $\bs{\check{U}}_\textrm{t}$,
 \beq\bsp
\check{t}_3+t_1 &\,= \psi_0\,\frac{ m_1+m_2- \lambda (n_1+n_2+ n_3)}{\sqrt{1- \lambda}}\,,\\
\label{3f2NoSDRel}
\check{t}_1+t_3&\,= \psi_0\, \frac{m_1+m_2- \lambda(n_1+n_2+ n_3)}{\sqrt{1- \lambda}}\,,\\
\check{t}_2 + t_2&\,=t_1t_3-\psi_0\,t_1\,\frac{ m_1+m_2- \lambda(n_1+n_2+n_3)}{\sqrt{1- \lambda}}+m_1m_2\psi_0^2  \\
&\qquad - \lambda(n_1n_2+n_1n_3+n_2n_3)\psi_0^2 \,, \\\check{t}_{2,-1}-t_{2,-1}&\,=\frac{ \lambda}{4}(n_1-n_3)\psi_0^2\,,
\esp\eeq
where $\psi_0$ denotes the solution of the CY operator $\mathcal{L}_{\eps=0}^\lb$ that is holomorphic at the MUM-point $\lambda=0$.
These relations allow us to write the entries of $\bs{\check{U}}_\textrm{t}$ in terms of those from $\bUt$ and express the fact that the $\eps$-functions of a system and its dual are the same (see the discussion in section~\ref{sec:canon_int}). As expected, the lack of self-duality implies that from the canonical intersection matrix alone we cannot constrain the entries of $\bUt$. In particular, at this point we have not achieved any reduction in the number of generators for $\cFeps$.

 In this particular case, however, we can do better, despite the lack of self-duality.
 Indeed, as a consequence of our choice of basis and of eq.~\eqref{eq:3F2_A_symmetry}, we can easily see that we have the relation
\beq
\bs{\check{U}}_\textrm{t}=\bUt|_{\eps\rightarrow -\eps,n_1\leftrightarrow n_3}\,,
\eeq
which implies
\beq
\check{t}_i = t_i|_{n_1\leftrightarrow n_3}\,.
\eeq
We may then use the first relation in eq.~\eqref{3f2NoSDRel} to express $t_1$ in terms of $\check{t}_3$, which can itself be expressed in terms of $t_3|_{n_1\leftrightarrow n_3}$. We obtain in this way an explicit relation between the $\eps$-functions $t_1$ and $t_3$, and so we see that we can generate $\cFeps$ using $t_1$ and $t_2$ alone.

From the preceding discussion it follows that $\cFeps$ can be generated by at most two functions. In section~\ref{sec:deformed_CY} we have argued that, in a self-dual scenario, the number of generators for $\cFeps$ is bounded by $\dim \textrm{OT}^{\bK_N}$, which is equal to 1 for $N=3$. It is thus interesting to ask if we may further reduce the number of generators also in our non-self-dual scenario. We now argue that this is not possible, and that the two functions $t_1$ and $t_2$ are independent and do not lie in $\cFss$. In order to see this, we start by giving a concrete description of the function field $\cFss$. This is possible because $\mathcal{L}_{\eps=0}^\lb$ is a CY operator of degree three, and every such operator is the symmetric square of a CY operator of degree two, which admits a modular parametrisation~\cite{Doran:2005gu,Bogner:2013kvr,BognerThesis}. More specifically, let us change variables to the canonical variable
\beq\label{eq:tau_def_3F2_2}
\tau = \frac{\psi_1(\lambda)}{\psi_0(\lambda)}\,,
\eeq
where $\psi_1(\lambda)$ is the solution of $\mathcal{L}_{\eps=0}^\lb$ that diverges like a single power of a logarithm at the MUM-point $\lambda=0$. 
Its inverse is the mirror map, which in this case can be expressed in terms of modular functions for the congruence subgroup $\Gamma_0(2)$,
\beq
\label{z3F2}
    \lambda = \frac{256\,A(\tau)}{(1 + 64 A(\tau))^2}\,,
\end{equation}
where $A(\tau)$ is a Hauptmodul for $\Gamma_0(2)$, which can be expressed in terms of the Dedekind eta function
\beq
\label{def:A}
A(\tau) = \frac{\eta(2\tau)^{24}}{\eta(\tau)^{24}}\,,\textrm{~~~with~~~} \eta(\tau) = e^{i\pi\tau/12}\prod_{n=1}^\infty(1-e^{2\pi in\tau})\,.
\eeq
The function $\psi_0(\tau) := \psi_0(\lambda(\tau))$ then defines an Eisenstein series of weight two for $\Gamma_0(2)$. Since $\cFss$ is generated by $\psi_0$ and its derivatives, we conclude that we can identify $\cFss$ with the function field generated by (meromorphic) quasi-modular forms for $\Gamma_0(2)$. More details can be found in appendix~\ref{appmodular}. Here it suffices to say that the functions $t_1$ and $t_2$ can be expressed in terms of iterated integrals of the magnetic modular forms for $\Gamma_0(2)$ defined in ref.~\cite{Bonisch:2024nru} (see appendix~\ref{appmodular} for the explicit expressions). These magnetic modular forms lie in $\cFss$. It  follows from the result of ref.~\cite{Broedel:2021zij} that these magnetic modular forms do not admit primitives that are themselves quasi-modular, or said differently, the primitives of these magnetic modular forms do not lie in $\cFss$. From there we can conclude that $t_1$ and $t_2$ do not lie in $\cFss$. Finally, from the explicit expressions in appendix~\ref{appmodular}, we can see that $t_1$ and $t_2$ are independent (because they involve primitives of magnetic modular forms of different weights and with poles at different locations). Hence, we conclude that the two functions $t_1$ and $t_2$ form a minimal set of generators for $\cFeps$. 

So far the discussion applies to arbitrary (non-zero) values of the $n_i$. We know from eq.~\eqref{eq:dim_OTKN} with $N=3$ that in the self-dual case $n_1=n_3$ we expect at most one generator for $\cFeps$. Hence, we expect that for $n_1=n_3$ the number of generators must drop. Let us briefly discuss this reduction in the number of generators. First, in the self-dual case we may apply Proposition~\ref{prop:main_1} and write $\bUt^{(0)}$ as a product of a $\bK_3$-orthogonal matrix $\bO$ and a $\bK_3$-symmetric matrix $\bR$. The form of $\bR$ is identical to eq.~\eqref{gensym3F2} (with $r_i^\lbt$ replaced by $r_i^\lb$). Two of the relations in eq.~\eqref{3f2NoSDRel} are now trivially satisfied, such that the relations reduce to
\beq\bsp\label{relsd}
t_1+t_3&\,=\psi_0\,\frac{m_1+m_2- \lambda (n_2+2 n_3)}{\sqrt{1- \lambda}\,,}=2 r_1^\lb\,,\\
t_2+\frac{1}{2}t_1^2 &=\psi_0^2\,\frac{ m_1 m_2- \lambda n_3 (2 n_2+n_3)}{2}=(r_1^\lb)^2+2r_2^\lb\,.
\esp\eeq
From the last equation we can see that $t_2+\frac{1}{2}t_1^2\in \mathcal{F}_{\mathrm{ss}}$. Thus, we have found a combination of the two generators $t_1$ and $t_2$ that lies in $\cFss$, and we can eliminate $t_2$ in terms of $t_1$, thereby reducing the number of generators to be (at most) one. This generator may be chosen as the combination of the $t_i$ that appears in the $\bK_3$-orthogonal matrix, which is $G:= \frac{1}{2}(t_1- t_3)$. The fact that $G\notin\cFss$ can be proven in the same way as for $t_1$ and $t_2$ above. 

\subsection{The Legendre family of elliptic curves with an additional puncture}
\label{subsec:legendrePuncture}

In this section we discuss another example of a non self-dual scenario where we can identify $\eps$-functions and the relations between them. 
We  study a hypergeometric function defined by the twist
\begin{equation}
\label{eq:twistLegendrePuncture}
    \Phi^{\scriptscriptstyle(\mathrm{Leg})}=x^{-\frac{1}{2}+a_1\eps}(1-x)^{-\frac{1}{2}+a_2\eps}(\lambda-x)^{-\frac{1}{2}+a_3\eps} (t-x)^{0} \,,
\end{equation}
where $x$ is the integration variable and $\lambda$ and $t$ are external parameters, and we included a factor of $(t-x)^{0}$ to illustrate that this pole will appear in our basis forms, but it is not regulated by the twist $\Phi^{\scriptscriptstyle(\mathrm{Leg})}$. The appearance of such an unregulated singularity is the hall-mark of the non-self-dual setup, and requires the use of relative twisted cohomology~\cite{matsumoto_relative_2019-1,matsumotoFD,matsumotoRelative,Caron-Huot:2021xqj,Caron-Huot:2021iev,Brunello:2023rpq}. As a consequence, it is in general not possible to identify a basis such that the dual period matrix can be identified with the period matrix with $\eps$ replaced by $-\eps$. Moreover, some of the entries of the period matrix will involve integrands that have a pole with non-vanishing residue at $x=t$. Note that this is precisely the setup typically encountered for Feynman integrals (or non-maximal cuts, where only residues at a subset of the propagator poles have been taken). This example thus serves as a first step towards understanding the relations between $\eps$-functions beyond the maximal cut.

For $\eps=0$ the twist in eq.~\eqref{eq:twistLegendrePuncture} reduces to the one defining periods of the Legendre family of elliptic curves, defined by
\begin{equation}
    y^2=x(x-1)(x-\lambda) \,.
\end{equation}
Hence, if the basis only contained the regulated singularities, then we would be in the setup of an odd elliptic curve discussed in section~\ref{subsec:higherGenus} for $g=1$. From eq.~\eqref{eq:dim_OT_higher_g} with $g=1$, we see that we do not need to introduce any $\eps$-functions. As we will see, the additional unregulated pole will lead to an $\eps$-function required to define the canonical basis.
 We note that the situation discussed here is very similar to the case of an even elliptic curve discussed in section~\ref{subsec:higherGenus}. Indeed, according to the definition in section~\ref{subsec:higherGenus}, an even elliptic curve can be understood as a family with an additional puncture at infinity. However, in section~\ref{subsec:higherGenus} that singularity is regulated by the twist (with exponent $-(a_1+a_2+a_3+a_4)\eps$). Here infinity is a branch point of the elliptic curve, and the additional puncture is at $x=t$. Importantly this puncture is not regulated by the twist. 

 Let us now discuss how we can obtain a canonical basis for our case. Since so far a canonical basis for this class of functions has not been derived yet, we spend some time discussing details. 
We will first introduce a regulator $a_4\eps$ for the last factor in the twist, i.e., we consider the twist
\begin{equation}
\label{eq:twistLegendrePuncture_rho}
    \Phi_{a_4}^{\scriptscriptstyle(\mathrm{Leg})}=x^{-\frac{1}{2}+a_1\eps}(1-x)^{-\frac{1}{2}+a_2\eps}(\lambda-x)^{-\frac{1}{2}+a_3\eps} (t-x)^{a_4\eps} \,.
\end{equation}
This twist defines a family of Appell $F_1$ functions. Since all singularities are regulated by the twist for $a_4\neq0$, we are now in a self-dual scenario. We proceed as before, and we construct bases for the twisted cohomology group and its dual such that the corresponding period matrices just differ by changing the sign of $\eps$. Note that, upon performing a M\"obius transformation that sends $t$ to infinity, we recover the case of an even elliptic curve discussed in section~\ref{subsec:higherGenus}. From eq.~\eqref{eq:higher_genus_even_dim} we then expect that we need at most one $\eps$-function to generate $\cFeps$. We will see that the bound is also saturated in this example, and the generator does not lie in $\cFss$.

\paragraph{The canonical basis for $a_4\neq 0$.}
\newcommand{\legendre}{{\scriptscriptstyle(\mathrm{Leg})}}

We start by discussing the class of hypergeometric functions defined by the twist in eq.~\eqref{eq:twistLegendrePuncture_rho} for $a_4\neq0$. 
We choose the following basis for the twisted cohomology group
\begin{equation}\bsp
\label{eq:legendrePunctureBasis}
\varphi_1^\legendre &\,= \rd x\,,\qquad 
\varphi_2^\legendre = \frac{(1-2a_3\eps)\rd x}{2(x-\lambda)}\,,\qquad
\varphi_3^\legendre = \frac{y_t\,\rd x}{x-t}\,,
\esp
\end{equation}
with $y_t=\sqrt{t(t-1)(t-\lambda)}$.
These basis elements correspond to first, second and third kind differentials on the elliptic curve for $\eps=0$. We choose a dual basis so that the dual period matrix is obtained from the period matrix by simply changing the sign of $\eps$,
\beq\bsp\label{eq:F1_dual_basis}
\check{\varphi}_1^{\legendre}&\, = \frac{\rd x}{x(x-1)(x-\lambda)}\,,\\ 
\check{\varphi}_2^{\legendre} &\,= \frac{(1+2a_3\eps)\rd x}{2x(x-1)(x-\lambda)}\,,\\ 
\check{\varphi}_3^{\legendre} &\,= \frac{y_t\rd x}{x(x-1)(x-\lambda)(x-t)}\,.
\esp\eeq

We now construct the rotation to a canonical basis. We perform a change of basis via a rotation matrix as in eq.~\eqref{Usplitting}, with
\beq\bsp
\bUss(\lambda,t) &\,=  \begin{pmatrix}
        \frac{1}{\psi_0} & 0 & 0 \\
        \frac{\lambda\psi_0-\eta_0}{8\pi i} & \frac{\lambda(\lambda-1)\psi_0}{4\pi i} & 0 \\
        0 & 0 & 1
    \end{pmatrix}\,,\\ 
    \bUeps(\eps) &\,=     \begin{pmatrix}
        \eps & 0 & 0 \\ 0 & 0 & \eps \\ 0 & 1 & 0
    \end{pmatrix}\,,\\
        \bUt(\lambda,t,\eps)&\,=\begin{pmatrix}
        1 & 0 & 0 \\ t_1(\lambda,t) & 1 & 0 \\ t_2(\lambda,t) & t_3(\lambda,t) & 1
    \end{pmatrix}\,.
    \esp\eeq
Here, the period $\psi_0$ and quasi-period $\eta_0$ are defined by
\begin{equation}
\psi_0(\lambda)=\oint_{\gamma_a}\frac{\rd x}{y},\qquad \eta_0(\lambda)=\oint_{\gamma_a}\frac{x\rd x}{y} \,,
\end{equation}
with $\gamma_a$ the cycle encircling the branch cut between $0$ and $1$. The form of $\bUt$ is consistent with eq.~\eqref{eq:higher_genus_even_Ut} for $g=1$. We see that the function field $\cFeps$ is generated by the three $\eps$-functions $t_1$, $t_2$ and $t_3$. 
Since we are in a self-dual scenario, we expect, however, from eq.~\eqref{eq:higher_genus_even_dim} that $\cFeps$ is generated by at most one element. We now determine the relation between the functions $t_i$ and identify this single generator.

We have computed the intersection matrix in the original bases in eqs.~\eqref{eq:legendrePunctureBasis} and~\eqref{eq:F1_dual_basis}. We find that, after rotation to the canonical basis, the canonical intersection matrix is given by
\begin{equation}
    \bDelta=\begin{pmatrix}
        0 & 0 & 1 \\ 0 & -\frac{4\pi i}{a_4} & 0 \\ 1 & 0 & 0
    \end{pmatrix} \,.
\end{equation}
Note that this form agrees with eq.~\eqref{eq:Delta_higher_genus_even}, up to rescaling of one of the basis elements. However, since we treat $a_4$ as a regulator that will be sent to zero at the end, we prefer to keep the dependence in the canonical intersection matrix. 

We can now use Proposition~\ref{prop:main_1} to factorise $\bUt$ into a $\Delta$-orthogonal and a $\Delta$-symemtric matrix,
\beq
\bUt = \bO\bR\,,
\eeq
with
\beq
\bO =\begin{pmatrix}
        1 & 0 & 0 \\ G_1 & 1 & 0 \\ \frac{a_4 G_1^2}{8\pi i} & \frac{G_1 a_4}{4\pi i}  & 1
    \end{pmatrix}\,,\qquad \bR = \begin{pmatrix}
        1 & 0 & 0 \\ r_1 & 1 & 0 \\ r_2 & -\frac{a_4 r_1}{4\pi i} & 1
    \end{pmatrix} \,,
\end{equation}
where the functions $r_1,r_2$ can be expressed in terms of the period $\psi_0$ by
\beq\bsp
    r_1&\,=\frac{1}{2y_t}t(t-1)\psi_0 \,, \\
    r_2&\,=-\frac{\psi_0^2}{32\pi i}\left[ 4a_1(\lambda-1)+4\lambda a_2+4a_3(2\lambda-1)+a_4\,\frac{3t(t-1)-4\lambda(\lambda-1)}{t-\lambda}\right]\,.
\esp\eeq
The function $G_1(\lambda,t)$, instead, is not fixed by our considerations. It can be cast in the form of a combination of (incomplete) elliptic integrals. An alternative representation will be presented below. We thus see that, in agreement with eq.~\eqref{eq:higher_genus_even_dim}, $\cFeps$ is generated by (at most) one function. 

After having obtained the canonical basis for $a_4\neq0$, we can now extract the canonical basis in the limit where the regulator $a_4$ is sent to zero. On the non-dual side this is straightforward, because nothing qualitatively changes in the limit. On the dual side it is convenient to replace the third basis element by $\tfrac{\rd x}{x-t}$, since, by taking leading terms in a small $a_4$-expansion, its intersection numbers reduce to those one would obtain from a delta-form \cite{Brunello:2023rpq}. It is then straightforward to construct a dual canonical rotation starting from the known canonical rotation for the self-dual basis. The constructed rotation contains two dual $\eps$-functions, which we can show to be contained in $\cFeps$, as expected, and so they can both be expressed in terms of $G_1$ and functions from $\cFss$.

\paragraph{Identifying the $\eps$-function.}
Let us now discuss how we can identify an expression for the $\eps$-function $G_1$ in terms of a special function known in the mathematics literature.
In order to do that, we use the fact that in this particular case we could also evaluate all the master integrals using direct integration techniques in terms of elliptic multiple polylogarithms (eMPLs)~\cite{brownLevinEMPls,Broedel:2014vla,Broedel:2017kkb,enriquezZerbiniEMPLs}. We stress that the differential equations and direct integration approaches are complementary: while direct integration works very well at low orders in $\eps$ and directly provides results in terms of a well-established class of transcendental functions, it becomes increasingly cumbersome at higher orders. Differential equations, instead, allow one to easily obtain explicit results even at very high orders in $\eps$, but the appropriate class of transcendental functions is not visible at this point, because the differential equation depends on the $\eps$-function $G_1$. By combining the two approaches, we can express the differential equation matrix in terms of the integration kernels that define eMPLs, thereby easily allowing for a solution in terms of eMPLs at any desired order in the $\eps$-expansion.

The direct integration approach starts by selecting a set of master integrals that is finite at $\eps=0$. The main idea is then to perform all integrations in terms of eMPLs. We will not review this procedure here, as it is irrelevant to understand the result, and we refer instead to the literature~\cite{Broedel:2019hyg,Campert:2020yur,Stawinski:2023qtw}. Here it suffices to say that we pick the following set of master integrals
\begin{align}
    M_1&=\int_0^1\frac{\rd x}{y}\,\Phit^{\scriptscriptstyle(\mathrm{Leg})}_{a_4}(x) \,, \nonumber\\
    M_2&=\frac{\sqrt{\lambda}}{2}\int_0^1\rd x\,\Phit^{\scriptscriptstyle(\mathrm{Leg})}_{a_4}(x)\Psi(x)\,, \\
    M_3&=\int_0^1\frac{y_t \rd x}{y(x-t)}\Phit^{\scriptscriptstyle(\mathrm{Leg})}_{a_4}(x) \nonumber\,,
\end{align}
with
\begin{equation}
    \Psi(x)=\frac{1}{y}\left[-\frac{2}{\sqrt{\lambda}}x+\frac{2(-2-2\lambda+\lambda^{3/2}+\lambda^{5/2})\omega_1+6\eta_1}{3\lambda^{2}\omega_1} \right] \,.
\end{equation}
Here we decomposed the twist as $\Phi^{\scriptscriptstyle(\mathrm{Leg})}_{a_4}(x)=\tfrac{1}{y}\Phit^{\scriptscriptstyle(\mathrm{Leg})}_{a_4}(x)=\tfrac{1}{y}+\mathcal{O}(\eps)$ to make the square root defining the elliptic curve manifest. These integrals can straightforwardly be rotated into the starting basis from before. In particular $M_1,M_3$ are precisely the first and third integrals in our earlier basis, while $M_2$ is some simple linear combination. We can hence easily find the rotation that relates these integrals to our canonical basis. The reason for the somewhat involved expression for the second integral $M_2$ is that it is particularly convenient to map to the torus description of the elliptic curve using the relations derived in ref.~\cite{Broedel:2017kkb}.

Integrating the above expressions is straightforward using the methods from refs.~\cite{Broedel:2019hyg,Campert:2020yur,Stawinski:2023qtw}. We simply expand the integrand in $\eps$, producing additional logarithms. The integrand can then be mapped to the torus description and integrated up order by order. The only subtlety here is that for the integration of $M_2$ we need to integrate by parts to map to the torus description. This leads to singularities after expanding in $\eps$ which requires us to add and subtract suitable counterterms. Note that, since our goal is to gain information on the $\eps$-function that enters the differential equation matrix, it is sufficient to evaluate all master integrals up to the order in $\eps$ that involves eMPLs of length at most one.

With explicit expressions for $M_1,M_2,M_3$ and the rotation to a canonical basis at hand, we can write down expressions for the canonical basis integrals $J_1,J_2,J_3$ in terms of eMPLs in an $\eps$-expansion. Since these are canonical integrals, they should consist of pure combinations of eMPLs at every order in the expansion \cite{Broedel:2018qkq}. However, also the $\eps$-function will enter the expansion, and so the requirement of purity will allow us to fix it.
Explicitly we find that the pure integral $J_2$ is given by
\begin{equation}
    J_2=\eps\left(\frac{1}{2}G_1(\lambda,t)-g^{(1)}(z_t,\tau)+\frac{1}{4y_t}t(t-1)\psi_0 \right) +\mathcal{O}(\eps^2)\,,
    \label{eq:J2_Legendre}
\end{equation}
where $g^{(1)}$ is defined as
\begin{equation}
    g^{(1)}(z,\tau)=\frac{\pi \theta_1'(\pi z,\tau)}{\theta_1(\pi z,\tau)}-\frac{\pi\theta_1''(0,\tau)}{2\theta_1'(0,\tau)} \,,
\end{equation}
in terms of the odd Jacobi theta function $\theta_1$ and its derivatives with respect to the first argument.
Here $z_t$ is the point on the torus $\mathbb{C}/(\mathbb{Z}+\mathbb{Z}\tau)$ defined by 
\begin{equation}
    z_t=\frac{1}{\psi_0}\int_{\infty}^t\frac{\rd x}{y} \,,
\end{equation}
where $\tau$ is the modular parameter
\begin{equation}
    \tau=\frac{\psi_1}{\psi_0},\qquad \psi_1=\oint_{\gamma_b}\frac{\rd x}{y} \,,
\end{equation}
with $\gamma_b$ the contour encircling the branch points $\lambda,1$.
Since $J_2$ is an element of a canonical basis, we expect it to only involve pure functions at every order in $\eps$. The expression in eq.~\eqref{eq:J2_Legendre}, however, is not pure. 
We hence expect that the $\eps$-function $G_1(\lambda,t)$ is such that it cancels the other terms, possibly up to pure functions. By explicitly checking the defining differential equations for $G_1(\lambda,t)$ we find that no additional pure functions are needed, and we have
\begin{equation}
    G_1(\lambda,t)=2g^{(1)}(z_t,\tau)-\frac{1}{2y_t}t(t-1)\psi_0 \,.
\end{equation}
Hence, we see that the $\eps$-function $G_1$ is just the $g^{(1)}$ function, up to a term proportional to the period $\psi_0$. 

\section{Conclusions and outlook}
\label{sec:conclusions}

In this paper we continued our study of the properties of canonical bases obtained via the method of ref.~\cite{Gorges:2023zgv} through the lens of twisted cohomology. We have built on the result of ref.~\cite{Duhr:2024xsy}, which states that the intersection matrix computed in canonical bases is a constant, and the observation of ref.~\cite{Duhr:2024uid}, which implies that this constancy can be used to derive polynomial relations relating $\eps$-functions on the maximal cut, or more generally for a self-dual scenario (see also ref.~\cite{Pogel:2024sdi} for a related idea). Applying these ideas in practice, however, is hampered by the fact that computing the intersection matrix in the original basis can be prohibitively complex, and solving the ensuing non-linear polynomial relations can be complicated. In order to improve on this situation, we extended the results of our previous papers in two directions: First, we studied novel properties of the canonical intersection matrix and how it can be obtained without having to explicitly evaluate the intersection matrix  in the original basis. Second, we proved a decomposition theorem that allows us to identify a priori the $\eps$-functions that can be expressed in terms of algebraic functions and (derivatives of) periods, and the relevant polynomial constraints can always be reduced to linear relations. The power of this approach has already been demonstrated in ref.~\cite{Duhr:2025kkq} in the context of the three-loop banana integral with four distinct masses. We presented several new examples, including various classes of hypergeometric functions already considered in ref.~\cite{Duhr:2025lbz} and a four-loop banana integral with two distinct masses discussed in ref.~\cite{Maggio:2025jel}. We also discussed on two examples how our ideas can be extended to non self-dual scenarios.

The results of our paper show  that twisted cohomology may teach us important lessons about the structure of canonical differential equations, and that there may still be interesting structures to be uncovered in the intersection of these two fields which are often studied in isolation from each other. In the future, we are planning to study the interplay between symmetries and the different pairings in twisted (co)homology, the recently defined filtrations on canonical bases inspired by Hodge theory from ref.~\cite{e-collaboration:2025frv}, as well as the connection to the equivariant iterated Eisenstein integrals from ref.~\cite{Brown:2020mhp,Dorigoni:2022npe,Dorigoni:2024oft,Dorigoni:2024iyt,Frost:2025lre}.

\acknowledgments

The authors are grateful to Gaia Fontana, Federico Gasparotto, Christoph Nega, Sid Smith and Lorenzo Tancredi for dicussions.
The work of CS is supported by the CRC 1639 ``NuMeriQS'', and the work of CD, FP, SM and SFS is funded by the European Union
(ERC Consolidator Grant LoCoMotive 101043686). Views
and opinions expressed are however those of the author(s)
only and do not necessarily reflect those of the European
Union or the European Research Council. Neither the
European Union nor the granting authority can be held
responsible for them. 
YS acknowledges support from the Centre for Interdisciplinary Mathematics at Uppsala University and partial support by the European Research Council under ERC Synergy Grant MaScAmp
101167287.

\begin{appendix}
\section{Some magnetic modular forms}\label{appmodular}
In this appendix we show the explicit expressions for the $\eps$-functions introduced for the example in section~\ref{subsec:CY_non_self-dual} in terms of iterated integrals over meromorphic modular forms. We show the results as functions of the canonical variable $\tau$ introduced in eq.~\eqref{eq:tau_def_3F2_2}.

For $t_1$ we obtain 
\beq\bsp
    t_1(\tau)=&-\frac{(64\, A(\tau)+1) \phi_0(\tau)}{192 \,A(\tau)-3}(m_1+m_2)-\frac{256\, A(\tau)\, \phi_0(\tau)}{3-12288 \,A(\tau)^2}(n_1+n_2+n_3)\\ &-\int^{\tau}_{i\infty} \!\!\!\rd\tau'\int^{\tau'}_{i\infty}\!\!\! \rd\tau''\, f_1(\tau'')\,,
\esp\eeq
where $A$ is defined in eq.~\eqref{def:A} and $\phi_0$ is an Eisenstein series of weight two for the congruence subgroup $\Gamma_0(2)$,
\beq 
\phi_0(\tau) = \theta_2^4(\tau)+\theta_3^4(\tau) \,,
\eeq
where $\theta_2$ and $\theta_3$ are Jacobi $\theta$ functions.
The integration kernel is
\beq\bsp\label{f1itint}
f_1(\tau)&\,=32\,\partial_{\tau} C_4(\tau)(n_1-n_3)+\frac{16}{3}C_{6,a}(\tau)(m_1+m_2-n_1-n_2-n_3)\\&\,+\frac{16}{3}C_{6,b}(\tau)(n_1-2\,n_2+n_3)\,.
\esp\eeq
The functions $C_4$, $C_{6a}$ and $C_{6b}$ are meromorphic modular forms for $\Gamma_0(2)$,
\beq\bsp
C_4&\,=\frac{A}{3(1+64\,A)}(4\,\phi_0^2-E_4)\,,\\
C_{6a}&\,=\frac{A(5+1408\,A+20480\,A^2)}{9(1-64\,A)^3}(8\,\phi_0^3+E_6)\,,\\
C_{6b}&\,=\frac{A(1-64\,A)}{9(1+64\,A)^2}(8\,\phi_0^3+E_6)\,,
\esp\eeq 
where $E_4$ and $E_6$ are Eisenstein series for the full modular group, respectively, of weight 4 and 6,
\beq\bsp
E_4(\tau) =&  1 + 240 \sum_{n=1}^{\infty} \frac{n^{3} q^{n}}{1 - q^{n}}\,,\\
E_6(\tau) =&  1 - 504 \sum_{n=1}^{\infty} \frac{n^{5} q^{n}}{1 - q^{n}}\,,
\esp\eeq
with $q:=e^{2\pi i \tau}$.
They have the property of being magnetic (see refs.~\cite{magnetic1, magnetic2, magnetic3, magnetic4, Bonisch:2024nru}).
The third function $t_3$ can be obtained from eq.~\eqref{f1itint} by using the relation in eq.~\eqref{3f2NoSDRel},
\begin{align}
\nonumber t_3(\tau)&\,=- t_1(\tau)-\frac{\phi_0(\tau) }{4096 A(\tau)^2-1}\,\left[(64 A(\tau)+1)^2 (m_1+m_2)-256 A(\tau) (n_1+n_2+n_3)\right]\\
 &\,-64(n_1-n_3)\int_{i\infty}^{\tau}\!\!\!\rd \tau'C_4(\tau')\,.
\end{align}
The last function cannot be obtained from the relations in eq.~\eqref{f1itint}. It has to be computed separately. We find that it can be expressed as  
\begin{align}
 t_2(\tau)=&-\frac{1}{2}t_1(\tau)^2+\frac{1}{2} \phi_0(\tau)^2m_1\,m_2-\frac{128\, A(\tau) }{(64\, A(\tau)+1)^2}\phi_0(\tau)^2(n_1\,n_2+n_1\,n_3+n_2\,n_3)\nonumber\\
 &-64\,(n_1-n_3)\,t_1(\tau)\int^{\tau}_{i\infty}\!\!\! \rd\tau'\, C_4(\tau')+64\,(n_1-n_3)\,\int_{i\infty}^{\tau}\!\!\! \rd\tau'\, C_4(\tau')\,t_1(\tau')\\
 &+64 \,n_2\,(n_1-n_3)\int_{i\infty}^{\tau} \!\!\!\rd\tau' \,\frac{ A(\tau')\, (64\, A(\tau')-1) }{(64\, A(\tau')+1)^3}\phi_0(\tau')^3 \,.\notag
\end{align}
In the self-dual limit there is only one genuinely new function. The $\Delta$-symmetric part is fixed by eq.~\eqref{relsd} while in the $\Delta$-orthogonal part the only new function is 
\begin{align}
\nonumber G(\tau)&=\frac{1}{2}\big(t_1(\tau)- t_3(\tau)\big)\\&=\frac{\phi_0(\tau)}{24576 A(\tau)^2-6} \left[(64 A(\tau)+1)^2 (m_1+ m_2)-256 A(\tau) (n_2+2 n_3)\right]\\
&-\frac{16}{3}\int_{i\infty}^{\tau}\!\!\!\rd \tau'\int_{i\infty}^{\tau'}\!\!\!\rd \tau''\left[(m_1+m_2-n_2-2 n_3)\, C_{6,a}(\tau'')-2\, (n_2-n_3)\,C_{6,b}(\tau'')\right]\,. \notag
\end{align}

\section{Yukawa three-couplings for the four-loop banana integral}\label{Yuk_appendix}
In this appendix we give a brief explanation on how to compute the Yukawa couplings as rational functions in $x_1$ and $x_2$.
They are needed to express the functions $f_i$ of section \ref{fourlooptwomassbaanna} as elements of $\mathcal{F}_\text{ss}$.
\paragraph{Periods of a CY-threefold.}

As explained in ref.~\cite{Maggio:2025jel}, at $\eps=0$, six master integrals can be identified as periods of a family of CY threefolds with the following Hodge numbers
\begin{equation}
    h^{(3,0)}=h^{(0,3)}=1\,,\quad h^{(2,1)}=h^{(1,2)}=2\,.
\end{equation}
In particular, the associated system of differential equations allows for six solutions which are the {\it periods} of this family.
We choose a Frobenius basis for these periods which reads
\begin{equation}
 \bs{\psi}=(\psi_0,\,\psi_1^{(1)},\psi_1^{(2)},\psi_2^{(2)},\psi_2^{(1)},\psi_3)^T \,,
 \end{equation}
 where the subscript denotes the logarithmic power in the Frobenius expansion.
 These periods satisfy some quadratic relations of the form
 \begin{equation}\label{quad_rels}
     \bs{\psi}\bs{\Sigma}\partial_{\bs{x}}\bs{\psi}=\left\{\begin{matrix}
      0\,\qquad\text{if}\quad r<3\,,\\ \hspace{-2.5mm}  C^{\bs x}_{\bs{k}}\quad\text{otherwise.}
     \end{matrix}\right.
 \end{equation}
where $\partial_{\bs x}:=\partial_{x_{k_1}}...\partial_{x_{k_r}}$ and $\bs{k}=(k_1,\dots,k_r)$.
The homology intersection matrix $\bs{\Sigma}$ reads
 \begin{equation}
     \bs{\Sigma}=\begin{pmatrix}
        \bs{0}&\bs{K}_3\\
        -\bs{K}_3&\bs{0}
     \end{pmatrix}\,.
 \end{equation}
 The functions $C^{\bs x}_{\bs{k}}$ are the so-called {\it Yukawa $r$-couplings}, and they are rational functions in the variables $\bs{x}$. 
 The Yukawa three-coupling will be of particular interest to us. Let us now consider the Wronskian matrix $\bs{P}^\text{CY3}$ of the system
 \begin{equation}
\bs{P}^\text{CY3}=\begin{pmatrix}
\bs{\psi}^T\\
 \partial_{x_1}\bs{\psi}^T\\
 \partial_{x_2}\bs{\psi}^T\\
 \partial_{x_1}^2 \bs{\psi}^T\\
 \partial_{x_2}^2 \bs{\psi}^T\\
 \partial_{x_1}^2\partial_{x_2}\bs{\psi}^T\end{pmatrix}\,.
 \end{equation}
As the periods of a CY threefold can be chosen self-dual, we can write down:
 \begin{equation}
     \bs{P}^\text{CY3}\bs{\Sigma}(\bs{P}^\text{CY3})^T=\bs{Z}^\text{CY3}(x_1,x_2)\,,
 \end{equation}
where one can use the quadratic relations from eq.~\eqref{quad_rels} to show that the cohomology intersection matrix takes the form
 \begin{equation}\label{ZZ}
 \bs{Z}^\text{CY3}=\begin{pmatrix}
 0&0&0&0&0&C^{\bs x}_{(1,1,2)}\\
 0&0&0&-C^{\bs x}_{(1,1,1)}&-C^{\bs x}_{(1,2,2)}&*\\
 0&0&0&-C^{\bs x}_{(1,1,2)}&-C^{\bs x}_{(2,2,2)}&*\\
 0&C^{\bs x}_{(1,1,1)}&C^{\bs x}_{(1,1,2)}&0&*&*\\
 0&C^{\bs x}_{(1,2,2)}&C^{\bs x}_{(2,2,2)}&*&0&*\\
 -C^{\bs x}_{(1,1,2)}&*&*&*&*&0
 \end{pmatrix} \,,
 \end{equation}
 where $*$ denotes entries depending on Yukawa $r$-couplings with $r>3$. 
 Also note that eq.~(\ref{ZZ}) is antisymmetric. 
Knowing the differential equation matrix $\bs{\Omega}^\text{CY3}$ for the maximal cut at $\eps=0$, (without the ISP $I_{1,1,1,1,1,-1,0,0,0,0,0,0,0,0}$) we can use the relation
\begin{equation}\label{connection_intersection}
    \bs{\Omega}^\text{CY3}\bs{Z}^\text{CY3}+\bs{Z}^\text{CY3}(\bs{\Omega}^{\text{CY3}})^T=\text{d}\bs{Z}^\text{CY3}\,,
\end{equation}
to determine (up to a constant factor fixed by the Frobenius expansion) the cohomology intersection matrix $\bs{Z}^{\text{CY3}}$ as a rational function in $x_1$ and $x_2$.
The Yukawa three-couplings read
 \begin{align}
     C^{\bs{x}}_{(1,1,1)}&=\dfrac{12(-64x_1^2+(x_2-1)^2)}{x_1^3(256x_1^2+(x_2-1)^2-32 x_1 (1 + x_2))(16 x_1^2+(x_2-1)^2-8x_1(1+x_2))} \,,\\
C^{\bs{x}}_{(1,1,2)}&=\dfrac{6(1+64x_1^2+20x_1(x_2-1)+(2-3x_2)x_2)}{x_1^2x_2(256x_1^2+(x_2-1)^2-32x_1(1+x_2))(16x_1^2+(x_2-1)^2-8x_1(1+x_2))} \,,\\
    C^{\bs{x}}_{(1,2,2)}&=\dfrac{24(1-10x_1+x_2)}{x_1x_2(256x_1^2+(x_2-1)^2+32x_1(1+x_2))(16x_1^2+(x_2-1)^2-8x_1(1+x_2))} \,,\\
     C^{\bs{x}}_{(2,2,2)}&=\dfrac{-6(1+64x_1^2+5x_2(2+x_2)-20x_1(1+3x_2))}{x_2^2(x_2-1)(256x_1^2+(x_2-1)^2-32x_1(1+x_2))(16x_1^2+(x_2-1)^2-8x_1(1+x_2))} \,.
 \end{align}
 \paragraph{Canonical basis for the periods.}
  We can now choose an alternative basis of periods by rotating out the semi-simple part through a gauge transformation $(\bs{U}_\text{ss}^\text{CY3})^{-1}$ which leads to $\bs{\hat{P}}^\text{CY3}=(\bs{U}_\text{ss}^\text{CY3})^{-1}\bs{P}^\text{CY3}$, where:
\begin{align}
\bs{ \hat{P}}^\text{CY3}=\begin{pmatrix}
\hat{\bs{\psi}}\\
 \partial_{\tau_1}\bs{\hat{\psi}}\\
 \partial_{\tau_2}\bs{\hat{\psi}}\\
 \bigl(\mathcal{Y}_3\partial_{\tau_1}-\mathcal{Y}_4\partial_{\tau_2}\bigr) \partial_{\tau_2}\bs{\hat{\psi}}\\
\bigl(\mathcal{Y}_1\partial_{\tau_1}-\mathcal{Y}_2\partial_{\tau_2}\bigr) \partial_{\tau_1}\bs{\hat{\psi}}\\
 \partial_{\tau_1}\bigl(\mathcal{Y}_1\partial_{\tau_1}-\mathcal{Y}_2\partial_{\tau_2}\bigr) \partial_{\tau_1}\bs{\hat{\psi}}\end{pmatrix} \,,
 \end{align}
 with $\bs{\hat{\psi}}$ the period vector normalized by $\psi_0$:
  \begin{equation}
\bs{\hat{\psi}}=\dfrac{1}{\psi_0}\bs{\psi}\,,
 \end{equation}
 and the moduli are defined by
 \begin{equation}
     \tau_1=\dfrac{\psi_1^{(1)}}{\psi_0}\textrm{~~~and~~~} \tau_2=\dfrac{\psi_1^{(2)}}{\psi_0}\,.
 \end{equation}
 The functions $\mathcal{Y}_i$ are rational functions of the Yukawa three-couplings and read
\begin{equation}\label{Ys}
\begin{split}
    \mathcal{Y}_1&=\dfrac{C_{(1,2,2)}^{\bs \tau}}{C_{(1,1,1)}^{\bs \tau}C_{(1,2,2)}^{\bs \tau}{-}(C_{(1,1,2)}^{\bs \tau})^2}\,,\quad \mathcal{Y}_2=\dfrac{C_{(1,1,2)}^{\bs \tau}}{C_{(1,1,1)}^{\bs \tau}C_{(1,2,2)}^{\bs \tau}{-}(C_{(1,1,2)}^{\bs \tau})^2}\,,\notag\\
    \mathcal{Y}_3&=\dfrac{C_{(1,2,2)}^{\bs \tau}}{(C_{(1,2,2)}^{\bs \tau})^2{-}C_{(1,1,2)}^{\bs \tau}C_{(2,2,2)}^{\bs \tau}}\,,\quad \mathcal{Y}_4=\dfrac{C_{(1,1,2)}^{\bs \tau}}{(C_{(1,2,2)}^{\bs \tau})^2{-}C_{(1,1,2)}^{\bs \tau}C_{(2,2,2)}^{\bs \tau}}\,,
\end{split}
\end{equation}
where the functions $C_{(i,j,k)}^{\bs{\tau}}$ are the Yukawa three-couplings in $\tau$,
\begin{equation}
    C_{(i,j,k)}^{\bs{\tau}}=\bs{\hat{\psi}}\bs{\Sigma}\partial_{\tau_i}\partial_{\tau_j}\partial_{\tau_k}\bs{\hat{\psi}}\,,
\end{equation}
and they are related to the Yukawa couplings of eq.~(\ref{quad_rels}) through
\begin{equation}
    C_{(i,j,k)}^{\bs{\tau}}=\dfrac{\mathfrak{J}_{a,i}\,\mathfrak{J}_{b,j}\,\mathfrak{J}_{c,k}}{\psi_0^2}C_{(a,b,c)}^{\bs{x}}\,,
\end{equation}
where repeated indices are summed over, and 
the Jacobian $\mathfrak{J}$ is defined in eq.~\eqref{eq:jacobian}.
In this new basis, the differential equation becomes nilpotent and reads
 \begin{equation}
\partial_{\tau_j}\bs{ \hat{P}}^\text{CY3}=\begin{pmatrix}0&\delta_{j,1}&\delta_{j,2}&0&0&0\\
0&0&0&C^{\bs\tau}_{(1,1,j)}&C^{\bs \tau}_{(1,2,j)}&0\\
0&0&0&C^{\bs \tau}_{(1,2,j)}&C^{\bs\tau}_{(2,2,j)}&0\\
0&0&0&0&0&\delta_{1,j}\\
0&0&0&0&0&\delta_{2,j}\\
0&0&0&0&0&0\end{pmatrix}\bs{ \hat{P}}^\text{CY3}\,.
\end{equation}
In addition, the cohomology intersection matrix $\bs{\hat{Z}}$ becomes constant:
\begin{equation}
    \bs{\hat{Z}}^\text{CY3}=
    \begin{pmatrix} 0 & \bs{0} & 1\\
    \bs{0} &-\bK_4 &\bs{0}\\
    1& \bs{0} &0\end{pmatrix} \,.
\end{equation}

\end{appendix}
\bibliographystyle{JHEP}
\bibliography{biblio.bib}

\end{document}